\newtheorem{mydef}{Definition}
\newtheorem{prop}{Proposition}
\newcommand{\dblref}[2]{(\ref{#1}--\ref{#2})}
\newcommand{\figref}[1]{Fig.~\ref{fig:#1}}
\newcommand{\secref}[1]{Sec.~\ref{sec:#1}}
\newcommand{\appref}[1]{Appendix~\ref{sec:#1}}
\newcommand{\propref}[1]{Prop.~\ref{prop:#1}}
\newcommand{\aver}[1]{\langle {#1} \rangle}
\newcommand{\Aver}[1]{\langle\!\langle {#1} \rangle\!\rangle}
\newcommand{\ket}[1]{| {#1} \rangle}
\newcommand{\Ket}[1]{| {#1} \rangle\!\rangle}
\newcommand{\dblind}[2]{\begin{array}{c} {\scriptstyle #1} \\ {\scriptstyle #2} \end{array}}
\newcommand{\boxmath}[1]{\text{\fbox{$\displaystyle #1$}}}
\newcommand{\bino}[2]{\left(\begin{array}{c} #1 \\ #2 \end{array}\right)}
\newcommand{\ab}{{\bar a}}
\newcommand{\bb}{{\bar b}}
\newcommand{\db}{{\bar d}}
\newcommand{\alphab}{{\bar\alpha}}
\newcommand{\betab}{{\bar\beta}}
\newcommand{\C}{\mathcal{C}}
\newcommand{\ellb}{{\bar\ell}}
\newcommand{\F}{\mathcal{F}}
\newcommand{\Ft}{{\wt{\mathcal{F}}}}
\newcommand{\half}{\frac{1}{2}}
\newcommand{\hb}{{\bar h}}
\newcommand{\id}{{\bf 1}}
\renewcommand{\Im}{{\rm Im}\,}
\newcommand{\kb}{{\bar k}}
\newcommand{\Lb}{{\bar L}}
\newcommand{\Nb}{\mathbb{N}}
\newcommand{\ncup}{\ _\bullet\!\!\!\cup}
\newcommand{\nn}{\nonumber}
\newcommand{\nt}{\wt{n}}
\newcommand{\ol}{\overline}
\newcommand{\On}{O($n$)}
\newcommand{\PTL}{\mathrm{PTL}}
\newcommand{\V}{\mathcal{V}}
\newcommand{\Vir}{\mathrm{Vir}}
\newcommand{\vphi}{\varphi}
\newcommand{\vphib}{{\bar\varphi}}
\newcommand{\W}{\mathcal{W}}
\newcommand{\wh}{\widehat}
\newcommand{\wt}{\widetilde}
\newcommand{\Yb}{\Upsilon_\beta}
\newcommand{\zb}{{\bar z}}
\newcommand{\Zb}{\mathbb{Z}}
\newcommand{\Ie}{I_{1,\epsilon}}
\newcommand{\Ipe}{I_{2,\epsilon}}
\title{Correlation functions in loop models}
\author{B. Estienne and Y. Ikhlef}
\begin{document}
\maketitle

\begin{abstract}

  In this paper we provide a step towards the understanding of the {\On} bulk operator algebra. By using a mixture of analytical and numerical methods, we compute (ratios of) structure constants, and analyse the logarithmic structure of the transfer matrix. We believe that the {\On} model for a generic value of $n = q + q^{-1}$ (\emph{i.e.} for $q$ not a root of unity) provides a toy model of a bulk logarithmic CFT that is considerably simpler than its counterparts at $q$ a root of unity.

\end{abstract}

\tableofcontents

\section{Introduction}
\label{sec:intro}

The most striking phenomenon in statistical physics is maybe universality in the vicinity of a critical point. As a statistical system undergoes a second order phase transition, the correlation length diverges, various quantities become related through scaling relations, and many power-law divergences occur, governed by universal critical exponents. While this phenomenon is well explained in all dimensions by the ideas underlying the renormalisation group approach, it is extremely difficult to compute the universal properties in general.  At the critical point the system becomes scale invariant, which is usually promoted to conformal invariance, and the system exhibits fractal behaviour. The corresponding field theory is called a conformal field theory (CFT).

Unlike in higher dimensions, the bidimensional conformal symmetry has infinitely many generators. This makes the CFT approach extremely powerful, and many critical exponents have been calculated exactly. This is the case for loop models such as the {\On} loop model and the $Q$-state Potts model, where the full spectrum has been known for some time \cite{DSZ87}, including the critical exponents for geometric objects such as the fractal dimension of interfaces. These models are particularly interesting as they contain many primary  fields which are non-scalar (\emph{i.e.} fields with a non zero conformal spin). Both the {\On} model and the $Q$-state Potts model can be written as (non-local) loop models, which are well known to exhibit logarithmic behaviour. For this reason, the underlying CFTs are poorly understood beyond their spectrum.

A crucial part of the CFT data is given by its Operator Product Expansions (OPE), which are characterized by two main ingredients. The first one is the set of fusion rules, and the second one is the set of all structure constants, \emph{i.e.} the three point functions between primary fields. Together these define the operator algebra, which has to be associative in a consistent field theory. For the {\On} model and $Q$-state Potts model the operator algebra is mostly unknown, which means that we do not have a complete understanding of how to compute correlation functions. Only a handful of structure constants have been obtained through an analytic continuation of the Liouville theory~\cite{DV11,PSVD13}, but this method does not work for fields with a non-zero magnetic charge: a simple example where it fails is the probability for three given points to lie on the same loop.

Our objective is to describe the full operator algebra of these models, and in this paper we present a step in that direction. We focus on the Temperley-Lieb (TL) model, which is the loop model description of the $Q$-state Potts model. Moreover we restrict ourselves to generic values of the loop weight $n$  (\emph{i.e.} $n= q+ q^{-1}$ with $q$ not a root a unity).

The first tool we use is the Coulomb Gas description of the TL model. We extend the analysis of crossing symmetry of four-point functions to the case of non-scalar primary fields. While a subset of  the structure constants can be extracted this way, charge neutrality proves to be a strong limitation on the set of structure constants for non-scalar fields that can be accessed with this approach. Therefore we turn to a more powerful approach inspired by the bootstrap of Liouville theory~\cite{Teschner95}. It is a well known fact of CFT that the decoupling of a null vector yields a differential equation for correlation functions. In conjunction with crossing symmetry, it is possible to compute ratios of structure constants. Using an extension of the bootstrap method to the case of non-scalar fields, we make predictions for many ratios of structure constants. An interesting remark is the appearance of logarithms in some four-point functions, as expected from such non-diagonalisable CFTs. We also compute numerically these structure constants on the lattice, and we find an excellent agreement with our analytical calculation.

The logarithmic nature of the CFT raises a fundamental question regarding the bootstrap approach : do null-vectors decouple in the scaling theory of the {\On} model? Indeed, while in a unitary CFT the null vectors --  \emph{i.e.} states with a vanishing norm -- always decouple, logarithmic CFTs~\cite{Saleur87,RozSaleur91,Gurarie93,Pearce06} are known to have a non-definite inner product, allowing states of zero norm to have a non-vanishing inner product with other states. This is typically what happens in non-trivial Jordan cells of a logarithmic CFT. These questions can be also be addressed on the lattice: for the TL model the Jordan cells can already be observed in finite size on the lattice model (see \cite{Dubail10,Vasseur12,Azat12,Azat13,Azat14}), and this provides a powerful tool to analyse the decoupling of null-vectors.

The paper is organised as follows. In~\secref{model}, we recall the definition of lattice loop models, and review the operator content of the corresponding compact boson CFT (with a background charge) in the scaling limit~\cite{Nienhuis82,DSZ87}. In~\secref{boot}, we generalise the Coulomb-Gas~\cite{DF84,DF85} and conformal bootstrap~\cite{Teschner95,ZZ96} approaches to determine a family of OPE coefficients in this CFT. In~\secref{LCFT}, we come back to the lattice to study the exact Jordan blocks appearing in the spectrum of the transfer matrix (or Hamiltonian) of the loop models, and then discuss the associated structures in the CFT. Moreover, we give an example of analytic calculation of the indecomposability parameter $\beta$ in one of these Jordan blocks. Finally, we present some open questions and perspectives in~\secref{concl}. In the Appendix, some technical details of our analytical calculations are given, and our numerical method to compute OPE coefficients is outlined.

\section{The {\On} and Temperley-Lieb loop models}
\label{sec:model}

\subsection{Lattice models}

The {\On} loop model on the honeycomb lattice~\cite{Nienhuis82} consists in configurations of non-intersecting closed polygons. A configuration $C$ is given the Boltzmann weight
\begin{equation} \label{eq:On-boltzmann}
  W_{{\rm O}(n)}(C) = K^{\ell(C)} \ n^{N_{\rm c}(C)} \ \nt^{N_{\rm nc}(C)} \,,
\end{equation}
where $\ell(C)$ is the total length of polygons in $C$, $N_{\rm c}(C)$ is the number of contractible loops in $C$, and $N_{\rm nc}(C)$ is the number of non-contractible loops in $C$ (see \figref{On-config}).

For fixed $n,\nt$ both in the interval $[-2,2]$, the {\On} loop model has two distinct critical regimes: (i) when $K$ is at the critical value $K_c=1/\sqrt{2+\sqrt{2-n}}$, for example critical polymers ($n=0$), critical Ising domain walls ($n=1$), XY model ($n=2$); (ii) when $K>K_c$, the model is in the universality class of the $Q$-state Potts model, with $n=\sqrt{Q}$. In this paper, we will mainly focus on the universality class ii, since it has a simpler realisation, namely the Temperley-Lieb model.

The Temperley-Lieb model is defined on the square lattice, and consists in completely packed configurations of non-intersecting closed polygons, as shown in~\figref{TL-config}. A configuration $C$ is given the Boltzmann weight
\begin{equation} \label{eq:TL-boltzmann}
  W_{\rm TL}(C) = n^{N_{\rm c}(C)} \ \nt^{N_{\rm nc}(C)} \,,
\end{equation}
with the same notations as above. For completeness, we recall the relation~\cite{Baxter-book} between the Potts and TL models, in the case of a planar domain, and for $n=\nt$. Consider the $Q$-state Potts model on the lattice $\cal L$, with spin variables $\{s_j\}$ and Boltzmann weights $\exp(J \sum_{\aver{ij}} \delta_{s_i,s_j})$. The partition function can be graphically expanded using the identity $\exp(J \delta_{s_i,s_j}) = 1 + v \delta_{s_i,s_j}$, where $v = e^J-1$. This gives the Fortuin-Kasteleyn partition function
\begin{equation} \label{Z-FK}
  Z_{\rm Potts} = \sum_{\rm clusters} Q^{\# {\rm connected\ components}} \ v^{\# {\rm occupied\ edges}} \,,
\end{equation}
where the sum is over all possible subgraphs (or ``cluster configurations'') of the lattice $\cal L$.
Each cluster configuration can in turn be associated uniquely to a loop configuration of the TL model on the medial lattice~$\cal M$, i.e. the lattice connecting the midpoints of adjacent edges of~$\cal L$: see \figref{TL-config}.
Using the Euler relation, one gets
\begin{equation}
  Z_{\rm Potts} = Q^{N_s(\mathcal{L})} \times \sum_{{\rm loop \ config.} \ C} (v/\sqrt{Q})^{N_e(C)} \sqrt{Q}^{N_\ell(C)} \,,
\end{equation}
where $N_s(\mathcal{L})$ is the number of sites of $\cal L$, $N_e(C)$ is the number of edges of $\cal L$ not crossed by the loops, and $N_\ell(C)$ is the number of closed loops.

\begin{figure}
  \begin{center}
    \includegraphics[scale=0.5]{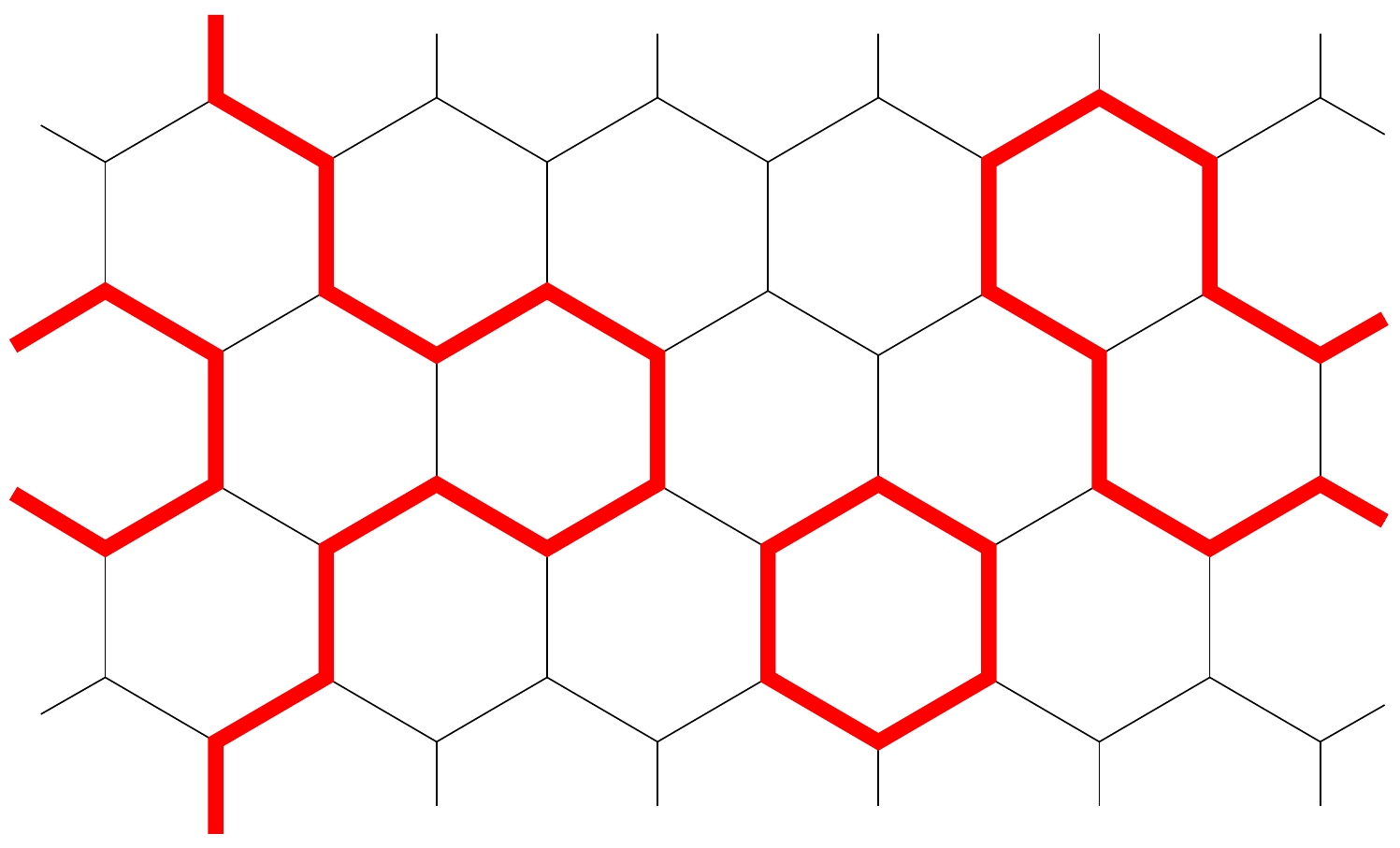}
  \end{center}
  \caption{An example configuration of the {\On} model on the honeycomb lattice
    embedded on a torus.}
  \label{fig:On-config}
\end{figure}

\begin{figure}
  \begin{center}
    \begin{tabular}{ccc}
      \includegraphics[scale=0.5]{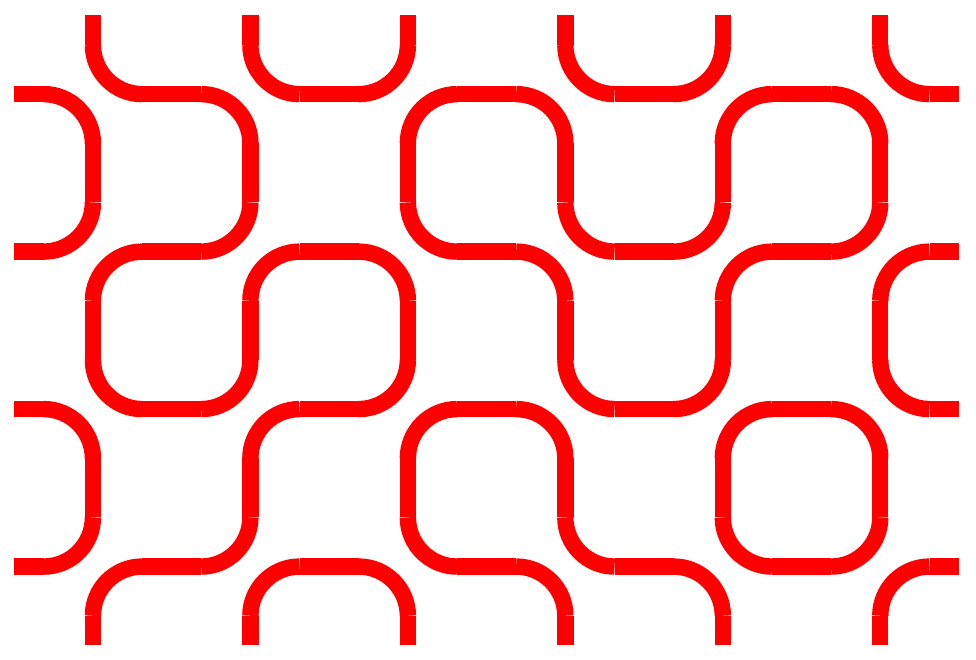}
      &&
      \includegraphics[scale=0.5]{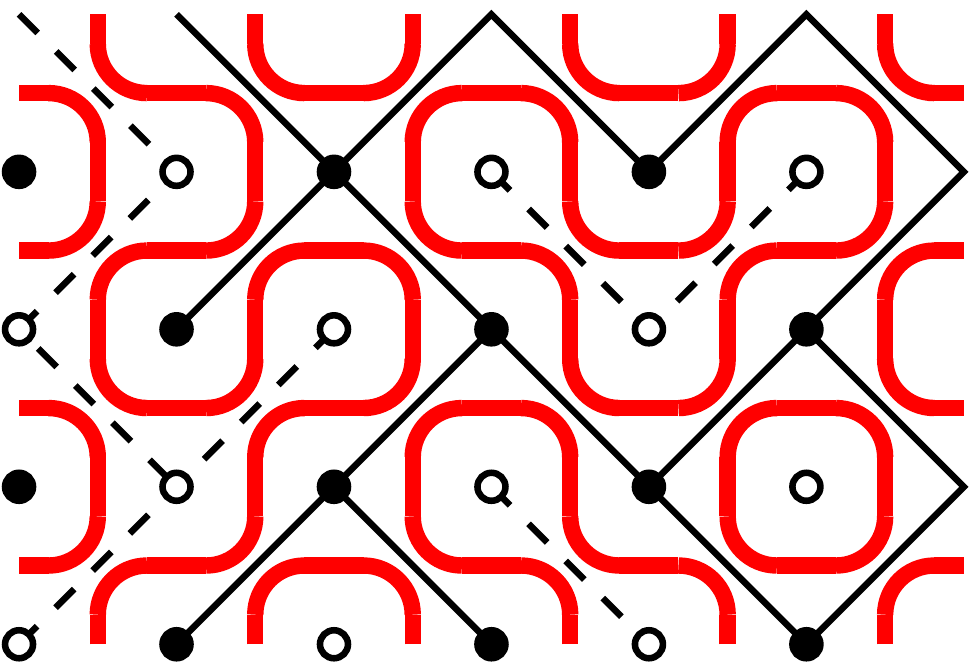} \\
      \\
      (a) &\qquad\qquad& (b)
    \end{tabular}
  \end{center}
  \caption{(a) An example configuration of the Temperley-Lieb model on the torus. (b) The same TL configuration, with the corresponding Fortuin-Kasteleyn clusters and dual clusters (the Potts spins live on the full dots).}
  \label{fig:TL-config}
\end{figure}

\subsection{Scaling limit: compact boson CFT}

The emergence of a compact boson in the scaling limit is best described for the honeycomb {\On} model on a torus. Take a configuration $C$ of the {\On} model, and give an orientation to each loop, independently of the others. Then, define discrete height variables $\{\phi_j\}$ at the centers of faces of the honeycomb lattice, so that the oriented loops are their contour lines, with a step $\pm \pi$ across each contour line: see \figref{height}.

Note that, for a contractible loop oriented anti-clockwise (resp. clockwise), the algebraic number of turns to the left is $+6$ (resp. $-6$). Hence, the factor $n^{N_{\rm c}(C)}$ in~\eqref{eq:TL-boltzmann} can be distributed locally by assigning each left (resp. right) turn a phase factor $e^{i\chi}$ (resp. $e^{-i\chi}$), with $n=2\cos 6\chi$.

In the scaling limit, it can be argued that the coarse-grained analog of $\phi_j$ is a free scalar field $\phi(\vec r)$. Moreover, the free scalar field $\phi(\vec r)$ has to be compact. Indeed, a given cycle of the torus may be crossed by a non-zero flux $2m$ of arrows (with $m\in \Zb/2$ for honeycomb $\On$ and $m \in \Zb$ for TL), which results in a defect $\delta \phi = 2\pi m$ for the height variable.
The effective model for both critical regimes of the {\On} model and for the TL model is thus a compactified boson
\begin{equation} \label{eq:action}
  A[\phi] = \frac{g}{4\pi} \int d^2r\ (\nabla \phi)^2 \,,
  \qquad
  \begin{cases}
    \phi \equiv \phi + \pi & \text{for honeycomb {\On},} \\
    \phi \equiv \phi + 2\pi & \text{for TL.}
  \end{cases}
\end{equation}
The renormalised value of the coupling constant $g$ can be fixed in terms of $n$ using e.g. exact Bethe Ansatz determination of one critical exponent~\cite{Baxter-book}. It is found to be
\begin{equation} \label{eq:g}
  n = -2\cos \pi g \,, \qquad \begin{cases}
    1 < g < 2 & \text{for $K=K_c$,} \\
    0 < g < 1 & \text{for $K>K_c$ or in the TL model.}
  \end{cases}
\end{equation}

For non-contractible loops, the local phase factors defined above cancel each other, so one must introduce an additional topological factor related to the height defects $\delta\phi$ and $\delta'\phi$ along the two cycles of the torus. Simple geometric arguments show that the partition function on the torus is given by~\cite{DSZ87}
\begin{equation} \label{eq:Z-On}
  Z_{{\rm O}(n)} = \sum_{m,m' \in \mathbb{Z}/2}
  \int_{\dblind{\delta\phi=2\pi m}{\delta'\phi=2\pi m'}}
  [D\phi]\ e^{-A[\phi]} \ \cos[\pi e_0 \ (2m) \wedge (2m')] \,,
\end{equation}
where the symbol $\wedge$ stands for the greatest common divider, and
\begin{equation}
  \wt n = 2 \cos \pi e_0 \,.
\end{equation}

For the TL model, we consider a torus consisting of $L \times M$ lattice steps, with $L$ and $M$ both even. Then the defects must be of the form $\delta\phi = 2\pi m$, with $m \in \Zb$. Using the same line of argument as for the honeycomb {\On} model, one gets the partition function
\begin{equation} \label{eq:Z-TL}
  Z_{\rm TL} = \sum_{m,m' \in \mathbb{Z}}
  \int_{\dblind{\delta\phi=2\pi m}{\delta'\phi=2\pi m'}}
  [D\phi]\ e^{-A[\phi]} \ \cos(\pi e_0 \ m \wedge m') \,.
\end{equation}

In both cases, the insertion of the topological factor changes the dependence of $Z$ on the system size, and hence it affects the central charge, which becomes
\begin{equation} \label{eq:cc}
  c = 1 - \frac{6 e_0^2}{g} \,.
\end{equation}
To make contact with the Coulomb-gas notations~\cite{DF84,DF85}, we introduce the background charge $-2\alpha_0 = e_0/\sqrt{g}$, and the central charge is given by $c=1-24\alpha_0^2$.

\begin{figure}
  \begin{center}
    \scalebox{0.5}{\input{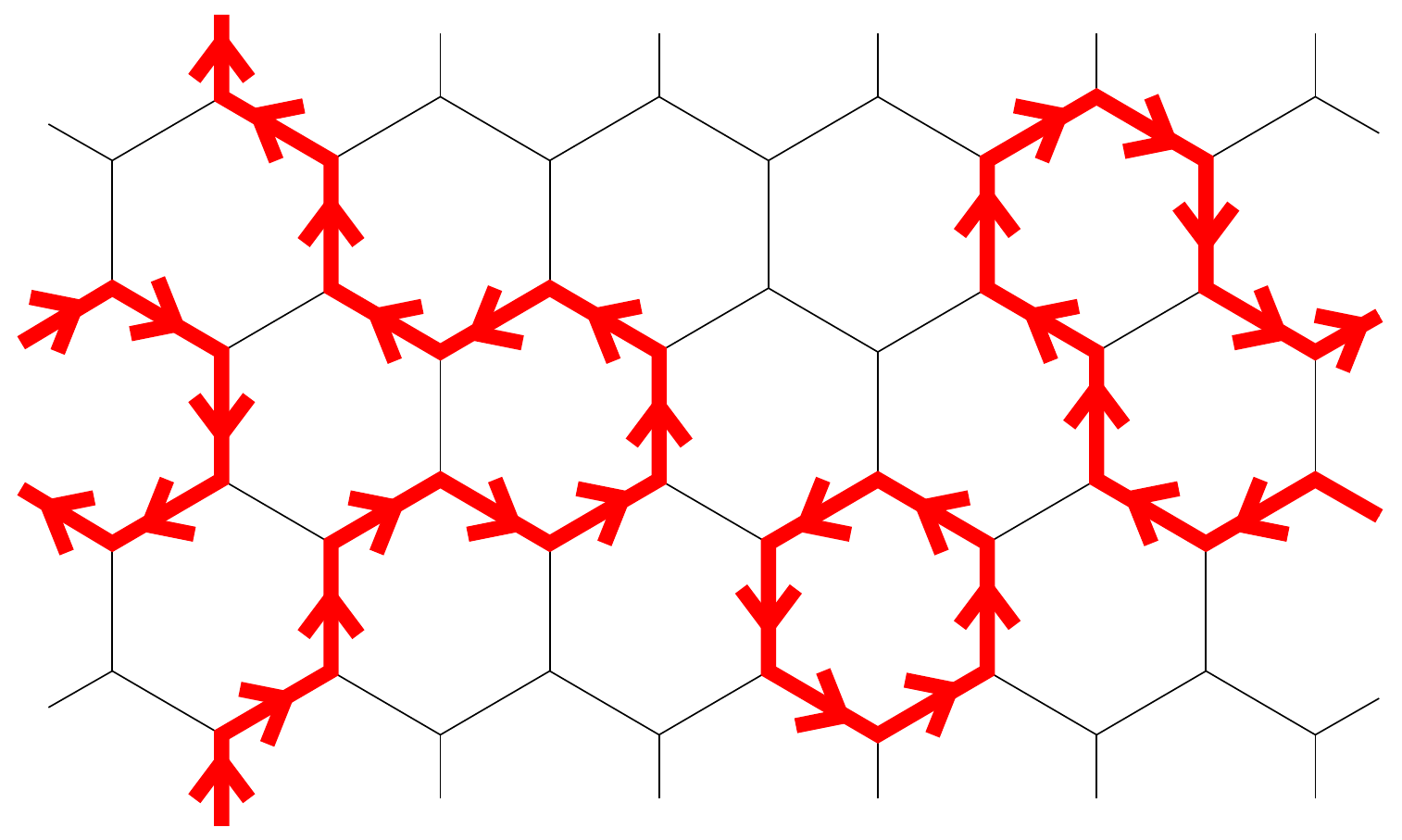_t}}
  \end{center}
  \caption{An oriented loop configuration and associated height values on the dual lattice. Heights are given in units of $\pi$.}
  \label{fig:height}
\end{figure}

\subsection{Vertex operators}
\label{sec:vertex}

To discuss the operator content of the CFT associated to the loop model, we shall use the geometry of the plane, with complex coordinates $(z,\zb)$. The height field $\phi$ can be expanded on holomorphic and anti-holomorphic modes, and one defines the dual field $\theta$ as:
\begin{equation} \label{eq:phi-dual}
  \phi(z,\zb) = \vphi(z) + \vphib(\zb) \,,
  \qquad
  \theta(z,\zb) = \vphi(z) - \vphib(\zb) \,.
\end{equation}
A chiral vertex operator $V_\alpha$, with conformal dimension $h(\alpha)$, is given by
\begin{equation}
  V_\alpha(z) = \ :\exp[i\sqrt{4g}\ \alpha\ \vphi(z)]:
  \qquad
  h(\alpha) = \alpha^2 - 2\alpha \alpha_0 \,,
\end{equation}
where $:\dots:$ denotes normal ordering. A general vertex operator is of the form
\begin{equation} \label{eq:vertex}
  \V_{\alpha,\alphab}(z,\zb) =
  \ :e^{i\sqrt{4g}[\alpha\vphi(z) + \bar\alpha\vphib(\zb)]}:
\end{equation}
Let us recall a heuristic argument relating vertex operators to ``spin-wave'' (electric) and ``vortex'' (magnetic) excitations of the height variable $\phi$. First, if $\alpha=\alphab=\alpha_{\rm el}$, the vertex operator has the form $\V_{\rm el}=\ :\exp(i\sqrt{4g}\ \alpha_{\rm el}\,\phi):$ and is single-valued only if $\sqrt{4g}\alpha_{\rm el}$ is a multiple of the inverse compactification radius :
\begin{equation}
  \alpha_{\rm el} = \frac{k}{\sqrt{4g}} \,,
  \qquad \begin{cases}
    k \in 2\Zb & \text{for honeycomb {\On},} \\
    k \in \Zb & \text{for TL.}
  \end{cases}
\end{equation}
A second interesting case is when $\alpha = 2\alpha_0-\alphab = \alpha_{\rm mag}$. Then the vertex operator has the form 
\begin{equation}
  \V_{\rm mag} =\ :\exp(i\sqrt{4g}\ \alpha_0\ \phi) \times \exp[i\sqrt{4g}\ (\alpha_{\rm mag}-\alpha_0)\theta]:
\end{equation}
where the first factor cancels the background charge, and the second factor creates a defect of height $\delta\phi = 8\pi(\alpha_{\rm mag}-\alpha_0)/\sqrt{4g}$, which must be a (half-)integer multiple of $2\pi$: 
\begin{equation}
  \alpha_{\rm mag} = \alpha_0 - \frac{m\sqrt g}{2} \,,
  \qquad \begin{cases}
    m \in \Zb/2 & \text{for honeycomb {\On},} \\
    m \in \Zb & \text{for TL.}
  \end{cases}
\end{equation}

More generally, we introduce the notation for vertex operator charges
\begin{equation}
  \alpha_{me} = \alpha_0 - \frac{m\sqrt{g}}{2} + \frac{e}{2\sqrt{g}} \,,
  \qquad
  \alphab_{me} = \alpha_0 + \frac{m\sqrt{g}}{2} + \frac{e}{2\sqrt{g}} \,,
\end{equation}
and the corresponding conformal dimensions
\begin{equation} \label{eq:h}
  h_{me} = \frac{1}{4}\left(m\sqrt{g} - \frac{e}{\sqrt g} \right)^2
  - \frac{e_0^2}{4g} \,,
  \qquad
  \hb_{me} = h_{me} + me \,.
\end{equation}
The limit $\nt \to n$ corresponds to $e_0 \to 1-g$, which gives $2\alpha_0 \to \sqrt{g}-1/\sqrt{g}$. Thus if one sets $\alpha_-=\sqrt{g}$ and $\alpha_+=-1/\sqrt{g}$, one gets
\begin{eqnarray}
  2 \alpha_0 \mathop{\longrightarrow}_{\nt \to n}
  \alpha_+ + \alpha_- \,, \qquad \alpha_+\alpha_- = -1 \,, \\
  \alpha_{me} \mathop{\longrightarrow}_{\nt \to n}
  \half(1-m)\alpha_- + \half(1-e)\alpha_+ \,,
\end{eqnarray}
which is, for $m,e$ integers, the Coulomb-gas parameterisation of the Kac table.

\subsection{Operator spectrum}

Let us go back to the toroidal geometry, and consider the transfer matrix (or Hamiltonian) propagating along one cycle of the torus. The transfer matrix conserves the total flux of arrows in the time direction, which we denote $2m$. We distinguish two types of sectors: 
\begin{enumerate}

\item The sector $m=0$ corresponds to the situation where no loop cycles in the transfer direction. It contains the ground state, and the allowed excitations in this sector are of the electric type,
  \begin{equation}
    \alpha=\alphab = \alpha_{0e} \,, \qquad e = e_0 +k\, ,
      \qquad \begin{cases}
    k \in 2\Zb & \text{for honeycomb {\On},} \\
    k \in \Zb & \text{for TL.}
  \end{cases}
 \end{equation}
  and we denote the corresponding operator as $\V_{0e}$. Note that, in the limit $\nt \to n$, the charge $\alpha_{0,e_0+k} \to \alpha_{1,1+k}$, and we obtain the scalar degenerate fields $\V_{0e} \to \V_{1,1+k} = \Phi_{1,1+k}$ with $k \in 2\Zb$ (resp. $k \in \Zb$) for the honeycomb {\On} model (resp. for the TL model).

\item A sector with $m \neq 0$ corresponds to $2m$ strings propagating in the time direction. The lowest energy state in this sector is the ``purely magnetic'' state with $\alpha = 2\alpha_0-\alpha = \alpha_{m0}$, and the allowed excitations are of the form
  \begin{equation}
    (\alpha,\alphab) = (\alpha_{me},\alpha_{-m,e}) \,,
    \qquad \begin{cases}
      m \in \frac{\Zb^+}{2}, e \in \Zb/m & \text{for honeycomb {\On}} \\
      m \in \Zb^+, e \in \Zb/m & \text{for TL} \\
    \end{cases}
  \end{equation}
  and we denote the corresponding operator as $\W_{me}$. This operator has conformal spin $h_{me}-h_{-m,e} = -me$. Note that, for $m>1$, ``fractional'' electric charges, $e=p/m$ with $p$ integer, are allowed.
\end{enumerate}

To summarise, for a transfer matrix with periodic boundary conditions, the states present in the scaling theory are the $\V_{0e}$'s in the zero-string sector, the $\W_{me}$'s in the $2m$-string sector, and their descendants under the action of the Virasoro generators $\{ L_n, \Lb_n \}$.

\section{Conformal bootstrap for mixed electric/magnetic operators}
\label{sec:boot}

Throughout \secref{boot} we will restrict to the case $n=\nt$. For the sake of completeness, and to fix notations, we first review the Coulomb-gas approach~\cite{DF84,DF85} for the case of scalar electric operators, and present a generalisation to mixed electric/magnetic operators, also with the help of functional relations in the fashion of~\cite{Teschner95}.

\subsection{Coulomb-gas approach for purely electric operators}
\label{sec:CG1}

\subsubsection{Four-point function}
We denote $\V_\alpha(z,\zb)=:e^{i\sqrt{4g}\alpha\phi(z,\zb)}:$ a generic electric operator.
Recall that $\V_\alpha$ and $\V_{2\alpha_0-\alpha}$ have the same conformal dimensions $h=\hb=\alpha^2-2\alpha_0\alpha$, and they represent the same operator $\Phi_{h,h}$ in the CFT.
We consider the four-point function
\begin{equation} \label{eq:4pt}
  \C(z,\zb) = \aver{\V_{\alpha_1}(0)\V_{\alpha_2}(z,\zb)\V_{\alpha_{3}}(1)\V_{\alpha_{4}}(\infty)} \,,
\end{equation}
and we assume that the charges $\{ \alpha_i \}$ satisfy the neutrality condition after insertion of $(p-1)$ screening charges $\V_{\alpha_+}$:
\begin{equation} \label{eq:neutral}
  \alpha_1 + \alpha_2 + \alpha_3 + \alpha_4 + (p-1)\alpha_+ = 2\alpha_0 \,.
\end{equation}
Under these conditions, the four-point function can be written
\begin{equation} \label{eq:4ptG}
  \C(z,\zb) = (z\zb)^{2\alpha_1\alpha_2} \ [(z-1)(\zb-1)]^{2\alpha_2\alpha_3} \times G(z,\zb) \,,
\end{equation}
where $G(z,\zb)$ is a bilinear combination of $p$ independent conformal blocks
\begin{equation} \label{eq:G}
  G(z,\zb) = \sum_{k=1}^p \sum_{\kb=1}^p X_{k,\kb}
  \ \F_k(z|\alpha_1, \alpha_2, \alpha_3, \alpha_4)
  \ \ol{\F_{\kb}(z|\alpha_1, \alpha_2, \alpha_3, \alpha_4)} \,.
\end{equation}

\subsubsection{Conformal blocks}
The conformal block $\F_k(z|\alpha_1, \alpha_2, \alpha_3, \alpha_4)$ is defined as the contour integral over the positions $\{v_j\}$ of the $(p-1)$ charges
\begin{equation} \label{eq:Fk}
  \F_k(z|\{\alpha_i\}) =
  \oint_{C_k} \prod_{j=1}^{p-1} dv_j
  \prod_{j=1}^{p-1} v_j^a (v_j-1)^b (v_j-z)^c
  \times \prod_{1\leq i<j\leq p-1} (v_i-v_j)^{2\rho} \,,
\end{equation}
where
\begin{equation}
  a = 2\alpha_+ \alpha_1 \,,
  \quad b = 2\alpha_+ \alpha_3 \,,
  \quad c = 2\alpha_+ \alpha_2 \,,
  \quad d = 2\alpha_+ \alpha_4 \,,
\end{equation}
and $\rho = \alpha_+^2$, and the integration contour $C_k$ is shown in \figref{Ck}a. 
The behaviour of $\F_k$ at $z \to 0$ is
\begin{equation} \label{eq:Fk-lim}
  z^{2\alpha_1\alpha_2} \ \F_k(z) \mathop{\sim}_{z \to 0} N_k \times z^{-h(\alpha_1)-h(\alpha_2)+h(\beta_k)} \,,
\end{equation}
where $\beta_k = \alpha_1+\alpha_2+(k-1)\alpha_+$, and the normalisation factor $N_k$ is given in~\eqref{eq:Nk}. Thus, the conformal block $\F_k$ corresponds to the fusion diagram shown in \figref{fusion}a, and the term $(k,\kb)$ in~\eqref{eq:G} corresponds to the fusions
\begin{equation}
  \V_{\alpha_1} \times \V_{\alpha_2}
  \to \V_{\beta_k,\beta_\kb}
  \leftarrow \V_{\alpha_3} \times \V_{\alpha_4} \,.
\end{equation}

\begin{figure}
  \begin{center}
    \begin{tabular}{ccc}
      \scalebox{1}{\input{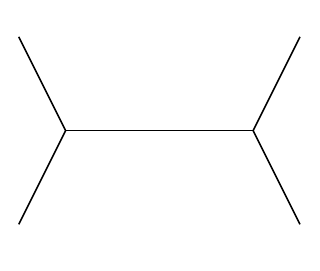_t}}
      &&
      \scalebox{1}{\input{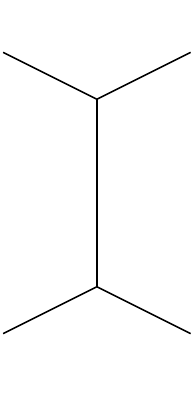_t}} \\
      \\
      (a) &\qquad\qquad\qquad\qquad\qquad& (b)
    \end{tabular}
  \end{center}
  \caption{Fusion diagram corresponding to: (a) the conformal block $\F_k$; (b) the conformal block $\Ft_\ell$.}
  \label{fig:fusion}
\end{figure}

\subsubsection{Dual basis} Another basis of conformal blocks $\Ft_\ell(z|\alpha_1,\alpha_2,\alpha_3,\alpha_4)$ is given by the same integrals as in~\eqref{eq:Fk}, but with the integration contour $C_k$ replaced by $\wt C_\ell$ (see \figref{Ck}b), and they correspond to the fusion diagram of \figref{fusion}b, with the charge $\wt\beta_\ell = \alpha_2+\alpha_3+(\ell-1)\alpha_+$ in the fusion channel. The correlation function may be written
\begin{equation} \label{eq:G2}
  G(z,\zb) = \sum_{\ell,\ellb} \wt X_{\ell,\ellb}
  \ \Ft_\ell(z|\{\alpha_i\}) \ \ol{\Ft_\ellb(z|\{\alpha_i\})} \,.
\end{equation}
The change of basis is given by the matrix $A$ and its inverse $\wt A$:
\begin{equation} \label{eq:change}
  \F_k(z) = \sum_{\ell=1}^p A_{k\ell} \Ft_\ell(z) \,,
  \qquad
  \Ft_\ell(z) = \sum_{k=1}^p \wt A_{\ell k} \F_k(z) \,,
\end{equation}
with the matrix elements of $A$ and $\wt A$ given in~\dblref{eq:Apk}{eq:Akp}.
The expansion coefficients of $G$ in \eqref{eq:G} and \eqref{eq:G2} are related by
\begin{equation}
  \wt X_{\ell,\ellb} = \sum_{k,\kb} A_{k\ell} \ A_{\kb \ellb} \ X_{k,\kb} \,,
  \qquad
  X_{k,\kb} = \sum_{\ell,\ellb} \wt A_{\ell k} \ \wt A_{\ellb \kb} \ \wt X_{\ell,\ellb} \,,
\end{equation}
where we have used the fact that the coefficients $A_{k\ell}$ and $\wt A_{\ell k}$ are real.

\subsubsection{Monodromy conditions}
Under analytic continuation of $\C(z,\zb)$ as $z$ goes around zero,
the term $(k,\kb)$ in the decomposition~\dblref{eq:4ptG}{eq:G} acquires a phase $e^{2i\pi \mu_{k\kb}}$, where [see \eqref{eq:Fk-lim}]
\begin{equation}
  \mu_{k\kb} = h(\beta_k)-h(\beta_\kb)  \,.
\end{equation}
Note that $\mu_{k\kb}$ is simply the conformal spin of the vertex operator $\V_{\beta_k, \beta_\kb}$ in the fusion channel. Similarly, under analytic continuation of $\C(z,\zb)$ as $z$ goes around one, the phase for the term $(\ell,\ellb)$ in~\eqref{eq:G2} is $e^{2i\pi \wt\mu_{\ell\ellb}}$, where
\begin{equation}
  \wt\mu_{\ell\ellb} = h(\wt\beta_\ell)-h(\wt\beta_{\ellb}) \,,
\end{equation}
which is, again, the conformal spin of the operator in the dual fusion channel.

We require $G(z,\zb)$ to be monodromy invariant: this selects the terms in the expansions~\eqref{eq:G} and \eqref{eq:G2}, such that $\mu_{k\kb}$ and $\wt\mu_{\ell\ellb}$ are integers. There are in general several possibilities, depending on the values of the external charges $\{\alpha_i\}$, but we consider here the generic solution, where the selection rules are:
\begin{equation}
  k=\kb \qquad \text{and} \qquad \ell = \ellb \,.
\end{equation}  
In other words, we must have $X_{k,\kb}=0$ for all $k \neq \kb$, and $\wt X_{\ell,\ellb}=0$ for all $\ell \neq \ellb$. This yields the ``diagonal form''
\begin{equation}
  G(z,\zb) = \sum_{k=1}^p X_k \ |\F_k(z|\{\alpha_i\})|^2 \,,
\end{equation}
and the homogeneous linear system
\begin{equation} \label{eq:mono}
  \forall \ell \neq \ellb \,,
  \qquad
  \sum_{k=1}^p A_{k\ell} \ A_{k\ellb} \ X_k = 0 \,.
\end{equation}
The system~\eqref{eq:mono} consists in $p(p-1)$ linearly related equations, for $p$ unknowns $\{X_1, \dots, X_p\}$. It can be shown that~\eqref{eq:mono} is equivalent to the subsystem obtained by taking $\ell \in \{1, \dots, p-1\}$ and $\ellb=p$~:
\begin{equation} \label{eq:mono2}
  \forall \ \ell \in \{1, \dots, p-1\} \,,
  \qquad
  \sum_{k=1}^p A_{k\ell} \ A_{kp} \ X_k = 0 \,,
\end{equation}
which has the simple solution $X_k \propto \wt A_{pk}/A_{kp}$. To get a meaningful set of coefficients, we need to work with the normalised conformal blocks:
\begin{equation} \label{eq:G3}
  G(z,\zb) = \sum_{k=1}^p S_k \ \left|
    \frac{1}{N_k}\F_k(z)
    \right|^2 \,,
    \qquad S_k = |N_k|^2 \times X_k \,.
\end{equation}
After some simple manipulations described in~\appref{app-boot}, we obtain
\begin{align}
  S_k = &
  \prod_{j=0}^{k-2} \gamma[(j+1)\rho] \ \frac{\gamma(1+a+j\rho) \gamma(1+c+j\rho)}{\gamma[2+a+c+(k-2+j)\rho]} \nn \\
  & \times \prod_{j=0}^{p-k-1} \gamma[(j+1)\rho] \ \frac{\gamma(1+b+j\rho) \gamma(1+d+j\rho)}{\gamma[2+b+d+(p-k-1+j)\rho]} \,,
  \label{eq:Sk}
\end{align}
where we have used the notation $\gamma(x) = \Gamma(x)/\Gamma(1-x)$.
Similarly, in the basis $\{ \Ft_\ell \}$, we have the expansion
\begin{equation} \label{eq:G4}
  G(z,\zb) = \sum_{\ell=1}^p \wt S_\ell \ \left|
    \frac{1}{\wt N_\ell}\Ft_\ell(z)
    \right|^2 \,,
\end{equation}
where $\wt S_\ell$ is obtained by changing $k \leftrightarrow \ell$ and $a \leftrightarrow b$ in~\eqref{eq:Sk}.

\subsubsection{Restricted case} 
To lighten the notation, we define $\V_{rs}:=\V_{\alpha_{rs}}$.
Let us consider a particular case of~\dblref{eq:4pt}{eq:neutral}:
\begin{equation} \label{eq:4pt2}
  \C(z,\zb) = \aver{\V_\alpha(0) \V_{2\alpha_0-\alpha}(z,\zb)
    \V_{1p}(1) \V_{1p}(\infty)} \,,
\end{equation}
where $\alpha$ is a generic charge.
Note that, after proper normalisation $\V_\alpha \equiv \V_{2\alpha_0-\alpha}$, but we use the above convention to ensure the neutrality condition~\eqref{eq:neutral}. The term $k$ in~\eqref{eq:G3} corresponds to the fusion channel:
\begin{equation}
  \V_\alpha \times \V_{2\alpha_0-\alpha}
  \to \V_{\beta_k}
  \leftarrow \V_{1p} \times \V_{1p} \,,
\end{equation}
where $\beta_k = 2\alpha_0-\alpha_{1,2k-1}$. The coefficients $S_k$ then read
\begin{align}
  S_k = S_k^{(p)}(a, a') =&
  \prod_{j=0}^{k-2} \gamma[(j+1)\rho] \ \frac{\gamma(1+a+j\rho) \gamma(1+a'+j\rho)}{\gamma[(k+j)\rho]} \nn \\
  & \times \prod_{j=0}^{p-k-1} \gamma[(j+1)\rho] \ \frac{\gamma^2(1-(p-j-1)\rho)}{\gamma[2-(p+k-j-1)\rho]} \,,
  \label{eq:Sk2}
\end{align}
where $a=2\alpha_+ \alpha$, and $a'=2\alpha_+ (2\alpha_0-\alpha)$.

\subsubsection{Extraction of OPE coefficients} 
The expansion coefficients $S_k$ are related to OPE constants by
\begin{equation}
  S_k \propto C(\V_\alpha, \V_\alpha, \V_{1,2k-1}) \times
  C(\V_{1,2k-1}, \V_{1p}, \V_{1p}) \,.
\end{equation}
Using the result~\eqref{eq:Sk2}, and after a few simple steps described in \appref{app-boot}, we obtain the compact form
\begin{equation} \label{eq:C1}
  \boxmath{\begin{aligned}
      C^2(\V_\alpha, \V_\alpha, \V_{1,2k-1}) =&
      \prod_{j=0}^{k-2} \frac{\gamma^3[(j+1)\rho] \gamma^2(1+a+j\rho) \gamma^2(1+a'+j\rho)}{\gamma[(j+k)\rho]} \\
      & \times \prod_{j=0}^{2k-3} \gamma[2-(j+2)\rho] \,.
    \end{aligned}}
\end{equation}
where $a=2\alpha_+ \alpha$, and $a'=2\alpha_+ (2\alpha_0-\alpha)$.
Similarly, the dual coefficients for \eqref{eq:4pt2} give the OPE constants:
\begin{equation}
  \wt S_\ell \propto C^2(\V_{1p}, \V_\alpha, \V_{\alpha+(p+1-2\ell)\alpha_+/2}) \,.
\end{equation}
In this case, the OPE constants are determined up to an overall constant (independent of $q$):
\begin{equation} \label{eq:C2}
  \boxmath{\begin{aligned}
    &C^2(\V_{1p}, \V_\alpha, \V_{\alpha+q\alpha_+})\propto \\
    &\qquad\qquad
    \prod_{j=0}^{\ell-2} \gamma^2[(j+1)\rho]
    \ {\gamma(1+a'+j\rho)}{\ \gamma[1+a+(p-\ell-1-j)\rho]} \\
    &\qquad\qquad
    \times \prod_{j=0}^{p-\ell-1} \gamma^2[(j+1)\rho]
    \ {\gamma(1+a+j\rho)}{\ \gamma[1+a'+(\ell-2-j)\rho]} \,,
  \end{aligned}}
\end{equation}
where $q \in \{-\frac{p-1}{2}, \dots, \frac{p-1}{2}\}$, and $\ell = \frac{p+1}{2}-q$.
An important example is the fusion with $\V_{12}$:
\begin{equation}
  \V_{12} \times \V_\alpha \to \V_{\alpha \pm \alpha_+/2} \,.
\end{equation}
Using~\eqref{eq:C2}, we get
\begin{equation} \label{eq:C2bis}
  \boxmath{
    \frac{C^2(\V_{12}, \V_\alpha, \V_{\alpha-\alpha_+/2})}{C^2(\V_{12}, \V_\alpha, \V_{\alpha+\alpha_+/2})}
    = \frac{\gamma(1+a')\gamma(1+a-\rho)}{\gamma(1+a)\gamma(1+a'-\rho)} \,.
  }
\end{equation}

\subsubsection{Comparison with the DOZZ formula}

As it was noted in~\cite{DV11,PSVD13}, some of the structure constants of CFTs with $c<1$ associated to statistical models satisfy the same functional relations as those of the Liouville theory defined by the Lagrangian
\begin{equation} \label{eq:Liouville}
  \mathcal{L} = \frac{1}{4\pi} (\nabla \phi)^2 + \mu \ e^{2b\phi} \,,
\end{equation}
with the coupling constant $b$ continued to an imaginary value: $b=i\beta$. In this context, the central charge reads:
\begin{equation}
  c = 1-6(\beta-1/\beta)^2 \,.
\end{equation}
The solution of the functional relations (conformal bootstrap) for the Liouville theory is given by~\cite{DO92,DO94,ZZ96,Teschner95}
\begin{align}
  &C_{\cal L}(\V_{\alpha_1},\V_{\alpha_2},\V_{\alpha_3}) = \label{eq:DOZZ} \\
  & \qquad {\rm const} \times
  \frac{\Yb(\beta-\alpha_{12}^3)\Yb(\beta-\alpha_{23}^1)\Yb(\beta-\alpha_{31}^2)
    \Yb(2\beta-\beta^{-1}-\sum_{i=1}^3\alpha_i)}
  {\left[\prod_{i=1}^3 \Yb(\beta-2\alpha_i)\Yb(2\beta-\beta^{-1}-2\alpha_i)\right]^{1/2}} \,,  \nn
\end{align}
where $\alpha_{ij}^k = \alpha_i+\alpha_j-\alpha_k$, and the multiplicative constant is simply determined by $C_{\cal L}(\V_\alpha, \V_\alpha, 1) = 1$. To define the function $\Yb(x)$, it is convenient to introduce $Q=\beta+\beta^{-1}$. On the interval $0<x<Q$, the function $\Yb(x)$ is given by
\begin{equation}
  \Yb(x) = \int_0^\infty \frac{dt}{t} \left[
    (Q/2-x)^2 \ e^{-t} - \frac{\sinh^2 (Q/2-x) \frac{t}{2}}{\sinh \frac{\beta t}{2} \ \sinh \frac{t}{2\beta}}
  \right] \,,
\end{equation}
and the function is fully determined for $x \in \mathbb{R}$ by the relation
\begin{equation}
  \frac{\Yb(x+\beta)}{\Yb(x)} = \gamma(\beta x) \ \beta^{1-2\beta x} \,.
\end{equation}

This solution is usually called the Dorn--Otto--Zamolodchikov--Zamolodchikov (DOZZ) formula.
It can be checked explicitly that the OPE coefficients \dblref{eq:C1}{eq:C2bis}, found from the integral representation of the conformal blocks, coincide with~\eqref{eq:DOZZ}.

\subsection{Coulomb-gas approach in the presence of magnetic operators}
\label{sec:CG2}

We shall generalise slightly the Coulomb-gas approach described above to the case of vertex operators $\V_{\alpha,\alphab}$ with $\alpha \neq \alphab$.

\subsubsection{Four-point function} We consider the correlation function
\begin{equation} \label{eq:4pt3}
  \C(z,\zb) = \aver{\V_{\alpha_1,\alphab_1}(0) \V_{\alpha_2,\alphab_2}(z,\zb) \V_{\alpha_3,\alphab_3}(1) \V_{\alpha_4,\alphab_4}(\infty)} \,,
\end{equation}
subject to the neutrality conditions
\begin{equation}
  \begin{aligned}
    \alpha_1 + \alpha_2 + \alpha_3 + \alpha_4 + (p-1)\alpha_+ &= 2\alpha_0 \,, \\
    \alphab_1 + \alphab_2 + \alphab_3 + \alphab_4 + (p-1)\alpha_+ &= 2\alpha_0 \,.
  \end{aligned}
\end{equation}
We can then write
\begin{equation}  \label{eq:4ptG3}
  \C(z,\zb) = z^{2\alpha_1\alpha_2} \zb^{2\alphab_1\alphab_2} (z-1)^{4\alpha_2\alpha_3} (\zb-1)^{4\alphab_2\alphab_3}
  \times G(z,\zb) \,,
\end{equation}
with
\begin{equation} \label{eq:G5}
  G(z,\zb) = \sum_{k=1}^p \sum_{\kb=1}^p X_{k,\kb}
  \ \F_k(z|\alpha_1,\alpha_2,\alpha_3,\alpha_4)
  \ \ol{\F_\kb(z|\alphab_1,\alphab_2,\alphab_3,\alphab_4)} \,,
\end{equation}
where the $F_k$'s are the same conformal blocks as in~\eqref{eq:G}, but the charges in the holomorphic and anti-holomorphic sectors can be different. In the dual basis we write
\begin{equation} \label{eq:G6}
  G(z,\zb) = \sum_{\ell=1}^p \sum_{\ellb=1}^p \wt X_{\ell,\ellb}
  \ \Ft_\ell(z|\alpha_1,\alpha_2,\alpha_3,\alpha_4)
  \ \ol{\Ft_\ellb(z|\alphab_1,\alphab_2,\alphab_3,\alphab_4)} \,.
\end{equation}
We introduce the notations
\begin{equation}
  \begin{aligned}
    a = 2\alpha_+\alpha_1 \,, \quad b = 2\alpha_+\alpha_3 \,,
    \quad c = 2\alpha_+\alpha_2 \,, \quad d = 2\alpha_+\alpha_4 \,, \\
    \ab = 2\alpha_+\alphab_1 \,, \quad \bar{b} = 2\alpha_+\alphab_3 \,,
    \quad \bar{c} = 2\alpha_+\alphab_2 \,, \quad \bar{d} = 2\alpha_+\alphab_4 \,.
  \end{aligned}
\end{equation}

\subsubsection{Monodromy conditions} 
The term $k,\kb$ in~\eqref{eq:G5} corresponds to the fusions
\begin{equation}
  \V_{\alpha_1,\alphab_1} \times \V_{\alpha_2,\alphab_2} \to  \V_{\beta_k,\betab_k}
  \leftarrow \V_{\alpha_3,\alphab_3} \times \V_{\alpha_4,\alphab_4}
\end{equation}
with the charges in the fusion channel:
\begin{equation}
  \begin{aligned}
    \beta_k &= \alpha_1 + \alpha_2 +(k-1)\alpha_+ \,, \\
    \betab_\kb &= \alphab_1 + \alphab_2 +(\kb-1)\alpha_+ \,.
  \end{aligned}
\end{equation}
Under analytic continuation of~$\C(z,\zb)$ as $z$ goes around zero, this term gets a phase factor $e^{2i\pi \mu_{k,\kb}}$, where
\begin{align}
  \mu_{k,\kb} &= [h(\beta_k)-h(\alpha_1)-h(\alpha_2)] - [h(\betab_\kb)-h(\alphab_1)-h(\alphab_2)] \nn \\
  &=  {\rm spin}(\V_{\beta_k,\betab_k})- {\rm spin}(\V_{\alpha_1,\alphab_1}) - {\rm spin}(\V_{\alpha_2,\alphab_2})  \,.
\end{align}
Similarly, under analytic continuation of~$\C(z,\zb)$ as $z$ goes around one, the term $\ell,\ellb$ in~\eqref{eq:G6} gets a factor $e^{2i\pi \wt\mu_{\ell,\ellb}}$, where
\begin{equation}
  \wt\mu_{\ell,\ellb} =  {\rm spin}(\V_{\wt\beta_\ell,\wt\betab_\ellb})- {\rm spin}(\V_{\alpha_2,\alphab_2}) - {\rm spin}(\V_{\alpha_3,\alphab_3})  \,.
\end{equation}
In general, to construct a monodromy invariant four-point function, one must select the terms such that $\mu_{k,\kb}$ and $\wt\mu_{\ell,\ellb}$ are integers.

\subsubsection{Restricted case}
\label{sec:restrict}
From now on, let us focus on the particular case:
\begin{equation} \label{eq:4pt4}
  \C(z,\zb) = \aver{\V_{\alpha, \alphab}(0) \V_{2\alpha_0-\alpha, 2\alpha_0-\alphab}(z,\zb) \V_{1p}(1) \V_{1p}(\infty)} \,,
\end{equation}
where we require that 
\begin{itemize}
  \item[(i)] the operator $\V_{\alpha, \alphab}$ has integer conformal spin;
  \item[(ii)] the difference $(\alpha-\alphab) \in \Zb \alpha_-/2$.
\end{itemize}
Note that these two conditions are satisfied by the electric operators $\{\V_{1,1+k}\}$ and the $2m$-string excitations $\{\W_{me}\}$. We then have
\begin{equation}
  a+c = \ab + \bar{c} = 2(\rho-1) \,,
  \quad b=\bar{b} =d = \bar{d} = b_p = (1-p)\rho \,,
\end{equation}
and the condition (ii) translates into
\begin{equation}
  a-\ab = \bar{c}-c \in \Zb \,.
\end{equation}
In this context, we can again construct a monodromy invariant by selecting only the terms $k=\kb$ and $\ell=\ellb$. We get a ``diagonal'' decomposition
\begin{equation}
  G(z,\zb) = \sum_{k=1}^p X_k \ \F_k(z|\{\alpha_i\}) \ol{\F_k(z|\{\alphab_i\})} \,,
\end{equation}
with coefficients $X_k$ subject to the linear system
\begin{equation} \label{eq:mono3}
  \forall \ \ell \neq \ellb \,,
  \qquad \sum_{k=1}^p A_{k\ell}(\{\alpha_i\}) A_{k\ellb}(\{\alphab_i\}) \ X_k = 0 \,.
\end{equation}
The above condition (ii) ensures that this system still admits a solution\footnote{
If (ii) is relaxed, then~\eqref{eq:Xk} is a solution of~\eqref{eq:mono3} only for $\ell<\ellb$.
}, which has the form
\begin{equation} \label{eq:Xk}
  X_k \propto \frac{\wt A_{pk}(\{\alpha_i\})}{A_{kp}(\{\alphab_i\})} \,.
\end{equation}
Introducing the coefficients $S_k$ on normalised conformal blocks:
\begin{equation}
  G(z,\zb) = \sum_{k=1}^p S_k \ \frac{\F_k(z|\{\alpha_i\})}{N_k(\{\alpha_i\})}
  \times
  \ol{\frac{\F_k(z|\{\alphab_i\})}{N_k(\{\alphab_i\})}} \,,
\end{equation}
we obtain from~\eqref{eq:Sk}:
\begin{align}
  S_k = \sqrt{S_k^{(p)}(a,a') \ S_k^{(p)}(\ab, \ab')} \,,
  \label{eq:Sk3}
\end{align}
where
\begin{equation}
  a = 2\alpha_+ \alpha \,,
  \quad a'=2\alpha_+(2\alpha_0-\alpha) \,,
  \quad \ab = 2\alpha_+ \alphab \,,
  \quad \ab' = 2\alpha_+ (2\alpha_0-\alphab) \,,
\end{equation}
and $S_k^{(p)}(a,a')$ is defined in~\eqref{eq:Sk2}.

\subsubsection{Extraction of OPE coefficients}
The above expansion coefficients give access to the OPE constants through the relation
\begin{equation}
  S_k \propto C(\V_{\alpha,\alphab},\V_{\alpha,\alphab},\V_{1,2k-1}) \times C(\V_{1,2k-1},\V_{1p},\V_{1p}) \,.
\end{equation}
Using the results for purely electric operators, we get 
\begin{equation} \label{eq:C3}
  \boxmath{
    C(\V_{\alpha,\alphab},\V_{\alpha,\alphab},\V_{1,2k-1}) =
    \sqrt{C(\V_\alpha,\V_\alpha,\V_{1,2k-1}) \ C(\V_\alphab,\V_\alphab,\V_{1,2k-1})} \,,
  }
\end{equation}
where $C(\V_\alpha,\V_\alpha,\V_{1,2k-1})$ is the purely electric coefficient given in~\eqref{eq:C1}.
Similarly, using the dual expansion we obtain
\begin{equation} \label{eq:C4}
  \boxmath{
    C(\V_{1p}, \V_{\alpha,\alphab}, \V_{\alpha+q\alpha_+,\alphab+q\alpha_+}) \propto
    \sqrt{C(\V_{1p}, \V_\alpha, \V_{\alpha+q\alpha_+}) \ C(\V_{1p}, \V_\alphab, \V_{\alphab+q\alpha_+})} \,,
  }
\end{equation}
where $q \in \{-\frac{p-1}{2}, \dots, \frac{p-1}{2}\}$, and $C(\V_{1p}, \V_\alpha, \V_{\alpha+q\alpha_+})$ was given in~\eqref{eq:C2}. In particular, for $p=2$, we get
\begin{equation} \label{eq:C5}
  \boxmath{
    \frac{C^2(\V_{12}, \V_{\alpha,\alphab}, \V_{\alpha-\frac{\alpha_+}{2},\alphab-\frac{\alpha_+}{2}})}
    {C^2(\V_{12}, \V_{\alpha,\alphab}, \V_{\alpha+\frac{\alpha_+}{2},\alphab+\frac{\alpha_+}{2}})}
    = \sqrt{\frac{\gamma(1+a')\gamma(1+\ab')\gamma(1+a-\rho)\gamma(1+\ab-\rho)}
      {\gamma(1+a)\gamma(1+\ab)\gamma(1+a'-\rho)\gamma(1+\ab'-\rho)}} \,.
  } 
\end{equation}
Like for purely electric operators, these coefficients are simply related to the DOZZ formula~\eqref{eq:DOZZ}:
$C(\V_{\alpha_1,\alphab_1},\V_{\alpha_2,\alphab_2},\V_{\alpha_3,\alphab_3}) = \sqrt{C_{\cal L}(\V_{\alpha_1},\V_{\alpha_2},\V_{\alpha_3})C_{\cal L}(\V_{\alphab_1},\V_{\alphab_2},\V_{\alphab_3})}$.
However, note that this simple form of the OPE coefficients~\dblref{eq:C3}{eq:C5} as geometric means of the holomorphic and anti-holomorphic parts is only valid under the assumptions (i) and (ii) of \secref{restrict}. For instance, the OPE coefficient of purely magnetic operators $C(\W_{10},\W_{10},\W_{10})$ is expected to be finite, whereas the formula~\eqref{eq:DOZZ} with charges $\alpha_i=\alpha_{10}$ gives an infinite coefficient.

\subsection{Functional relation on OPE coefficients}
\label{sec:funct}

\subsubsection{General setting}
Throughout this section we move away from the  {\On} loop model and consider a generic CFT with a central charge of the form 
\begin{align}
  c = 1 - 6 \frac{(1-g)^2}{g} = 1 -24 \alpha_0^2
\end{align}
with $g$ irrational. We the following assumptions about this CFT:
\begin{itemize}
\item [(i)] There are no degeneracies in the spectrum.
\item [(ii)] The spectrum contains a scalar field $\Phi_{12}(z,\zb)$ with conformal dimension $h_{12} = \bar{h}_{12} = (3-2g)/4g$. This field is degenerate at level $2$, namely the descendents
  \begin{align}
    \chi_{12} = \left(L_{-2} - g L_{-1}^2 \right)  \Phi_{12}
    \qquad\text{and}\qquad
    \bar{\chi}_{1,2} = \left(\bar{L}_{-2} - g \bar{L}_{-1}^2 \right) \Phi_{12}
  \end{align}
  have zero norm.
\item [(iii)] Finally the last assumption we make is that both $\chi_{12}$ and $\bar\chi_{12}$ decouple from the theory. Note that, in unitary CFTs, this last assumption follows from the previous ones, but not in logarithmic CFTs, where a null vector can still have non-zero overlap with its logarithmic partner.
\end{itemize}
It is well known that the decoupling of $\chi_{12}$ yields a differential equation for any correlation function that involves the field $\Phi_{12}$. In particular it yields a functional equation for the structure constants of scalar fields~\cite{Teschner95}. In this section we extend this result to the case of primary fields $\Phi_{h,\hb}$ with arbitrary spins. The method is essentially the same as in~\cite{Teschner95}, and relies on locality and crossing symmetry of the four-point functions.

\subsubsection{Four-point function involving $\Phi_{12}$}
Let us consider a four-point function of primary fields 
\begin{align} \label{4-point function}
  G(z,\bar{z}) = \aver{\Phi_{h_1,\hb_1}(0) \Phi_{12}(z,\bar{z}) \Phi_{h_3,\hb_3}(1)  \Phi_{h_4,\hb_4}(\infty)} \,.
\end{align}
The decoupling of $\chi_{12}$ implies the following (holomorphic) differential equation:
\begin{align} \label{PDE}
  \left[ g\partial_z^2 + \left( \frac{1}{z} + \frac{1}{z-1}\right)\partial_z + \left(\frac{h_1}{z} - \frac{h_3}{z-1} + h_2- h_4 \right) \frac{1}{z(z-1)} \right] G(z,\bar{z}) = 0 \,.
\end{align}
Likewise, the decoupling of $\bar{\chi}_{12}$ yields a similar antiholomorphic PDE, with $h_i \to \hb_i$, $z \to \zb$, $\partial_z \to \partial_{\zb}$. In order to write down the solutions of these differential equations, it is convenient to parametrise the conformal dimensions {\it \`a la} Coulomb gas, namely  
\begin{align}
  h = \alpha (\alpha-2\alpha_0), \qquad \bar{h} =  \alphab (\alphab- 2\alpha_0)
\end{align} 
We stress that this is purely a convenient parametrisation of the conformal dimensions $(h,\bar{h})$, and that we are in no way working with a Coulomb gas CFT. Note that the two values $\alpha$ and $\alpha' = 2\alpha_0 - \alpha$ lead to the same conformal dimension $h(\alpha) =  -\alpha  \alpha'$. As a consequence, there are four ways to parametrise a field with given left and right conformal dimensions $h=h(\alpha)$, $\hb = \hb(\alphab)$ : 
\begin{align}
  (\alpha,\alphab), \quad (\alpha',\alphab), \quad (\alpha,\alphab'), \quad (\alpha',\alphab') \,.
\end{align}
In the following we will write $\Phi_{\alpha,\alphab}$ as a short-hand for $\Phi_{h(\alpha),\hb(\alphab)}$. 
At this point we put no constraint on the spectrum, but it will turn out that the decoupling of $\chi_{12}$ and $\bar{\chi}_{12}$ severely constrains the admissible values of $(\alpha,\alphab)$. 

\subsubsection{Conformal blocks}
Let us focus for now on the holomorphic conformal blocks, which are given by the two independent solutions of the PDE~\eqref{PDE}. The Coulomb gas parametrisation of the conformal dimensions is very convenient to write down explicit expressions for the conformal blocks. For this purpose we adopt the same notations as in~\secref{CG2}:
\begin{equation}
  \begin{aligned}
    &&a = 2\alpha_+ \alpha_1  \,, \quad && b = 2\alpha_+ \alpha_3  \,, \quad && d = 2\alpha_+ \alpha_4 \,, \\
    &&a'= 2\alpha_+ \alpha'_1 \,, \quad && b'= 2\alpha_+ \alpha'_3 \,, \quad && d'= 2\alpha_+ \alpha'_4 \,.
  \end{aligned}
\end{equation}
We will restrict ourselves to the generic situation in which the exponents at the singularities of the PDE  do not differ by an integer. This means $a-a'$, $b-b'$ and $d-d'  \notin 2\Zb$. This ensures that the conformal blocks have no logarithm in their expansion\footnotemark \footnotetext{In the Appendix, we comment on the case $a-a' \in 2\Zb $, for which logarithms appear in the conformal blocks. In the loop model at $\nt =n$ this occurs for operators $\W_{m,0}$, with $m \in \Zb$.}.  

Several basis of conformal blocks can be considered. In the following we will use two such bases. The first one $\{ I_1(z), I_2(z) \}$ corresponds to the expansion around the singularity $z=0$ : 
\begin{equation}
  \begin{aligned}
    I_1(z) & =  z^{-\frac{a}{2}}(1-z)^{-\frac{b}{2}}\ {}_2F_{1}\left(
      {\textstyle \frac{d-a-b+\rho}{2}, \frac{d'-a-b+\rho}{2}; 1 + \frac{a' - a}{2}; z}
    \right) \,, \\ 
    I_2(z) & = z^{-\frac{a'}{2}}(1-z)^{-\frac{b'}{2}}\ {}_2F_{1}\left(
      {\textstyle \frac{d'-a'-b'+\rho}{2}, \frac{d-a'-b'+\rho}{2}; 1 + \frac{a - a'}{2}; z}
    \right) \,,
  \end{aligned}
\end{equation}
and they correspond to the following holomorphic fusion rules
\begin{equation}
  \begin{aligned}
    I_1(z) : \qquad&  \Phi_{\alpha_1} \times \Phi_{12} \to  \Phi_{\alpha_1 - \frac{\alpha_+}{2}} \,, \\ 
    I_2(z) : \qquad&  \Phi_{\alpha_1} \times \Phi_{12} \to  \Phi_{\alpha_1 + \frac{\alpha_+}{2}} \,.
  \end{aligned}
\end{equation}
The constraint $a-a' \notin 2\Zb$ ensures that the two fields produced in the fusion $\Phi_{\alpha_1} \times \Phi_{12}$ do not have conformal dimensions that differ by an integer. In turn this imposes that the two conformal blocks $I_1(z)$ and $I_2(z)$ have different monodromies around $z=0$:
\begin{equation} \label{eq:mono-Ik}
  I_1(z) \to e^{-i\pi a} I_1(z) \,,
  \qquad I_2(z) \to e^{-i\pi a'} I_2(z) \,.
\end{equation}
While $I_1(z)$ and $I_2(z)$ have Abelian monodromy around $z=0$, they have a more complicated monodromy around $z=1$. Another convenient basis $\{J_1(z), J_2(z)\}$ corresponds to the expansion around the singularity $z=1$, and has Abelian monodromy around it:
\begin{equation}
  J_1(z) \to e^{-i\pi b} J_1(z) \,,
  \qquad J_2(z) \to e^{-i\pi b'} J_2(z) \,.
\end{equation}
Their explicit expressions, which can be found in~\appref{app-hyper}, are readily obtained from $I_1(z)$ and $I_2(z)$ by exchanging $z \leftrightarrow (1-z)$ and $(a,a') \leftrightarrow (b,b')$. These two bases are related through
\begin{equation}
  I_k(z) = \sum_{\ell=1}^2 M_{k\ell}(a,b,d) \ J_\ell(z) \,,
\end{equation}
where the explicit expression of the matrix $M(a,b,c)$ is given in~\appref{app-hyper}. For the antiholomorphic side, we have similar expressions for the bases $\{ \bar I_1(\zb), \bar I_2(\zb) \}$ and $\{ \bar J_1(\zb), \bar J_2(\zb) \}$, with $\alpha_i \to \bar{\alpha}_i$. Since it obeys both PDEs, the full four-point function $G(z,\zb)$ has to be of the form  
\begin{equation}
  G(z,\zb) =  \sum_{k,\bar k} X_{k \bar k}\ I_k(z)\ \bar I_{\bar k}(\zb) \,.
\end{equation}

\subsubsection{Consequences of locality}
Locality enforces the function $G(z,\zb)$ to have a monodromy around $z=0$ of the form
\begin{align}
  G(z,\zb) \to  e^{2i\pi \mu}\ G(z,\zb) \,.
\end{align} 
Note that this phase factor is non-trivial when the fields $\Phi_{\alpha_1,\alphab_1}$ and $\Phi_{12}$ are mutually semi-local. A typical exemple would be the spin and disorder operators for the Ising model, for which $e^{2i\pi\mu} = -1$.
Since we consider the case when $a-a' \notin 2\Zb$ and $\bar{a}-\bar{a}' \notin 2\Zb$, the equality of phase factors $e^{i\pi(\bar a-a)}=e^{i\pi(\bar a-a')}=e^{i\pi(\bar a'-a)}=e^{i\pi(\bar a'-a')}$ can only be satisfied if $X_{k\bar k}$ is a diagonal or anti-diagonal matrix. Since the change $\alpha_1 \to \alpha_1'$ (resp. $\bar{\alpha}_1 \to \bar{\alpha}_1'$) is simply a reparameterisation of the field $\Phi_{\alpha_1,\bar{\alpha}_1}$ which exchanges the roles of  $I_1(z)$ and $I_2(z)$ (resp. $\bar I_1(z)$ and $\bar I_2(z)$), we can assume without any loss of generality a diagonal decomposition 
\begin{align}
  G(z,\zb) =  X_1 \,  I_1(z)\bar I_1(\zb)  + X_2 \, I_2(z)\bar I_2(\zb) \,, \label{X_decomp}
\end{align}
where we have renamed $X_{11}$ to $X_1$ and $X_{22}$ to $X_2$. These two constants are closely related to the OPE structure constants:
\begin{equation} \label{eq:X-OPE}
  \begin{aligned}
    X_1 & = A\ C\left(\Phi_{12},\Phi_{\alpha_1,\alphab_1}, \Phi_{\alpha_1 - \frac{\alpha_+}{2},\alphab_1- \frac{\alpha_+}{2}}\right)
    C\left(\Phi_{\alpha_3,\alphab_3}, \Phi_{\alpha_4,\alphab_4}, \Phi_{\alpha_1 - \frac{\alpha_+}{2},\alphab_1- \frac{\alpha_+}{2}}\right) \,,\\
    X_2 & = A\  C\left(\Phi_{12},\Phi_{\alpha_1,\alphab_1}, \Phi_{\alpha_1 + \frac{\alpha_+}{2},\alphab_1+ \frac{\alpha_+}{2}}\right)
    C\left(\Phi_{\alpha_3,\alphab_3}, \Phi_{\alpha_4,\alphab_4}, \Phi_{\alpha_1 + \frac{\alpha_+}{2},\alphab_1+ \frac{\alpha_+}{2}}\right) \,,
  \end{aligned}
\end{equation}
where $A$ is an unknown constant.

It follows that the monodromy between $\Phi_{12}$ and an arbitrary field $\Phi_{h,\hb}$ can only be $e^{2i\pi \mu} = \pm 1$. Indeed, the monodromy around $z=0$ is  
\begin{align}
  G(z,\zb)  \to   e^{i\pi (\ab-a)} X_1 \,  I_1(z)\bar I_1 (\zb)  +  e^{i\pi (\ab'-a')} X_2 \, I_2(z)\bar I_2 (\zb) \,, 
\end{align}
and generically we expect both $X_1$ and $X_2$ to be non zero. This requires $2\mu \equiv a-\ab \equiv a' - \ab'$ mod $2$, which boils down to\footnotemark \footnotetext{This comes from our choice of working with a diagonal matrix $X$. If one works with an anti-diagonal $X$ instead the conditions becomes $2\mu = a-\ab' = a' - \ab$ mod $2$, or equivalently $a' -\bar{a} \in \Zb$. But as we already mentioned this is a pure matter of convention.} 
\begin{equation} \label{eq:cond-charge}
  \boxed{a - \bar{a} \in \Zb \,,} 
\end{equation}
and therefore $\mu \in \Zb/2$. We prove in the Appendix that this condition has to be obeyed by all primary fields. Let us stress out the meaning of this relation. The decoupling of the null vectors of $\Phi_{12}$ puts strong constraints on the spectrum of the CFT : the conformal dimensions of the primary fields $\Phi_{h,\hb}$ must be of the form
\begin{align}
  \boxed{
    h= \alpha(\alpha -2 \alpha_0), \qquad \bar{h} = \beta (\beta - 2\alpha_0)\,,
    \qquad \text{with} \qquad
    (\alpha - \beta) \in \frac{\Zb \alpha_-}{2} \,.
  }
\end{align}
While this holds trivially for spinless fields, for which we can choose $\alpha = \beta$, this is a severe constraint for non scalar fields. The full spectra of both the TL and {\On} loop models satisfy this condition, although $\Phi_{12}$ does not belong to the spectrum of {\On}. This is perhaps not so surprising. Indeed, the {\On} loop model contains the field $\Phi_{13}$, and the corresponding differential equation naively yields the constraint : $h(\alpha + \alpha_+) - h(\alphab + \alpha_+) \equiv h(\alpha) - h(\alphab) \equiv h(\alpha - \alpha_+) - h(\alphab - \alpha_+) $ mod $1$, which boils down to $a-\ab \in \Zb$. More generally, we expect the differential equation corresponding to $\Phi_{1k}$ to be consistent with locality as long as $a-\ab$ is integer.  From now on, we assume that all fields obey this constraint, namely $(\alpha - \alphab)\in {\Zb\alpha_-}/{2}$ [or $(\alpha' - \alphab)\in {\Zb\alpha_-}/{2}$]. 

As a side remark, if a CFT contains both $\Phi_{12}$ and $\Phi_{21}$ (and if their null vectors decouple), fields in the spectrum must belong to one of the two following families:
\begin{itemize}
\item Scalar fields $\alpha = \alphab$, with $\alpha$ arbitrary. This is the case for all fields in Liouville.
\item Spinful fields $\alpha = \alpha_{m,e}$, $\alphab = \alpha_{-m,e}$, with $e,m \in \Zb/2$, with spin $s = -m\,e \in \Zb/4$.
\end{itemize} 
While a continuum of scalar fields is possible, only a discrete set of spinful fields is allowed in that case. Let us stress that the spin fields of the loop models under consideration do not fall into this classification, since these models contain fields with $m= - \bar{m} \neq 0$ and $e \notin \Zb/2$. This is allowed because $\Phi_{21}$ is absent from the spectrum of the loop models. 

\subsubsection{Solution of the monodromy problem}
Let us go back to the four-point function $G(z,\zb)$ of \eqref{X_decomp}. Repeating the same arguments around $z=1$  yields a decomposition on the $\{J_1(z), J_2(z) \}$ basis : 
\begin{align}
  G(z,\zb) = Y_1 \, J_1(z)\bar J_1(\zb)  + Y_2 \, J_1(z)\bar J_2(\zb) \,.
\end{align}
Compatibility with the change of bases yields two consistency relations :
\begin{equation} \label{X12}
  \begin{aligned}
    M_{11} \bar{M}_{12} \, X_1 +  M_{21} \bar{M}_{22} \, X_2 = 0 \,, \\
    M_{12} \bar{M}_{11} \, X_1 +  M_{22} \bar{M}_{21} \, X_2 = 0 \,.
  \end{aligned}
\end{equation}
For the the two linear equations \eqref{X12} to have a non-zero solution $(X_1,X_2)$ (and therefore a non-vanishing four-point function), we must have
\begin{align}
  \det \left( \begin{array}{cc}  M_{11} \bar{M}_{12} &  M_{21} \bar{M}_{22} \\ 
      M_{12} \bar{M}_{11} & M_{22} \bar{M}_{21} \end{array} \right) =0 
  \qquad\Leftrightarrow\qquad
  \frac{M_{11}M_{22}}{M_{12}M_{21}} = \frac{\bar{M}_{11}\bar{M}_{22}}{\bar{M}_{12}\bar{M}_{21}} \,,
\end{align}
or more explicitly 
\begin{align}
  \frac{ \sin \pi \left( \frac{\rho+a-b-d}{2} \right)\sin \pi \left( \frac{\rho+b-a-d}{2} \right) }
  {\sin \pi \left( \frac{\rho+d-a-b}{2} \right)\sin \pi \left( \frac{3\rho-d-a-b}{2} \right)} 
  = \frac{ \sin \pi \big( \frac{\rho+\ab-\bb-\db}{2} \big)\sin \pi \big( \frac{\rho+\bb-\ab-\db}{2} \big) }
  {\sin \pi \big( \frac{\rho+\db-\ab-\bb}{2} \big)\sin \pi \big( \frac{3\rho-\db-\ab-\bb}{2} \big)} \,,
\end{align}
where $a-\ab$, $b-\bb$ and $d-\db$ are all integers. This constraint is manifestly symmetric under exchange of $a$ and $b$, but not under exchange of $a$ and $d$. This is because our analysis has singled out the relation between the conformal blocks around $z=0$ and $z=1$. Repeating the same steps around $z=\infty$, we would end up with two more relations: one symmetric under $a \leftrightarrow d$ and another symmetric under $b \leftrightarrow d$. When $a,b$ and $d$ are generic, all these constraints boil down to 
\begin{equation}
  (\ab+\bb +\db) \equiv (a+b+d) \mod 2 \,.
\end{equation}
The solution can then be written as
\begin{align}
  \frac{X_1}{X_2} = - \frac{M_{21} \bar{M}_{22}}{M_{11} \bar{M}_{12}} 
  = - \frac{M_{22} \bar{M}_{21}}{M_{12} \bar{M}_{11} } \,, \label{X1overX2}
\end{align}
or in a more symmetric fashion
\begin{align}
\left(\frac{X_1}{X_2}\right)^2 =  \frac{M_{22}M_{21} }{M_{11} M_{12}} \, \frac{\bar{M}_{22} \bar{M}_{21}}{\bar{M}_{11}\bar{M}_{12}  } \,.
\end{align}
This leads to
\begin{align} \label{eq:X-ratio}
\left( \frac{X_1}{X_2} \right)^2 & =    \frac{\gamma \left(\frac{a-a'}{2}\right)}{\gamma \left(\frac{a'-a}{2} \right)} \frac{\gamma \left( \frac{d-a-b'+\rho}{2} \right)\gamma \left( \frac{d-a-b+\rho}{2} \right)}{\gamma \left( \frac{d-a'-b'+\rho}{2} \right)\gamma \left( \frac{d-a'-b+\rho}{2} \right)} \times \left( \alpha_i \to \alphab_i \right) \,.
\end{align}
This expression is invariant under $b \leftrightarrow b'$, $d \leftrightarrow d'$ or $b \leftrightarrow d$, as it should be. In particular, taking $b = \bb = -\rho$ and $(d,\db) = (a,\ab)$ we recover equation \eqref{eq:C5}, obtained in the framework of the Coulomb gas. 

\subsubsection{OPE coefficients}
Using~\eqref{eq:C5}, \eqref{eq:X-OPE} and \eqref{eq:X-ratio}, we can extract a ratio of OPE coefficients, and if we denote $\Phi_2=\Phi_{\alpha_3,\alphab_3}$ and $\Phi_3=\Phi_{\alpha_4,\alphab_4}$, we get
\begin{equation} \label{eq:C6}
  \boxmath{\begin{aligned}
      &\left[\frac{C\left(\Phi_{\alpha_1 - \frac{\alpha_+}{2},\alphab_1- \frac{\alpha_+}{2}}, \Phi_2, \Phi_3 \right)}
        {C\left(\Phi_{\alpha_1 + \frac{\alpha_+}{2},\alphab_1+  \frac{\alpha_+}{2}}, \Phi_2, \Phi_3\right) }\right]^2 \\
      & \qquad = \sqrt{\frac{\gamma \left(\frac{a-a'}{2}\right)\gamma(1+a)}{\gamma \left(\frac{a'-a}{2} \right)\gamma(1+a')}}
      \ \frac{\gamma \left( \frac{d-a-b'+\rho}{2} \right)\gamma \left( \frac{d-a-b+\rho}{2} \right)}{\gamma \left( \frac{d-a'-b'+\rho}{2} \right)\gamma \left( \frac{d-a'-b+\rho}{2} \right)} \times \left( \alpha_i \to \alphab_i \right) \,.
    \end{aligned}}
\end{equation}
This formula immediately applies to the loop model CFT, since the latter satisfies the assumptions made in the beginning of this section. Like in~\eqref{eq:C5}, although it is valid in the presence of electromagnetic operators $\W_{me}$, it only gives a recursion relation on the electric charge.

Equation~\eqref{eq:C6} is a functional relation satisfied by the OPE coefficients. One can check that the RHS of~\eqref{eq:C6} is what one would get if one replaced $C( \dots\Phi_{\alpha_i,\bar\alpha_i}\dots)$ by $\sqrt{C_{\cal L}( \dots\V_{\alpha_i}\dots)C_{\cal L}( \dots\V_{\alphab_i}\dots)}$. However, unlike the Liouville theory where two functional relations (coming from the degeneracy of $\Phi_{12}$ and $\Phi_{21}$) determine uniquely the solution $C_{\cal L}$, in our situation the field $\Phi_{21}$ is absent from the spectrum, and we only have one functional relation, which only determines the OPE coefficients up to a periodic factor.

\subsubsection{Application to the loop model} When specialised to the TL loop model, \eqref{eq:C6} yields, in particular, the ratio
\begin{align}
  \frac{C(\W_{10},\W_{10},\W_{12})}{C(\W_{10},\W_{10},\W_{10})}
  &= \left[
    \frac{\gamma(1-\rho)\gamma(-1-\rho)}
    {\gamma(-1+2\rho)\gamma(1+2\rho)\gamma(-1+\rho)\gamma(1+\rho)}
  \right]^{1/4} \nn \\
  & \qquad \times 
  \left[ \frac{\gamma^3(\rho/2)\gamma(-1+\rho/2)}{\gamma^3(-\rho/2)\gamma(-1-\rho/2)} \right]^{1/2} \,,
  \label{eq:C-theo4}
\end{align}
which is checked numerically in~\secref{num}. Besides the precise expression of the ratio, it is important to notice the non-vanishing of the numerator:
\begin{equation}
  C(\W_{10}, \W_{10}, \W_{12}) \neq 0
  \qquad \text{for} \quad 0<n<2 \,.
\end{equation}
Indeed, since $\W_{12}$ has conformal dimension $h_{12}$ in the holomorphic sector, one could have expected that it obeys fusion rules of the form $\W_{12} \times \V_{\alpha,\alphab} \to \V_{\beta,\betab}$, with $\beta = \alpha \pm \alpha_+/2$, whereas this rule is clearly violated by $C(\W_{10}, \W_{10}, \W_{12})$. As we shall see in~\secref{LCFT}, this is due to the presence of a logarithmic partner for $\W_{10}$.

\subsection{Numerical checks}
\label{sec:num}

Using a numerical algorithm based on transfer-matrix diagonalisation (see~\appref{app-num}), we have computed numerically the OPE coefficients in the TL loop model at the point $n=\nt$, for $0<n<2$. We can then compare these data to our predictions~\eqref{eq:C3}, \eqref{eq:C4}, \eqref{eq:C5} and , \eqref{eq:C6}, in particular in the presence of watermelon operators $\W_{me}$.

\subsubsection{The OPE coefficient~$C(\W_{10}, \W_{10}, \V_{13})$} This OPE coefficient involves two spinless operators together with a degenerate operator of the form $\V_{1,2k-1}$, and thus one expects it to be given by ~\eqref{eq:C1}:
\begin{equation} \label{eq:C-theo1}
  C(\W_{10}, \W_{10}, \V_{13}) = \sqrt{\frac{\gamma^3(\rho)\gamma(2-2\rho)\gamma(2-3\rho)}{\gamma(2\rho)}}
  \ \gamma(\rho+1) \gamma(\rho-1) \,.
\end{equation}
The agreement with numerical data on cylinders of circumference $L=10$ sites to $L=18$ sites is good, as shown in~\figref{C_w10_w10_v13}.

\subsubsection{The OPE coefficient~$C(\W_{11}, \W_{11}, \V_{13})$} In this case we use the prediction of~\eqref{eq:C3} with $k=2$:
\begin{equation} \label{eq:C-reg}
  C(\V_{\alpha,\alphab}, \V_{\alpha,\alphab}, \V_{13}) =
  \sqrt{\frac{\gamma^3(\rho)\gamma(2-2\rho)\gamma(2-3\rho)}{\gamma(2\rho)}
  \gamma(1+a)\gamma(1+a')\gamma(1+\ab)\gamma(1+\ab')} \,,
\end{equation}
where
\begin{equation}
  a = 2\alpha_+ \alpha \,, \quad
  \ab = 2\alpha_+ \alphab \,, \quad
  a' = 2\alpha_+ (2\alpha_0-\alpha) \,, \quad
  \ab' = 2\alpha_+ (2\alpha_0-\alphab) \,.
\end{equation}
The expression~\eqref{eq:C-reg} is ill-defined when $\alpha \to \alpha_{11}$ and $\alphab \to \alpha_{-1,1}$, and hence we need a regularisation procedure to extract $C(\W_{11}, \W_{11}, \V_{13})$.
Indeed, if we set $\alpha = \alpha_{11} + \epsilon$ and $\alphab = \alpha_{-1,1} + \bar\epsilon$ with $\epsilon,\bar\epsilon \to 0$, we get
$
\gamma(1+a)\gamma(1+\ab) \sim \epsilon/\bar\epsilon
$.
The condition~\eqref{eq:cond-charge} imposed by locality of the four-point functions yields a unique possible relation $\epsilon=\bar\epsilon$, which finally leads to
\begin{equation} \label{eq:C-theo2}
  C(\W_{11}, \W_{11}, \V_{13}) = \sqrt{\frac{\gamma^3(\rho) \rho(2-3\rho)\gamma(1+2\rho)}{\gamma(2\rho)}} \,.
\end{equation}
The agreement with numerical data is good, as shown in~\figref{C_w11_w11_v13}. The numerical data show an apparent divergence at $n=1$, but the height of the corresponding peak decays as $L$ grows. We interpret this peak as a numerical artefact arising from a level crossing in the zero-leg sector with zero momentum, between the states $\ket{\V_{13}}$ and $L_{-2} \bar L_{-2}\ket 0$.

\subsubsection{The ratio $C(\V_{12},\W_{11},\W_{12})/C(\V_{12},\W_{11},\W_{10})$} This ratio is predicted by~\eqref{eq:C5}, using the same regularisation procedure for $\W_{11}$ as above:
\begin{equation} \label{eq:C-theo3}
  \frac{C(\V_{12},\W_{11},\W_{12})}{C(\V_{12},\W_{11},\W_{10})}
  = \left[
    \frac{\gamma(-1+2\rho)\gamma(1+2\rho)\gamma(1-\rho)\gamma(-1-\rho)}
    {\gamma(-1+\rho)\gamma(1+\rho)}
  \right]^{1/4} \,.
\end{equation}
The agreement with numerical data is good, as shown in~\figref{ratio_C_v12_w11_w12_C_v12_w11_w10}. In this case, although $h_{12}+h_{-1,2}< 2h_{10}+4$ on the whole interval $0<n<2$, in finite size some level crossing occursat $n=n^*$ between the lattice representatives of $\ket{\W_{12}}$ and some fourth-level descendents of $\ket{\W_{10}}$. This affects the numerical data close to $n=0$, but the crossing location $n^* \to 0$ as the system size grows.

\subsubsection{The ratio $C(\W_{10},\W_{10},\W_{12})/C(\W_{10},\W_{10},\W_{10})$} This ratio is predicted by~\eqref{eq:C-theo4}. The agreement with numerical data is good, as shown in~\figref{ratio_C_w10_w10_w12_C_w10_w10_w10}, although the data still suffer from the finite-size level crossing of $\ket{\W_{12}}$ with the descendents of $\ket{\W_{10}}$ discussed above.

\begin{figure}
  \begin{center}
    \includegraphics{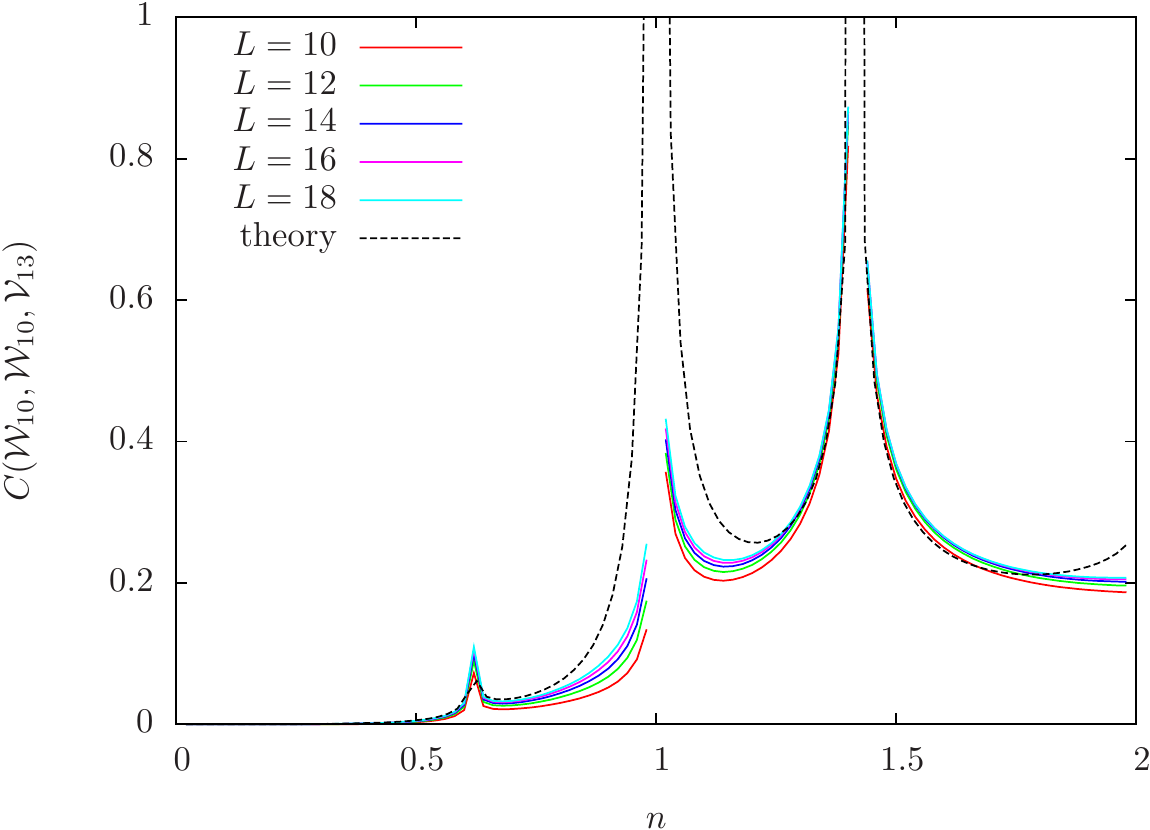}
  \end{center}
  \caption{Numerical data for the OPE coefficient $C(\W_{10},\W_{10},\V_{13})$, compared to the expression~\eqref{eq:C-theo1}.}
  \label{fig:C_w10_w10_v13}
\end{figure}

\begin{figure}
  \begin{center}
    \includegraphics{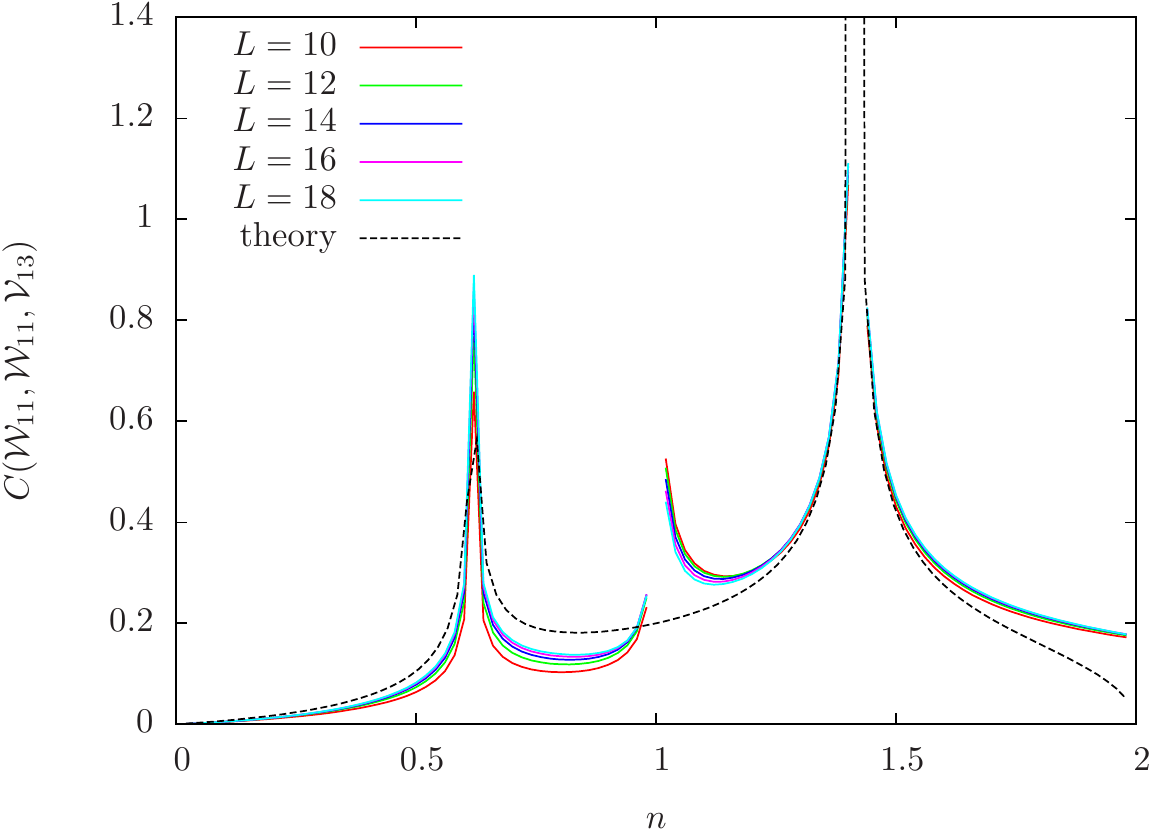}
  \end{center}
  \caption{Numerical data for the OPE coefficient $C(\W_{11},\W_{11},\V_{13})$, compared to the expression~\eqref{eq:C-theo2}.}
  \label{fig:C_w11_w11_v13}
\end{figure}

\begin{figure}
  \begin{center}
    \includegraphics{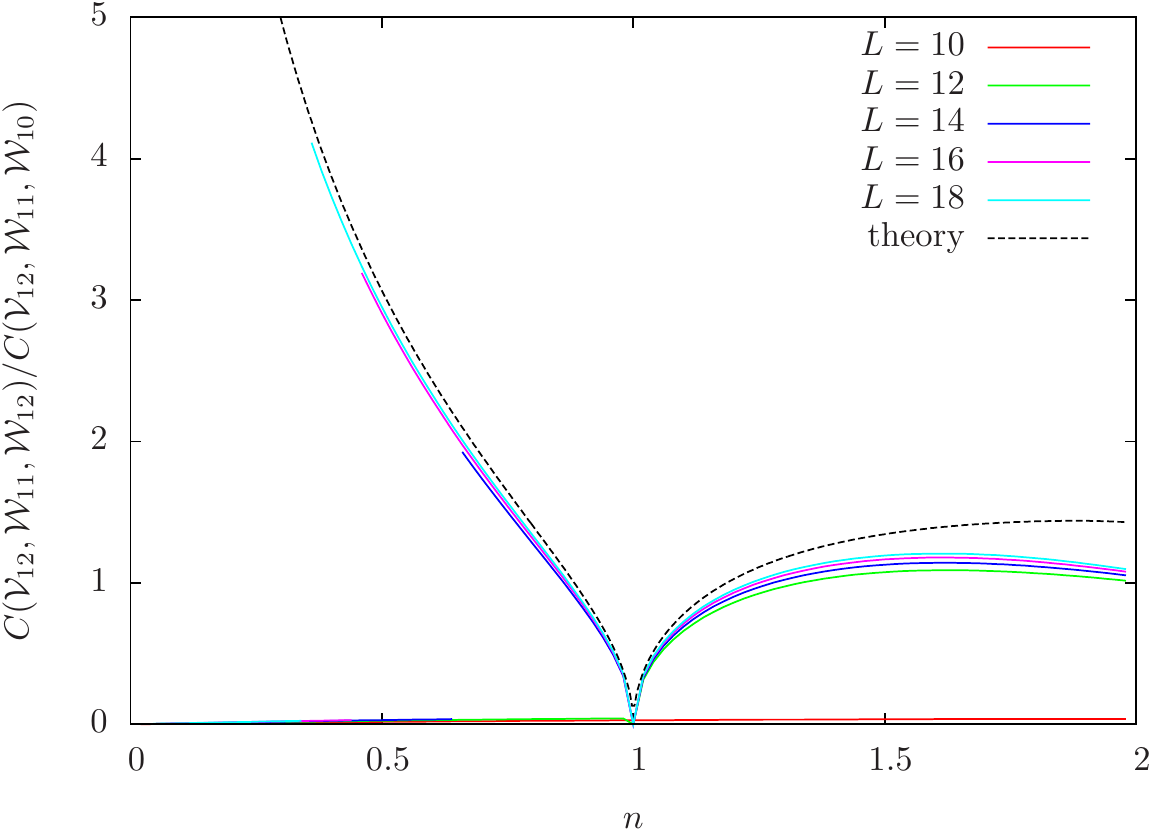}
  \end{center}
  \caption{Numerical data for the ratio $C(\V_{12},\W_{11},\W_{12})/C(\V_{12},\W_{11},\W_{10})$, compared to the expression~\eqref{eq:C-theo3}.}
  \label{fig:ratio_C_v12_w11_w12_C_v12_w11_w10}
\end{figure}

\begin{figure}
  \begin{center}
    \includegraphics{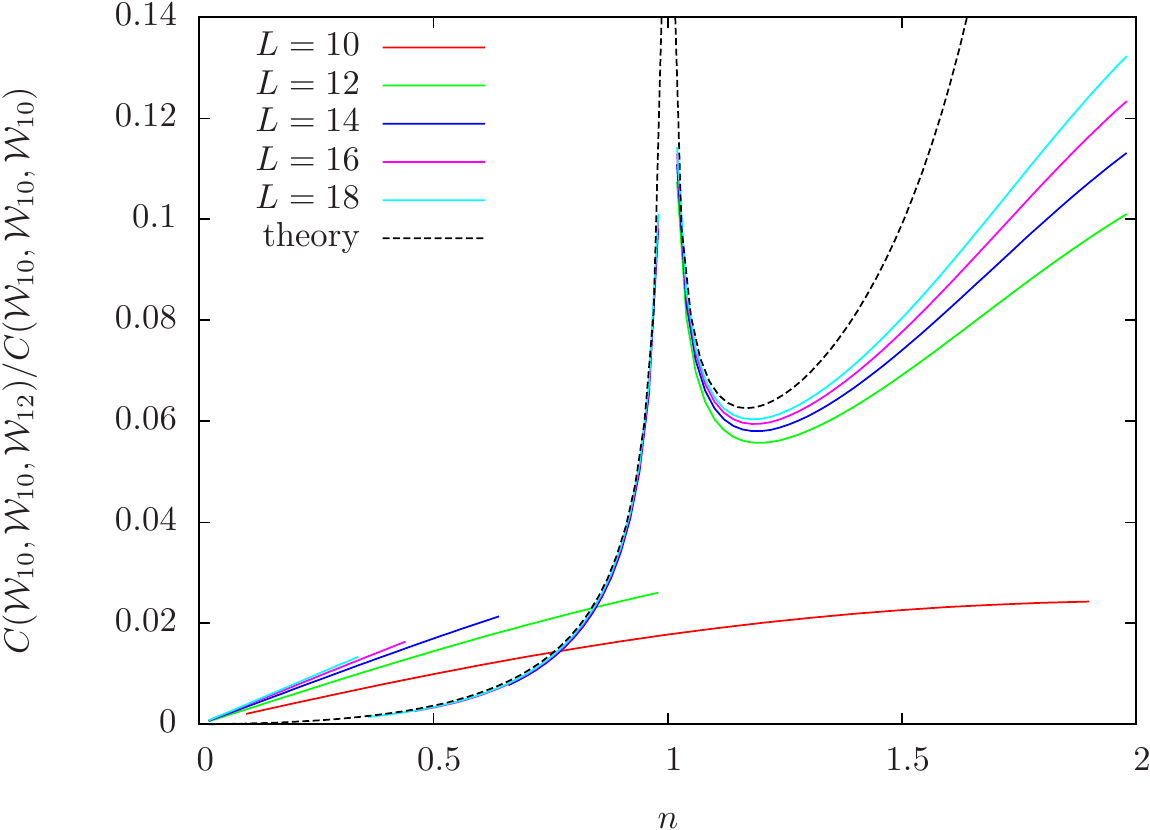}
  \end{center}
  \caption{Numerical data for the ratio $C(\W_{10},\W_{10},\W_{12})/C(\W_{10},\W_{10},\W_{10})$, compared to the expression~\eqref{eq:C-theo4}.}
  \label{fig:ratio_C_w10_w10_w12_C_w10_w10_w10}
\end{figure}

\section{Indecomposability in the bulk theory of generic loop models}
\label{sec:LCFT}

\subsection{The periodic Temperley-Lieb algebra}

\begin{figure}
  \begin{center}
    \includegraphics[scale=0.8]{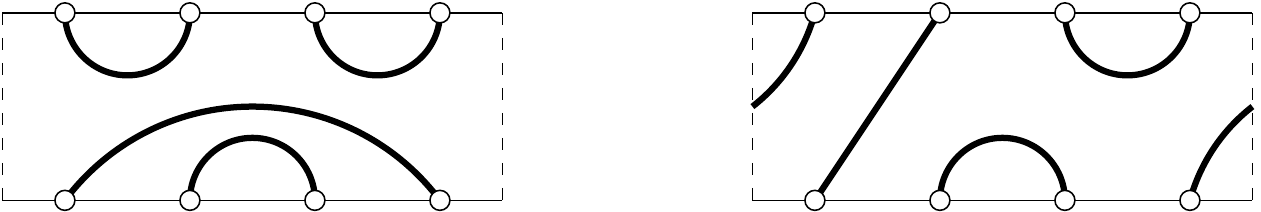}
  \end{center}
  \caption{Two distinct diagrams of $\PTL_{L=4}$ (see Definitions~\ref{def:diag}, \ref{def:PTL} and \ref{def:PTL2}). The top and bottom boundaries of the annulus are drawn as full lines, whereas periodic boundary conditions are assumed across the dotted lines. The parity of the left diagram is $\epsilon=1$, and the parity of the right diagram is $\epsilon=-1$ (see Definition~\ref{def:eps}).}
  \label{fig:diag}
\end{figure}

Following the discussion in~\cite{Azat14}, let us first describe the diagrams defining the algebra, and then give the algebraic rules between generators. We consider an annulus with $L$ marked points on the inner boundary and $L$ marked points on the outer boundary ($L$ is an even integer). Actually, we shall depict the annulus as a rectangle with periodic conditions across the vertical sides, and draw the inner (resp. outer) boundary at the bottom (resp. the top) of the rectangle: see \figref{diag}.

\begin{mydef} \label{def:diag}
  A {diagram} $w$ is a set of lines inside the annulus, connecting the $2L$ marked points with no intersections. The {arches} (resp. the legs) of $w$ are the lines connecting two points on the same boundary (resp. opposite boundaries).
\end{mydef}

We can consider diagrams as basis elements of an associative algebra:

\begin{mydef} \label{def:PTL}
  The periodic Temperley-Lieb algebra $\PTL_L$ is the algebra generated by the diagrams on an annulus with $2L$ marked points, where any two diagrams $(w_1,w_2)$ are identified (up to a multiplicative factor) according to the following rules:
  \begin{itemize}
  \item if $w_2$ can be obtained from $w_1$ by a continuous deformation of the lines, then $w_1=w_2$;
  \item if $w_2$ can be obtained from $w_1$ by a rotation of angle $2\pi$ of one of the boundaries, then $w_1=w_2$;
  \item if $w_2$ can be obtained from $w_1$ by removing a closed, contractible loop, then $w_1 = n\ w_2$;
  \item if $w_2$ can be obtained from $w_1$ by removing a closed, non-contractible loop, then $w_1 = \wt n\ w_2$;
  \end{itemize}
  and the product $w_1 w_2$ of two diagrams is obtained by gluing the bottom (or inner) boundary of $w_1$ onto the top (or outer) boundary of $w_2$.
\end{mydef}

Equivalently, we can formulate the above rules in terms of generators, which act by convention from bottom to top (or inner to outer) boundary:

\begin{mydef} \label{def:PTL2}
The periodic Temperley-Lieb algebra $\PTL_L$ is the algebra generated by $e_1, \dots e_L$ and the right-shift operator $\tau$, subject to the relations for $1 \leq j,k \leq L$:
\begin{equation}
  \begin{aligned}
    & e_j e_{j \pm 1} e_j = e_j \,, \\
    & e_j^2 = n \ e_j \,, \\
    & e_j e_k = e_k e_j \qquad \text{if} \quad 1<|j-k|<L-1 \,, \\
    & \tau^L = \id \,, \\
    & \tau e_j \tau^{-1} = e_{j+1} \,, \\
    & e_1 e_2 \dots e_{L-1} = \tau^2 e_{L-1} \,, \\
    & (e_2 e_4 \dots e_L) (e_1 e_3 \dots e_{L-1}) = \nt \ \tau (e_1 e_3 \dots e_{L-1}) \,,
  \end{aligned}
\end{equation}
where $e_{L+1} \equiv e_1$.
\end{mydef}

Finally, let us define a notion of parity:
\begin{mydef} \label{def:eps}
  The parity of a diagram $w$ is $\epsilon(w)=+1$ (resp. $\epsilon(w)=-1$) if the lines of $w$ intersect the vertical boundary an even (resp. odd) total number of times.
\end{mydef}
With this definition, we see readily that $\epsilon(w_1 w_2) = \epsilon(w_1)\epsilon(w_2)$.

\subsection{Representations of $\PTL_L$}

\begin{figure}
  \begin{center}
    \begin{tabular}{>{\centering\arraybackslash}m{0.5cm} >{\centering\arraybackslash}m{.7\textwidth}}
      $W_0$: & \includegraphics[scale=0.45]{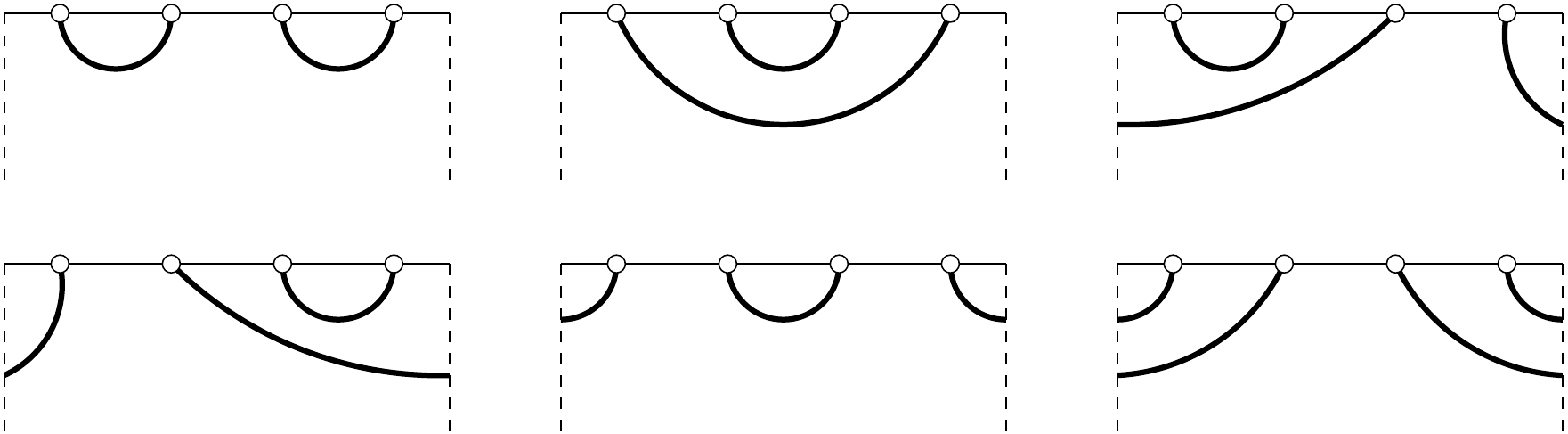}
      \\ \\ \\
      $W_1$: & \scalebox{0.45}{\input{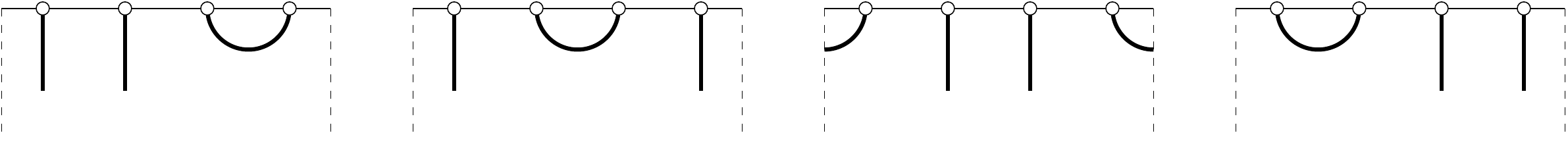_t}}
      \\ \\ \\
      $W_2$: & \scalebox{0.45}{\input{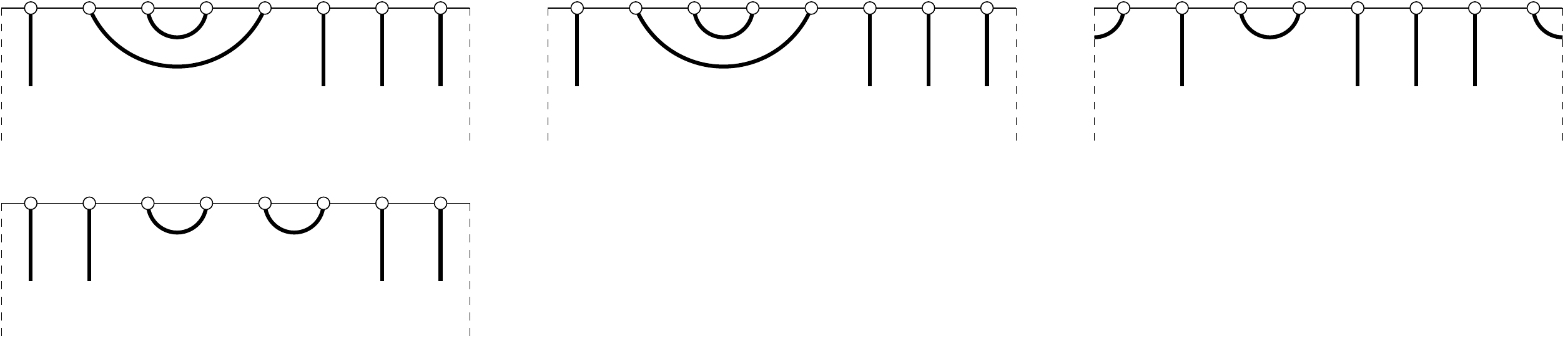_t}}
    \end{tabular}
  \end{center}
  \caption{The basis states of $W_0$, $W_1$  for $L=4$ and $W_2$ for $L=8$.}
  \label{fig:state}
\end{figure}

\subsubsection{The spaces $W_\ell$ and the TL scalar product}
Under the action of the above algebra, the bottom arches are left unchanged, while the number of legs never increases. This suggests to define the action of $\PTL_L$ on the ``top-half'' of the diagram only. For this, we now consider a circle with $L$ marked points, with a cutline connecting the origin to the exterior of the circle. Equivalently, we can place the points on a horizontal segment, with periodic conditions across a vertical cutline.

\begin{mydef}
  A state $a$ (also denoted $\ket{a}$) is a set of non-intersecting lines drawn in the lower-half plane, with some lines (arches) connecting $(L-2\ell)$ marked points among themselves, and $2\ell$ vertical labelled lines (legs) attached to the remaining marked points, so that the labels form a cyclic permutation of $(1, 2, \dots 2\ell)$, and a leg attached to the point $j \in \{1, \dots, L\}$ carries a label with the same parity as $j$.
\end{mydef}
In particular, the top-half of a diagram $w$, consisting in the arches on the top boundary of $w$ together with the legs of $w$, defines a state. If we fix the number of legs, we get a representation of $\PTL_L$:
\begin{mydef}
  The representation $W_\ell$ is defined as the geometric action of $\PTL_L$ on the vector space generated by the states with $2\ell$ legs, where the contraction of any two legs gives zero. 
\end{mydef}

\begin{mydef}
The space $W=W_0 \oplus W_1 \oplus W_2 \oplus \dots$ is equipped with a bilinear, symmetric form $\aver{ \cdot\ , \ \cdot}$, defined as follows. Let $a$ and $b$ be two states with $\ell_a$ and $\ell_b$ legs, respectively. Consider the set of lines $g(a,b)$ obtained by gluing the reflection of $a$ around a horizontal axis with $b$, and set
\begin{equation}
  \aver{a,b} = \begin{cases}
    0 & \text{if any two legs of $a$ are joined in $g(a,b)$,} \\
    0 & \text{if any two legs of $b$ are joined in $g(a,b)$,} \\
    n^{N_{\rm c}(a,b)} \ \nt^{N_{\rm nc}(a,b)} & \text{otherwise,}
  \end{cases}
\end{equation}
where $N_{\rm c}(a,b)$ (resp. $N_{\rm nc}(a,b)$) is the number of contractible (resp. non-contractible) loops in $g(a,b)$. In particular, if $\ell_a \neq \ell_b$ then $\aver{a,b}=0$ automatically.
\end{mydef}

\begin{prop}
  \label{prop:self-adj}
  The generators $e_j$ are self-adjoint for the bilinear form $\aver{ \cdot\ , \ \cdot}$ on the space $W$:
  \begin{equation}
    \forall (a,b) \in W^2 \,,
    \qquad  \aver{e_j a, b} = \aver{a, e_j b} \,.
  \end{equation}
\end{prop}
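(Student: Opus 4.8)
The plan is to verify self-adjointness diagrammatically, exploiting the fact that both the generators $e_j$ and the bilinear form are defined purely in terms of gluing and counting loops. Since $\aver{\cdot\,,\cdot}$ is bilinear, it suffices to check the identity on basis states $a,b$ of $W$, and moreover only on states with the same number of legs, $\ell_a=\ell_b=\ell$ (otherwise both sides vanish automatically). So I would fix $\ell$ and two states $a,b \in W_\ell$, and a generator $e_j$ with $1\le j\le L$.

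The key observation is that $\aver{e_j a, b}$ is computed from the diagram $g(e_j a, b)$: glue the horizontal reflection of $e_j a$ on top of $b$. But reflecting $e_j a$ is the same as reflecting $a$ and then capping the marked points $j$ and $j+1$ on the reflected side with the reflected $e_j$-arc — and because $e_j$ is manifestly symmetric under the horizontal reflection (it is a single turn-back connecting $j,j+1$ on the top to $j,j+1$ on the bottom, which is its own mirror image), this is exactly the diagram obtained by inserting the $e_j$ cup-cap between the reflection of $a$ and $b$. In other words, $g(e_j a,b)$ and $g(a,e_j b)$ are literally the same set of curves in the strip: in both cases one stacks (reflected $a$), then the $e_j$ tangle, then $b$. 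Hence the loop counts $N_{\rm c}$, $N_{\rm nc}$ coincide, and the vanishing conditions (two legs of $a$ joined, or two legs of $b$ joined) also coincide, because whether a leg gets contracted is a property of the glued diagram, which is the same on both sides. Therefore $\aver{e_j a,b}=\aver{a,e_j b}$.

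The one genuine subtlety — which I expect to be the main point requiring care — is the bookkeeping of the leg labels and the cyclic-permutation/parity conventions when $e_j$ acts. Recall $W_\ell$ is not just ``states with $2\ell$ legs'': the legs carry labels forming a cyclic permutation of $(1,\dots,2\ell)$, with a parity constraint tying the label's parity to the position's parity, and acting by $e_j$ may connect two legs (giving $0$, already handled) or may connect a leg to an arch endpoint, relabelling and possibly shifting things by the $\tau^2$-type relations. I would argue that the bilinear form only ever sees whether legs get joined and how many loops form, never the actual label values, so all of this relabelling is invisible to $\aver{\cdot\,,\cdot}$; the labels are a device to make $W_\ell$ a well-defined module, not part of the inner-product data. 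Making this precise may require noting that in $g(a,b)$ a leg of $a$ is joined either to a leg of $b$ (possibly nonzero) or back to a leg of $a$ (forces $0$), and this dichotomy is reflection-symmetric.

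A clean way to organize the whole argument is to first record the trivial observation that the horizontal-reflection (adjoint) map sends the diagram of $e_j$ to itself, equivalently $e_j^\dagger = e_j$ as tangles; then observe that $\aver{a,b}$ depends on $a$ only through the reflected tangle $a^\dagger$ stacked on $b$, so $\aver{e_j a,b}$ depends on $(e_j a)^\dagger = a^\dagger e_j^\dagger = a^\dagger e_j$ stacked on $b$, which is the same stacked diagram as $a^\dagger$ on $e_j b$, i.e. $\aver{a,e_j b}$. The only thing to check beyond this formal manipulation is that the two ``vanishing'' clauses in the definition of $\aver{\cdot\,,\cdot}$ are symmetric under $a \leftrightarrow b$ and compatible with this rewriting, which is immediate since joining two legs of $a$ (resp. of $b$) is a property of the common glued diagram. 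I would also remark that self-adjointness of $\tau$ fails in general (only $e_j$ is claimed self-adjoint), consistent with $\tau$ being a rotation rather than a reflection-symmetric tangle, so the proof genuinely uses the reflection symmetry of the cup-cap $e_j$ and nothing more.
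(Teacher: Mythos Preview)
The paper states this proposition without proof, so there is nothing to compare against. Your diagrammatic argument is the standard one and is correct: since the tangle $e_j$ is invariant under horizontal reflection, the glued pictures $g(e_j a,b)$ and $g(a,e_j b)$ are literally the same diagram, hence the loop counts and the leg-joining patterns agree.

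One point worth tightening: the two sides can vanish by slightly different mechanisms, and your phrasing blurs this. If $e_j$ contracts two legs of $a$ (positions $j,j{+}1$ both carry legs), then $e_j a=0$ \emph{in the module} $W_\ell$ before any gluing, so $\aver{e_j a,b}=0$ trivially; on the other side $e_j b$ may well be nonzero, and $\aver{a,e_j b}=0$ follows instead from the clause ``two legs of $a$ are joined in $g(a,e_j b)$'', since in the common diagram those same two legs are connected by the top arc of $e_j$. The symmetric remark applies with $a\leftrightarrow b$. Once this is said explicitly, the case analysis is complete. Your observation that the leg \emph{labels} never enter the definition of $\aver{\cdot,\cdot}$ is exactly the right way to dispose of the bookkeeping worry.
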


\subsubsection{The map $\vphi$}
Let us now describe a linear map whose properties will allow us to relate the spaces $W_0$ and $W_1$.

\begin{mydef}
  Let $\vphi$ be the linear map defined as:
  \begin{equation} \label{eq:defphi}
    \vphi : \left\{\begin{array}{ccl}
        W_1 &\to& W_0 \\
        \ket{\dots \stackrel{j}{|} \dots \stackrel{k}{|} \dots}
        &\mapsto& \frac{(-1)}{2}^{j+1} \left(
          \ket{\dots \overset{j\dots k}{\cup} \dots} - \ket{\dots \overset{j\dots k}{\ncup} \dots}
        \right)
      \end{array} \right.
  \end{equation}
  where $\overset{j\dots k}{\cup}$ (resp. $\overset{j\dots k}{\ncup}$) means that the points $j$ and $k$ are connected by an arch which intersects the cutline an even (resp. odd) number of times.
\end{mydef}

We see immediately that
\begin{equation}
  e_j \vphi \ket{\dots \stackrel{j}{|} \ \stackrel{j+1}{|} \dots} = \frac{(-1)}{2}^{j+1} (n-\nt)
  \ket{\dots \overset{\ \ j \ j+1}{\cup} \dots}\,.
\end{equation}
Hence, taking all the signs into account, $\vphi$ is a morphism of algebras, up to a parity sign:
\begin{prop}
  \label{prop:phi-morph}
  If $n=\nt$, then
  \begin{equation}
    \forall w \,, \quad \forall a \in W_{1} \,,
  \qquad w\ \vphi \ket{a} = \epsilon(w) \ \vphi w \ket{a} \,.
\end{equation}
\end{prop}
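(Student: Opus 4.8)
\textbf{Proof plan for Proposition~\ref{prop:phi-morph}.}
The plan is to reduce the statement to the single generators $e_j$ and the shift $\tau$, since these generate $\PTL_L$ and $\epsilon$ is multiplicative (so if the relation $w\,\vphi = \epsilon(w)\,\vphi\,w$ holds for two words $w,w'$ it holds for their product). Thus it suffices to check, for every state $\ket{a}\in W_1$ with legs at positions $j_0<k_0$:
\begin{equation}
  e_i\,\vphi\ket{a} = \epsilon(e_i)\,\vphi\,e_i\ket{a} = -\,\vphi\,e_i\ket{a}\,,
  \qquad
  \tau\,\vphi\ket{a} = \epsilon(\tau)\,\vphi\,\tau\ket{a} = -\,\vphi\,\tau\ket{a}\,,
\end{equation}
because $\epsilon(e_i)=-1$ (an $e_i$ diagram has exactly two through-lines hitting... actually two cups, but in the annulus picture $e_i$ turns two legs into an arch and back, contributing an odd crossing count in the relevant comparison) and $\epsilon(\tau)=-1$. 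The first thing I would do is pin down these parities carefully from Definition~\ref{def:eps}: $\tau$ shifts every point by one, so each of the $L$ lines crosses the cutline once more, giving $(-1)^L=+1$ naively — here one must be careful that in $W_1$ only the two legs are genuinely ``through'' and the sign that matters is the one induced on states, which is why the factor $(-1)^{j+1}$ is built into the definition of $\vphi$. So the real content is a bookkeeping of these $(-1)^{j+1}$ prefactors under the geometric moves.

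Next I would treat the generators case by case according to how $e_i$ acts on the legs and arches of $\ket{a}$. \emph{Case 1: $e_i$ acts on two arch-endpoints, or on one arch-endpoint, not touching the legs $j_0,k_0$.} Then $e_i$ simply reconnects arches on both sides of $\vphi$ in the same way, possibly producing a contractible or non-contractible loop with weight $n$ or $\wt n$; since $n=\wt n$ these weights agree, and one checks the $(-1)^{j+1}$ prefactor and the even/odd cutline-crossing label of the new arch transform identically on the two sides, up to the global sign $\epsilon(e_i)=-1$ coming from the one extra strand of $e_i$ that crosses the cutline. \emph{Case 2: $e_i$ caps the two legs together} (i.e. $\{i,i+1\}=\{j_0,k_0\}$ as cyclic neighbours). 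By the relation displayed just before the Proposition, $e_i\vphi\ket{a} = \tfrac{(-1)^{j+1}}{2}(n-\wt n)\ket{\cdots}$, which vanishes when $n=\wt n$; on the other side $e_i\ket{a}=0$ in $W_1$ by definition (contraction of two legs is zero), so $\vphi\,e_i\ket{a}=0$ as well, and both sides are zero. \emph{Case 3: $e_i$ touches exactly one leg}, say at $j_0$, joining it to a neighbouring arch-endpoint; then the leg is displaced to the other end of that arch, and on the $W_0$ side the corresponding cup/ncup endpoint moves identically, again matching up to the parity sign. \emph{Case 4: $\tau$.} One writes out $\tau\ket{a}$ (all labels/positions shift by one cyclically, with the cyclic wrap-around of a leg past the cutline flipping its even/odd arch label and producing the relevant sign), and compares with $\tau$ applied to $\vphi\ket{a}$; the mismatch is exactly the global factor $-1=\epsilon(\tau)$.

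The main obstacle, as usual with periodic TL diagrammatics, is Case~4 together with the precise sign conventions: keeping track of how the prefactor $(-1)^{j+1}$, the even/odd cutline-crossing label of the arch, and the cyclic identification of labels $(1,\dots,2\ell)$ all interact when a leg is dragged across the cutline by $\tau$ or by an $e_i$ acting near the cutline. I would handle this by fixing once and for all a fundamental domain for the cutline and a rule for which representative of each diagram we use, then verifying that $\vphi$ as defined in~\eqref{eq:defphi} is independent of that choice (this is implicitly needed for $\vphi$ to be well-defined on $W_1$), after which the sign computations in all four cases become mechanical. A useful consistency check along the way is the displayed identity $e_j\vphi\ket{\cdots\stackrel{j}{|}\ \stackrel{j+1}{|}\cdots} = \tfrac{(-1)^{j+1}}{2}(n-\wt n)\ket{\cdots}$, which is exactly Case~2 and already exhibits the mechanism ($n=\wt n$) by which the naive failure of $\vphi$ to be an algebra morphism is cured.
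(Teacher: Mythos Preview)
Your reduction to generators and case analysis is the right strategy, and it is essentially what the paper has in mind (the paper does not spell out a proof beyond the displayed identity for $e_j$ acting on two adjacent legs and the phrase ``taking all the signs into account''). However, you have the parity of the Temperley--Lieb generators wrong. From Definition~\ref{def:eps}, $e_i$ for $1\le i\le L-1$ has no line crossing the cutline at all, and $e_L$ has exactly two crossings (the top and bottom arches joining sites $L$ and $1$), so $\epsilon(e_i)=+1$ for \emph{every} $i$. This is used explicitly in the proof of \propref{deg}, where it is stated that ``every term in $H$ has even parity''. Consequently the identity you must verify is $e_i\,\vphi\ket{a} = +\,\vphi\,e_i\ket{a}$, not the relation with a minus sign that you wrote down. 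Your Case~2 is unaffected (both sides vanish when $n=\wt n$), but Cases~1 and~3 as you stated them have the wrong target sign and would not check out.

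Your computation for $\tau$ is also off: only the single line running from bottom site $L$ to top site $1$ crosses the cutline, not all $L$ lines, so $\epsilon(\tau)=-1$ directly, with no ``naive'' cancellation to worry about. The sign $-1$ in $\tau\,\vphi=-\,\vphi\,\tau$ then comes cleanly from the interplay of two effects in~\eqref{eq:defphi}: shifting the leg position $j\mapsto j+1$ flips the prefactor $(-1)^{j+1}$, and when a leg is carried across the cutline the $\cup/\ncup$ labels of the resulting arch swap. Once you correct $\epsilon(e_i)=+1$, the case analysis you outline is exactly the bookkeeping the paper is gesturing at, and it goes through.
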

\begin{prop}
  \label{prop:phi-null-state}
  If $n=\nt$, then
  \begin{equation}
    \forall a \in W_{0} \,,
    \quad \forall b \in W_{1} \,,
    \qquad \aver{a, \vphi(b)} = 0 \,.
  \end{equation}
\end{prop}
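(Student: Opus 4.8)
The plan is to reduce the statement to the one feature that distinguishes the point $n=\nt$: in the bilinear form $\aver{\cdot,\cdot}$ on $W$, contractible and non-contractible loops are weighted equally. First I would dispose of the trivial case: if $a\in W_\ell$ with $\ell\neq 0$, then $\aver{a,\vphi(b)}=0$ automatically, since $\vphi(b)\in W_0$ and the form vanishes between spaces with different numbers of legs. So it suffices to treat $a\in W_0$.

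Next I would unfold the definition of $\vphi$. Fix $b\in W_1$ with its two legs attached to the marked points $j$ and $k$. By \eqref{eq:defphi}, $\vphi(b)=\tfrac{(-1)^{j+1}}{2}\big(\ket{c_+}-\ket{c_-}\big)$, where $c_+,c_-\in W_0$ are the two states obtained from $b$ by deleting the two legs and inserting instead a single arch joining $j$ to $k$ — this new arch crossing the cutline an even number of times in $c_+$ and an odd number of times in $c_-$, all the other arches of $b$ being left untouched. Extending $\aver{a,\cdot}$ by linearity gives $\aver{a,\vphi(b)}=\tfrac{(-1)^{j+1}}{2}\big(\aver{a,c_+}-\aver{a,c_-}\big)$, so the whole statement comes down to proving $\aver{a,c_+}=\aver{a,c_-}$.

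The core of the argument is then geometric. Since $a,c_+,c_-$ all lie in $W_0$, no leg-matching condition can make the form vanish, and $\aver{a,c_\pm}=n^{N_{\rm c}(a,c_\pm)}\,\nt^{N_{\rm nc}(a,c_\pm)}$. I would argue that the glued pictures $g(a,c_+)$ and $g(a,c_-)$ consist of exactly the same arcs except for the one arc realising the $j$–$k$ connection coming from $b$: in $g(a,c_-)$ this arc is dragged once around the annulus relative to its position in $g(a,c_+)$. This move does not change how the arcs are connected to one another, hence it leaves the total number of loops unchanged; it only flips the parity of the number of times the loop $\gamma$ through that arc crosses the cutline, because the rest of $\gamma$ is identical in the two pictures. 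Therefore $\gamma$ is contractible in one of $g(a,c_+),g(a,c_-)$ and non-contractible in the other, every other loop is unchanged, and so $N_{\rm c}+N_{\rm nc}$ takes the same value in both. Imposing $n=\nt$ then yields $\aver{a,c_+}=n^{N_{\rm c}+N_{\rm nc}}=\aver{a,c_-}$, which closes the proof.

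The hard part will be making the geometric claim of the last paragraph airtight, i.e.\ that replacing the even-crossing arch by the odd-crossing one preserves the combinatorial loop structure of $g(a,\cdot)$ while flipping the contractibility of exactly one loop. This is intuitively obvious on an annulus — a simple closed loop is non-contractible precisely when it meets a cutline an odd number of times — but one should phrase it so as to also cover the harmless degenerate situations: that $c_+$ and $c_-$ may coincide as states (in which case $\vphi(b)=0$ and there is nothing to prove), and that other arches of $b$ may lie on the same loop $\gamma$ as the distinguished arc.
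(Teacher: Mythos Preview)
Your argument is correct and is the natural one. The paper actually states this proposition without proof, so there is nothing to compare against; your write-up fills the gap cleanly. The reduction to $\aver{a,c_+}=\aver{a,c_-}$ is exactly the right move, and your geometric justification is sound: the glued diagrams $g(a,c_+)$ and $g(a,c_-)$ share the same pairing data and hence the same loop decomposition, and the parity of cutline crossings of the unique loop through the distinguished $j$--$k$ arc flips while all other loops are untouched, so at $n=\nt$ the two weights agree.

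Two small remarks. First, the opening reduction to $a\in W_0$ is unnecessary, since the proposition already restricts to $a\in W_0$; it does no harm, but you can drop it. Second, the ``degenerate situation'' $c_+=c_-$ that you flag never occurs: states in $W_0$ are specified by a planar pairing together with a parity on each arch, and $c_+,c_-$ differ precisely in the parity of the $j$--$k$ arch, so they are always distinct basis vectors (this is also consistent with the injectivity of $\vphi$ in \propref{phi-inj}). The only genuine point to spell out carefully, as you note, is that both $c_+$ and $c_-$ are honest non-intersecting states: since the legs of $b$ are vertical and hence avoid the cutline, one can route the new arch close to the two legs down to large depth and then close it up either without crossing the cutline or by going once around, in both cases without meeting the remaining arches.
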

In particular, $\vphi(b)$ has zero norm: $\aver{\vphi(b), \vphi(b)}=0$.

Note that a state with no legs can be specified by a planar pairing of the $L$ marked points, with each arch decorated by the parity of its number of intersections with the cutline. In the case $n=\nt$, one may define a representation which ignores the parities:
\begin{mydef}
  A zero-leg reduced state $\ket{\wh a}$ is a planar pairing of the $L$ marked points. When $n=\nt$, the representation $\wh W_0$ is defined as the geometric action of $\PTL_L$ on the vector space generated by the zero-leg reduced states.
\end{mydef}
With some simple counting, one gets the dimensions of the above representations:
\begin{equation} \label{eq:dimW}
  \dim W_\ell = \bino{L}{L/2-\ell} \,,
  \qquad \dim \wh W_0 = \frac{L!}{(L/2)!(L/2+1)!} \,.
\end{equation}
\begin{prop}
  \label{prop:phi-inj}
  If $n=\nt$, then $\vphi$ is injective, and
  \begin{equation} \label{eq:direct-sum}
    W_{0} = \wh W_{0} \oplus \Im(\vphi) \,,
  \qquad \Im(\vphi) \cong W_{1} \,.
  \end{equation}
\end{prop}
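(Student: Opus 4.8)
The plan is to reduce the whole statement to a single spanning lemma, and then let the dimension count in~\eqref{eq:dimW} do the rest. First note that, by \propref{phi-morph}, for $n=\nt$ the subspace $\Im(\vphi)\subset W_0$ is stable under $\PTL_L$ (with $w$ acting on $W_1$ through $\epsilon(w)\,w$), so it is a genuine submodule. I would then claim the \emph{spanning lemma}: the zero-leg reduced states, viewed inside $W_0$ as the planar (disk) pairings equipped with trivial decorations, together with $\Im(\vphi)$, already span $W_0$. Granting this, one has $\dim W_0 \le \dim\wh W_0 + \dim\Im(\vphi)\le \dim\wh W_0 + \dim W_1$; but the identity $\binom{L}{L/2}=\binom{L}{L/2-1}+\frac{L!}{(L/2)!(L/2+1)!}$ (the ballot identity, immediate from~\eqref{eq:dimW}) says the two ends are equal. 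Hence both inequalities are equalities: $\dim\Im(\vphi)=\dim W_1$, so $\vphi$ is injective and $\Im(\vphi)\cong W_1$ (the module structure being the $\epsilon$-twist, via \propref{phi-morph}); and the spanning set is a basis, so the sum is direct, $W_0=\wh W_0\oplus\Im(\vphi)$, with the quotient $W_0/\Im(\vphi)$ carrying the geometric action on the disk pairings that ignores parities, i.e.\ it is the module $\wh W_0$. (To have this as a decomposition of $\PTL_L$-modules and not merely of vector spaces one must additionally know that the complement can be chosen $\PTL_L$-stable; at generic $q$ this follows because $\wh W_0$ and the twist of $W_1$ have no common composition factor, so the short exact sequence $0\to\Im(\vphi)\to W_0\to\wh W_0\to 0$ splits.)

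It remains to prove the spanning lemma, which I would do by induction on the number of wrapping (odd-decorated) arches of a basis state $\ket d$ of $W_0$. If $\ket d$ has no wrapping arch it \emph{is} a disk pairing, so there is nothing to prove. For the inductive step, the elementary relation available is: if $j,k$ are the endpoints of a chosen arch of $d$ and the remaining arches together with legs at $j,k$ form a legal state $\ket b\in W_1$, then by~\eqref{eq:defphi} the difference $\ket{\overset{j\dots k}{\cup}}-\ket{\overset{j\dots k}{\ncup}}$ is (up to a sign and a factor $\tfrac12$) equal to $\vphi(\ket b)\in\Im(\vphi)$, which lets one trade a wrapping arch for a non-wrapping one while keeping all other arches fixed, reducing the wrapping count. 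For the arches not covered directly by this relation one uses that $\Im(\vphi)$ is $\epsilon$-stable under $e_1,\dots,e_L,\tau$: applying a suitable word $w$ in the generators to a known element of $\Im(\vphi)$, and using $w\,\vphi(\ket b)=\epsilon(w)\,\vphi(w\ket b)$ from \propref{phi-morph}, produces the further relations needed to reach $\ket d$ modulo $\Im(\vphi)$ from a combination of states with strictly fewer wrapping arches.

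The main obstacle is exactly this inductive step, and concretely it is the ``awkward'' decorated states: those containing an odd-wrapping arch joining two marked points of the \emph{same} parity (on the cylinder such arches are non-separating and are legal, but their endpoints cannot carry legs of $W_1$, so no element of $\Im(\vphi)$ unwraps them in one move). Dealing with these forces one to propagate the elementary relations through the geometric action, and this is where the hypothesis $n=\nt$ is genuinely used (it is what makes \propref{phi-morph} and \propref{phi-null-state} hold), and where one must check that the geometric action permutes and recombines decorated states without producing spurious cancellations that would make the spanning set too small. Once the spanning lemma is in hand, everything else in \propref{phi-inj} — injectivity of $\vphi$, the isomorphism $\Im(\vphi)\cong W_1$, and the splitting $W_0=\wh W_0\oplus\Im(\vphi)$ — is just bookkeeping against the dimensions~\eqref{eq:dimW}.
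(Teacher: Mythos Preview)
Your overall strategy---prove that $\wh W_0$ together with $\Im(\vphi)$ spans $W_0$, then invoke the dimension identity $\dim W_0=\dim\wh W_0+\dim W_1$ from~\eqref{eq:dimW}---is exactly the paper's approach. The paper carries out the spanning step by the same induction on the number of odd arches, and the bookkeeping you describe is what the paper does at the end.

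Where you diverge from the paper is in the ``main obstacle'' you identify, and here you have invented a difficulty that does not exist. There are \emph{no} arches joining two marked points of the same parity in any legal state of $W_0$. The cleanest way to see this is to lift the state to the universal cover of the cylinder (the lower half-plane with all integer points marked): there the pairing is still non-crossing, so every arch encloses an even number of points and hence has endpoints of opposite parity; since $L$ is even, this parity survives the projection. Equivalently, an arch with even decoration cuts off a disk containing an even number of points, forcing opposite parity; and any odd arch, together with the \emph{long} boundary arc between its endpoints, also bounds a disk containing an even number of points, with the same conclusion. So the ``awkward'' states you worry about are simply absent, and the workaround via the $\PTL_L$-action you sketch is not needed.

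The genuine subtlety in the inductive step is a different one, which you gloss over but the paper makes explicit: one must unwrap the \emph{most external} odd arch. If you pick an inner odd arch $(j,k)$ and replace it by two legs, those legs run to infinity and will cross every odd arch that was nested outside $(j,k)$, so the resulting configuration is not a legal state of $W_1$. Choosing the outermost odd arch avoids this; and because that arch already wraps around the cylinder, its endpoints cannot lie inside any even-arch bubble, so the legs do not cross even arches either. With this single refinement your induction goes through directly and yields $W_0\subseteq\wh W_0+\Im(\vphi)$ in one line per basis state.

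Finally, your remark about needing the short exact sequence to split as $\PTL_L$-modules goes beyond what the proposition asserts (and what the paper proves): in the paper $\wh W_0$ is identified with the subspace spanned by the all-even states, and the decomposition is only claimed at the level of vector spaces.
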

\begin{proof}
  Consider a state $a \in W_{0}$. If $a$ has no arch with odd parity, then $a \in \wh W_{0}$. Otherwise, we consider the most external arch of $a$ with odd parity, and denote by $j<k$ the ends of this arch. We define
  \begin{equation}
    \ket{a} = \ket{\dots \overset{j\dots k}{\ncup} \dots} \,,
    \quad \ket{a'} = \ket{\dots \overset{j\dots k}{\cup} \dots} \,,
    \quad \ket{b} = 2(-1)^j \ \ket{\dots \stackrel{j}{|} \dots \stackrel{k}{|} \dots} \,.
  \end{equation}
  From~\eqref{eq:defphi}, we have
  \begin{equation}
    \ket{a} = \ket{a'} + \vphi\ket{b} \,,
  \end{equation}
  where $a$ has one less arch with odd parity than $a$. By iterating this process,
  we can write $a$ as
  \begin{equation}
    \ket{a} = \ket{a'_k} + \vphi\ket{b_k} \,,
    \qquad
    a'_k \in \wh W_0 \,,
    \quad
    b_k \in W_1 \,.
  \end{equation}
  Hence, we see that
  \begin{equation}
    W_0 \subseteq \wh W_0 + \Im(\vphi) \quad\Rightarrow\quad
    \dim W_0 \leq \dim \wh W_0 + {\rm rk}(\vphi) \,.
  \end{equation}
  But, from~\eqref{eq:dimW}, we have
  \begin{equation}
    \dim W_0 = \dim \wh W_0 + \dim W_1 \,,
  \end{equation}
  which leads to ${\rm rk}(\vphi) \geq \dim W_1$, and proves that
  $\vphi$ is injective.
\end{proof}

\subsubsection{Degeneracies of $H$ between the $W_0$ and $W_1$ sectors}

Let us consider, for $n=\nt$, the Hamiltonian associated to the TL model:
\begin{equation} \label{eq:ham}
  H = -\sum_{j=1}^L e_j \,.
\end{equation}

\begin{prop} \label{prop:deg}
  To any eigenstate $\ket\W \in W_1$ of $H$, one can associate a degenerate eigenstate $\ket\chi = \vphi \ket\W \in W_0$, and $\ket \chi$ is orthogonal to $W$. In particular, $\aver{\chi,\chi}=0$.
\end{prop}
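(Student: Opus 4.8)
The plan is to obtain Proposition~\ref{prop:deg} as a direct corollary of the three structural properties of $\vphi$ already established: the graded morphism property (Prop.~\ref{prop:phi-morph}), the isotropy of $\Im(\vphi)$ (Prop.~\ref{prop:phi-null-state}), and the injectivity of $\vphi$ (Prop.~\ref{prop:phi-inj}). There is essentially no deep step; the only point that requires a moment of care is the parity bookkeeping for the generators $e_j$, in particular for the ``wrapping'' generator $e_L$.

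First I would show that $\ket\chi=\vphi\ket\W$ is an eigenstate of $H$ with the same eigenvalue $\lambda$ as $\ket\W$. Applying Prop.~\ref{prop:phi-morph} with $w=e_j$ gives $e_j\,\vphi\ket\W=\epsilon(e_j)\,\vphi\,e_j\ket\W$, so what is needed is that $\epsilon(e_j)=+1$ for every $j\in\{1,\dots,L\}$. For $j\le L-1$ this is immediate, since the two arches of $e_j$ on points $j,j+1$ and the through-lines on the remaining points can be drawn without meeting the vertical cutline. For $e_L$ (which connects point $L$ to point $1$ on both boundaries), the bottom arch and the top arch each cross the cutline exactly once, so the total number of crossings is $2$ and $\epsilon(e_L)=+1$ as well; equivalently, $\epsilon(e_L)=\epsilon(\tau)\,\epsilon(e_{L-1})\,\epsilon(\tau^{-1})=(-1)(+1)(-1)=+1$ using $\epsilon(\tau)=-1$. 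Hence $e_j\vphi=\vphi e_j$ on $W_1$ for all $j$, so $H\vphi=\vphi H$ there, and therefore $H\ket\chi=\vphi H\ket\W=\lambda\,\vphi\ket\W=\lambda\ket\chi$. By Prop.~\ref{prop:phi-inj} the map $\vphi$ is injective, so $\ket\chi\neq 0$; since $\ket\chi\in W_0$ by construction, it is a genuine eigenstate of $H$ degenerate with $\ket\W\in W_1$.

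It then remains to check that $\ket\chi$ is orthogonal to all of $W=\bigoplus_{\ell\ge 0}W_\ell$. For $a\in W_\ell$ with $\ell\neq 0$ one has $\aver{a,\chi}=0$ automatically, since $\chi$ has no legs and the bilinear form vanishes between sectors with different numbers of legs. For $a\in W_0$, writing $\ket\chi=\vphi\ket\W$ with $\ket\W\in W_1$, Prop.~\ref{prop:phi-null-state} gives $\aver{a,\vphi(\W)}=0$. Thus $\aver{a,\chi}=0$ for every $a\in W$, i.e. $\ket\chi$ is orthogonal to $W$, and in particular, taking $a=\chi$, we get $\aver{\chi,\chi}=0$.

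The only mild obstacle, as noted, is the verification $\epsilon(e_j)=+1$ for all $j$, which underlies the step $H\vphi=\vphi H$; every other ingredient is a one-line consequence of Propositions~\ref{prop:phi-morph}, \ref{prop:phi-null-state} and~\ref{prop:phi-inj}, so this statement is really just the repackaging of those facts into the spectral picture of the Hamiltonian.
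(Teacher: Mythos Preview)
Your proof is correct and follows essentially the same route as the paper: use $\epsilon(e_j)=+1$ together with \propref{phi-morph} to get $H\vphi=\vphi H$ on $W_1$, invoke \propref{phi-inj} for $\ket\chi\neq 0$, and then \propref{phi-null-state} plus the block-orthogonality of $\aver{\cdot,\cdot}$ for the decoupling. The only difference is that you spell out the parity check for $e_L$ explicitly, whereas the paper simply asserts that every term in $H$ has even parity.
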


\begin{proof}
  Consider an eigenstate $\ket\W \in W_1$, with energy $E$:
  \begin{equation}
    H \ket\W = E \ket\W \,.
  \end{equation}
  From \propref{phi-inj}, the state $\ket\chi=\vphi\ket\W \in W_0$ is non-zero. Moreover, since every term in $H$ has even parity, we have from~\propref{phi-morph}:
  \begin{equation}
    H \ket\chi = H \vphi\ket\W = \vphi H \ket\W = E \ket\chi \,,
  \end{equation}
  and hence $\ket\chi$ is also an eigenstate of $H$, with eigenvalue $E$. Note that, from~\propref{phi-null-state}, $\ket\chi$ is orthogonal to the space $W_0$, and therefore it is orthogonal to every state in $W$: we say that $\ket\chi$ is a decoupling state.
\end{proof}

From the expression~\eqref{eq:h} of conformal dimensions, which determine the energies in the continuum limit, we see that, for generic $n$, the above are the only possible degeneracies between two sectors $W_{\ell}$ and $W_{\ell'}$ with $\ell \neq \ell'$. Note that the situation is very different when $g$ is rational (see {\it e.g.} \cite{Azat14}).

\subsubsection{Mixed representation}

Some of the correlation functions in the loop model cannot be expressed in terms of scalar products in $W$, but instead they impose the use of a representation in which the contraction of two legs is allowed.

\begin{mydef} The representation $W_{\leq 1}$ is defined as the geometric action of $\PTL_L$ on the space $W_0 \oplus W_1$, with the following coefficients for contractions:
  \begin{equation} \label{def-ej}
    \begin{aligned}
      &e_j \ket{\dots \overset{j}{|}\ \overset{j+1}{|}\dots}
      = \epsilon_j \ket{\dots \overset{j\ j+1}{\cup} \dots} \,,  \\
      &e_L \ket{\overset{1}{|} \dots \overset{L}{|}}
      = \epsilon_L \ket{\overset{1 \dots L}{\ncup}} \,,
    \end{aligned}
  \end{equation}
  with $\epsilon_j = (-1)^{j+1}$.
\end{mydef}

\begin{mydef}
  The space $W_{\leq 1}$ is equipped with a symmetric, bilinear form $\Aver{\cdot\ ,\ \cdot}$, defined as
  \begin{equation}
    \Aver{a,b} =
    \sigma(a,b)\ \sigma'(a,b)\ n^{N_{\rm c}(a,b)} \ \nt^{N_{\rm nc}(a,b)} \,,
  \end{equation}
  where the notations are the same as in the definition of $\aver{a,b}$, and
  $$
  \sigma(a,b) = \begin{cases}
    (-1)^{p_{ab}+j_a+1} & \text{\begin{tabular}{l}
        if two legs of $a$, connected to the points $j_a<k_a$ \\
        are joined in $g(a,b)$ by crossing the cutline $p_{ab}$ times,
      \end{tabular}} \\
    1^{\phantom{\Big|}} & \text{otherwise,}
  \end{cases}
  $$
  and $\sigma'(a,b)$ is defined similarly in terms of the legs of $b$.
\end{mydef}

\begin{prop}
  The generators $e_j$ are self-adjoint for the bilinear form $\Aver{\cdot\ ,\ \cdot}$ on the space $W_{\leq 1}$:
  \begin{equation}
    \forall (a,b) \in W_{\leq 1}^2 \,,
    \qquad \Aver{e_j a \,, b} = \Aver{a \,, e_j b} \,.
  \end{equation}
\end{prop}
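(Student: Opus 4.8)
The plan is to mirror the structure of the proof of \propref{self-adj}, which establishes self-adjointness of $e_j$ for the ordinary form $\aver{\cdot\,,\cdot}$ on $W$, and to track the extra sign factors $\sigma,\sigma'$ that distinguish $\Aver{\cdot\,,\cdot}$ from $\aver{\cdot\,,\cdot}$. First I would reduce to a single generator $e_j$ and a pair of basis states $a,b \in W_0 \oplus W_1$. The action of $e_j$ on a state either (i) joins two points both carrying arches, (ii) joins a point carrying an arch to a point carrying a leg (which slides the leg), (iii) joins two legs with the contraction coefficient $\epsilon_j$ from \eqref{def-ej}, or (iv) produces a closed loop with weight $n$ (or $\nt$ for the non-contractible case). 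In cases (i), (ii), (iv) the analysis is essentially the same as in \propref{self-adj}, since those moves do not change the leg content in a way that affects $\sigma$ beyond a relabelling that is common to both sides of the equation; the genuinely new case is (iii), where $e_j$ contracts two legs and the factor $\epsilon_j=(-1)^{j+1}$ must be reconciled with the $(-1)^{p_{ab}+j_a+1}$ appearing in $\sigma$.

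The key computation is therefore: fix $a,b$ and compute both $\Aver{e_j a, b}$ and $\Aver{a, e_j b}$ by drawing the glued picture $g(e_j a, b)$ versus $g(a, e_j b)$. Since $e_j$ acts on the "top half" and the bilinear form glues a reflected top half to another top half, inserting $e_j$ on either side produces the \emph{same} planar diagram of loops in the annulus — this is the geometric heart of the argument, identical to \propref{self-adj} — so the loop weights $n^{N_{\rm c}} \nt^{N_{\rm nc}}$ automatically match. What remains is to check that the accompanying sign $\epsilon_j \cdot \sigma(e_j a,b)\sigma'(e_j a,b)$ equals $\epsilon_j \cdot \sigma(a, e_j b)\sigma'(a, e_j b)$, i.e.\ that the product of all $(-1)$-type factors is symmetric under moving $e_j$ across the inner product. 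I would verify this by case analysis on which of the two points $j,j{+}1$ (or $1,L$ for $e_L$) carry legs versus arches in $a$ and in $b$, using the definition of $\sigma$ in terms of the parity $p_{ab}$ of cutline crossings and the index $j_a$ of the leftmost leg involved; the crucial identity to establish is that whenever $e_j$ closes an arch-leg or leg-leg configuration, the change in $p_{ab}$ and in the leftmost-leg index exactly compensates the explicit $\epsilon_j=(-1)^{j+1}$. For $e_L$, the non-contractible arch $\overset{1\dots L}{\ncup}$ contributes an extra crossing, which is precisely what the $\nt$ weight and the parity bookkeeping in $\sigma$ are designed to absorb.

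The main obstacle I expect is the sign bookkeeping in case (iii) combined with the cyclic (periodic) structure: a leg labelled cyclically can be slid around the cutline, and contracting two legs can change the parity $p_{ab}$ by an amount whose sign must be pinned down carefully relative to the convention $\epsilon_j=(-1)^{j+1}$ and the "leftmost leg" rule in $\sigma$. A clean way to organise this is to first prove the identity for $a,b$ with \emph{no} odd-parity arches and all legs in canonical position, then propagate to the general case by the same iteration used in \propref{phi-inj} (peeling off the most external odd-parity arch). I would present the full case-by-case sign check compactly, since each case is a short finite computation, and emphasise the one nontrivial cancellation — the $\epsilon_j$ versus $(-1)^{p_{ab}+j_a+1}$ reconciliation — as the crux of the proof.
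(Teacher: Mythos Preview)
The paper does not actually supply a proof of this proposition: it is stated immediately before \secref{jordan} with no argument, just as \propref{self-adj} (the analogous statement for $\aver{\cdot,\cdot}$ on $W$) is also stated without proof. So there is nothing in the paper to compare your proposal against.

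That said, your plan is a sensible way to establish the result. The geometric observation that $g(e_j a,b)$ and $g(a,e_j b)$ coincide as loop diagrams in the annulus is indeed the core, and it immediately equates the $n^{N_{\rm c}}\nt^{N_{\rm nc}}$ factors. The remaining work is exactly the sign bookkeeping you describe: checking that $\epsilon_j\,\sigma(e_j a,b)\sigma'(e_j a,b)=\sigma(a,e_j b)\sigma'(a,e_j b)\,\epsilon_j$ (with the $\epsilon_j$'s present only when a contraction occurs on the respective side). Two small cautions. First, you refer to ``mirroring the proof of \propref{self-adj}'', but that proof is also omitted in the paper, so you will have to supply that argument from scratch as well rather than cite it. Second, your reduction to states with no odd-parity arches via the iteration of \propref{phi-inj} is not obviously compatible with the sign structure, since the decomposition $\ket{a}=\ket{a'}+\vphi\ket{b}$ mixes $W_0$ and $W_1$ components and the form $\Aver{\cdot,\cdot}$ treats these differently through $\sigma,\sigma'$; it may be cleaner to do the full case analysis directly on arbitrary basis states rather than try to reduce first.
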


\subsubsection{Jordan blocks}
\label{sec:jordan}

Let us set $n=\nt$.
For any pair of degenerate eigenstates $\ket\W$ and $\ket\chi=\vphi\ket\W$ discussed in~\propref{deg}, we shall argue that $H$ has a Jordan block in the representation $W_{\leq 1}$, with matrix:
\begin{equation} \label{eq:jordan}
  \left( \begin{array}{cc} 
      E & 1 \\
      0 & E
    \end{array} \right) \,.
\end{equation}
Let us view $\ket\W$ and $\ket\chi$ as vectors in the representation $W_{\leq 1}$. The action of $H$ on $\ket\chi$ is unchanged, since $\ket\chi$ is in the zero-leg sector. For $\ket\W$, we have to include the zero-leg terms arising after a contraction, which we denote $\ket{\W_c}$:
\begin{equation}
  H \ket\W = E \ket \W + \ket{\W_c} \,.
\end{equation}
Suppose there exists a solution $\ket{\W'}$ of the linear system in $W_0$:
$(H - E) \ket{\W'} = \ket\chi - \ket{\W_c}$. We have checked numerically the existence of a solution\footnote{Note that the signs $\epsilon_j$ in the action of $H$ on $W_{\leq 1}$ are crucial, since the momenta of $\ket\W$ and $\ket\chi=\vphi\ket\W$ differ by $\pi$. If instead we had set $\epsilon_j=1$ (and $\sigma(a,b)=\sigma'(a,b)=1$ in the corresponding scalar product so that $H$ is still Hermitian), then $H$ would have a diagonal form in the block $(\Ket\chi,\Ket\W)$.} in many cases, including all the pairs of degenerate states mentioned in this paper. Note that for any scalar $\lambda$, the change $\ket{\W'} \to \ket{\W'} + \lambda \ket\chi$ does not affect the above linear system, and thus the solution is not unique. We introduce the states
\begin{equation}
  \Ket\chi = \ket\chi \,,
  \qquad \Ket\W = \ket\W + \ket{\W'} \,,
\end{equation}
and we get
\begin{equation}
  H \Ket\chi = E \Ket\chi \,,
  \qquad
  H \Ket\W = E \Ket\W + \Ket\chi \,.
\end{equation}

In summary, given an eigenstate $\ket\W$ with energy $E$ in the two-leg representation $W_1$, we have constructed a pair of states $(\Ket\chi, \Ket\W)$ in the representation $W_{\leq 1}$ so that $H$ has the Jordan form~\eqref{eq:jordan} in ${\rm span}(\Ket\chi, \Ket\W)$, and $\Ket\chi$ is a state in the zero-leg sector of $W_{\leq 1}$ with vanishing norm: $\Aver{\chi,\chi}=0$.

\subsection{Scaling limit}

Throughout this section, we set $n=\nt$.

\subsubsection{Identification of states}

The states $\{ \ket{\V_{1k}} \}$ and $\{ \ket{\W_{me}} \}$ defined by vertex operators in~\secref{vertex} can be identified as the scaling analogs of states in the representations $\{W_\ell \}$. The general idea is that the continuous analog of $\PTL_L$ is made of two copies of the Virasoro algebra: $\Vir \oplus \ol{\Vir}$, and that the discrete quantum numbers of $H$ survive in the scaling limit.

For clarity, we denote $\ket{\Phi}_L$ and $\Ket{\Phi}_L$ a state in $W_\ell$ and $W_{\leq 1}$ respectively, and $\Ket{\Phi}$ the corresponding state in the CFT Hilbert space. We use a similar notation for scalar products.

\begin{itemize}

\item The state $\ket{\V_{1k}}_L$, with $k \geq 1$, belongs to $\wh W_0 \subset W_0$, and it has finite norm (which is set to one by convention), since the scalar product $\aver{\cdot,\cdot}$ is non-degenerate in $\wh W_0$ for generic $n$.

\item The state $\ket{\V_{1k}}_L$, with $k \leq 0$, belongs to $\Im \vphi \subset W_0$, and it totally decouples:
\begin{equation}
  \forall k \leq 0 \,, \quad \forall a \in W_0 \,, \qquad \aver{\V_{1k}, a}_L = 0 \,.
\end{equation}

\item The state $\ket{ \W_{me} }_L$, with $m \geq 1$ and $e \in \Zb$, belongs to $W_m$. 
Recall that the allowed electric charges are of the form $e=p/m$, with $p \in \Zb$, which may
be interpreted as follows. Consider the cyclic translation $R$ of the leg labels by two units:
\begin{equation}
  R \ket{\dots \underset{x_1}{|} \dots \underset{x_2}{|} \dots \underset{x_3}{|} \dots}
   = \ket{\dots \underset{x_1+2}{|} \dots \underset{x_2+2}{|} \dots \underset{x_3+2}{|} \dots} \,,
\end{equation}
where the labels are considered {modulo} $2m$. The action of $H$ on $W_m$ commutes with $R$, and we can label the eigenstates of $H$ according to the eigenvalues $r_q=\exp(2i\pi q/m)$ of $R$, with $q=0, \dots, m-1$. Then the state $\ket{ \W_{me} }_L$ with $e=p/m$ is in the sector $r_q$ such that $p \equiv q \mod m$. Again, since the scalar product $\aver{\cdot,\cdot}$ is non-degenerate in $W_m$, one can set $\aver{\W_{me}, \W_{me}}_L = 1$.

\item For $m=1$ and $k \geq 1$, $\W_{1k}$ is degenerate with the null descendant of $\V_{1k}$ (under the action of $\ol\Vir$) at level $k$, which we denote $\bar\chi_{1k}$. In the representation $W_0 \oplus W_1,$ the two corresponding states $(\ket{\bar\chi_{1k}}_L, \ket{\W_{1k}}_L)$ are related by
\begin{equation}
  \ket{\bar\chi_{1k}}_L = \vphi \ket{\W_{1k}}_L \,,
\end{equation}
and they are independent, orthogonal eigenstates. In contrast, in $W_{\leq 1}$, the corresponding states form a Jordan block, and we use a disctinct notation $(\Ket{\bar\chi_{1k}}_L, \Ket{\W_{1k}}_L)$ for them. The Jordan block~\eqref{eq:jordan} becomes, in the scaling limit:
\begin{equation} \label{eq:jordan2}
  L_0 + \bar L_0 = \left(\begin{array}{cc}
      h_{1k}+h_{1,-k} & 1 \\
      0 & h_{1k}+h_{1,-k}
    \end{array} \right) \,.
\end{equation}
There is a similar relation between $\W_{1,-k}$ and $\chi_{1k}$, the null descendant of $\V_{1k}$ under the action of $\Vir$.

Moreover, as we have argued above, any eigenstate in $W_1$ produces a $2 \times 2$ Jordan block in $W_{\leq 1}$. In particular, since $\vphi$ commutes with the action of $\PTL_L$ (which generates $\Vir \oplus \ol\Vir$ in the scaling limit), we expect that in the scaling limit
\begin{equation}
  \vphi L_{-n_1} \dots L_{-n_p} \ket{\W_{1k}} = L_{-n_1} \dots L_{-n_p} \vphi \ket{\W_{1k}}
  = L_{-n_1} \dots L_{-n_p} \ket{\bar\chi_{1k}}
\end{equation}
and thus $L_{-n_1} \dots L_{-n_p} \Ket{\W_{1k}}$ and $L_{-n_1} \dots L_{-n_p} \Ket{\bar\chi_{1k}}$ will also form a Jordan block.

\item Finally, let us comment on the case of $\W_{10}$. Here, the vertex operator $\V_{10}$ is a null state, and in $W_{\leq 1}$ we have a Jordan block for the pair $(\Ket{\V_{10}}_L, \Ket{\W_{10}}_L)$.
\end{itemize}

\subsubsection{Hilbert space}

A crucial question is then to decide which representations should be included in the Hilbert space, so that the scaling limit results in a consistent CFT, {\it i.e.} with a consistent operator algebra. Indeed, several unequivalent representations correspond to the same spectrum in the continuum limit.

Let us first examine the choice $W=W_0 \oplus W_1 \oplus W_2 \dots$ as the Hilbert space, with the associated scalar product $\aver{\cdot, \cdot}$. In this representation, the zero-norm vectors span the subspace $\Im(\vphi) \subset W_0$, and they actually decouple from the whole vector space $W$. Once they have been removed, we are left with $\wh W= \wh W_0 \oplus W_1 \oplus W_2 \dots$, where the scalar product $\aver{\cdot, \cdot}$ is non-degenerate for generic $n$. Hence, in the corresponding CFT in the scaling limit, the 2- and 3-point functions will have the standard form, and the ordinary differential equations arising from null-vector conditions will hold. These two facts are sufficient to ensure fusion rules of the form
\begin{equation}
  \Phi_{1,m} \times \Phi_{1,n} \to \Phi_{1,|n-m|+1} + \dots + \Phi_{1,n+m-1} \,,
\end{equation}
separately in the holomorphic and anti-holomorphic sectors. However, the results exposed in the end of \secref{funct} show that these rules are not obeyed, e.g. the fusion $\W_{12} \times \W_{10} \to \W_{10}$ is allowed.

This contradiction is resolved if we use instead:
\begin{equation} \label{eq:hilbert}
  \boxmath{\mathcal{H} = W_{\leq 1} \oplus W_2 \oplus W_3 \oplus \dots}
\end{equation}
with the associated scalar product $\Aver{\cdot,\cdot}$. In simple terms, it means we take the same basis of states as in $W$, but we allow the mixing of some states from the zero- and two-leg sectors, both in the action of the algebra and in the scalar product.

In the following, we shall assume that the representation~\eqref{eq:hilbert} is the one that produces a consistent operator algebra in the scaling limit.

\subsubsection{Decoupling states}
\label{sec:decoupling}

We are now ready to discuss which of the zero-norm states actually decouple from the whole Hilbert space $\mathcal{H}$ under the scalar product $\Aver{\cdot, \cdot}$. Let us denote by $A_{1k} \in \Vir$ the combination of generators which produces a null descendant at level $k$ in the module of $\V_{1k}$, and $\bar A_{1k}$ the analogous object in $\ol\Vir$:
\begin{equation}
  \ket{\chi_{1k}} = A_{1k} \ket{\V_{1k}} \,,
  \qquad
  \ket{\bar\chi_{1k}} = \bar A_{1k} \ket{\V_{1k}} \,.
\end{equation}
As explained above, the zero-norm state $\ket{\chi_{1k}}$ (resp. $\ket{\bar\chi_{1k}}$) forms a Jordan block with $\ket{\W_{1,-k}}$ (resp. $\ket{\W_{1k}}$). However, this fact does not tell us anything about the scalar products $\aver{\chi_{1k},\W_{1,-k}}=\aver{\bar\chi_{1k},\W_{1k}}$, and at present we are not in a position to make any statement on the decoupling of the vectors $\ket{\chi_{1k}}$ and $\ket{\bar\chi_{1k}}$. We did check numerically that the scalar products $\Aver{\bar\chi_{1k}, \W_{1k}}_L  = \Aver{\chi_{1k}, \W_{1,-k}}_L$ for $k=0,1,2$ (setting $\chi_{10}=\bar\chi_{10} \equiv \V_{10}$) do not vanish at finite size $L=2, \dots, 12$, but considerable additional work is needed to conclude about the scaling limit of these scalar products.

The operator $\W_{1k}$ (resp. $\W_{1,-k}$) also admits a null descendant $A_{1k} \W_{1k}$ (resp. $\bar A_{1k} \W_{1,-k}$), since it has conformal dimensions $(h_{1k},h_{1,-k})$ [resp. $(h_{1,-k},h_{1,k})$]. Acting with $\vphi$ in the scaling limit, we get
\begin{equation}
  \vphi(A_{1k} \ket{\W_{1k}}) = \vphi(\bar A_{1k} \ket{\W_{1,-k}}) = A_{1k} \bar A_{1k} \ket{\V_{1k}} \,,
\end{equation}
and thus $A_{1k} \ket{\W_{1k}} = \bar A_{1k} \ket{\W_{1,-k}}$, since $\vphi$ is injective. Indeed, in the lattice model, we find a unique state in $W_1$ with energy $\sim \frac{2\pi}{L} \times 2h_{1,-k}$ and zero momentum, degenerate with a unique state in $W_0$. In the scaling limit, the scalar product between the corresponding states is $\Aver{A_{1k}\bar A_{1k} \V_{1k}, A_{1k} \W_{1k}} = 0$, because $A_{1k}^\dag A_{1k}^{\phantom\dag} \Ket{\W_{1k}} = 0$ by construction. Hence, the states $A_{1k} \Ket{\W_{1k}}=\bar A_{1k} \ket{\W_{1,-k}}$ and $A_{1k} \bar A_{1k} \Ket{\V_{1k}}$ are totally decoupled, and we may set them to zero in the Hilbert space of the CFT:
\begin{equation}
  A_{1k} \Ket{\W_{1k}} = \bar A_{1k} \Ket{\W_{1,-k}} \equiv 0 \,,
  \qquad \text{and} \qquad
  A_{1k} \bar A_{1k} \Ket{\V_{1k}} \equiv 0 \,.
\end{equation}
Note that, in the case $k=1$, the lattice analogs of $W_{1, \pm 1}$ are two of the ``discrete parafermions''~\cite{IC09,IWWZ} of the TL or $\On$ model, i.e. lattice operators satisfying a discrete version of the Cauchy-Riemann equations $\partial W_{1,1} = 0$ and $\bar\partial W_{1,-1}=0$.

\subsection{OPEs in the presence of Jordan blocks}

\subsubsection{Limiting procedure}

The Jordan block structures discussed in previous sections appear at the point $n=\wt n$, for generic $n$. We can thus follow a treatment of the singular OPEs similar to the $c \to 0$ limit of~\cite{Gurarie93,Cardy01}, except that in our case we have a two-parameter family of CFTs (labelled by $g$ and $e_0$ in the notations of~\secref{model}): we shall fix the coupling constant $g$, and let the background charge $e_0$ tend to $(1-g)$.

More specifically, we have seen that for general parameters $(n,\nt)$, we get for the TL loop model the central charge and dimensions~\eqref{eq:h} of the form:
\begin{equation}
  c(g,e_0)= 1 - \frac{6e_0^2}{g} \,,
  \qquad
  h_{me}(g,e_0) = \frac{1}{4}\left(m\sqrt{g} - \frac{e}{\sqrt g} \right)^2
  - \frac{e_0^2}{4g} \,,
\end{equation}
where $m=0, e \in e_0+\Zb$ for electric operators $\V_{0e}$, and $m \in \{1,2,3 \dots\}, e \in \Zb/m$ for electro-magnetic operators $\W_{me}$, and
\begin{equation}
  n = -2 \cos \pi g \,, \qquad \nt = 2\cos \pi e_0 \,.
\end{equation}
In contrast, the dimensions of the Kac table for central charge $c(g,e_0)$ are of the form
\begin{equation}
  h^{(K)}_{rs}(g,e_0) = \left(r\alpha_- + s \alpha_+ \right)^2
  - \frac{e_0^2}{4g} \,,
\end{equation}
with $(r,s)$ integers, and
\begin{equation}
  \alpha_+ + \alpha_- = -\frac{e_0}{\sqrt{g}} \,,
  \qquad
  \alpha_+ \alpha_- = -1 \,.
\end{equation}

The general idea which we develop in the following is to consider an OPE of two operators, such that one of the terms in the expansion is a null state belonging to a Jordan block of $L_0+\bar L_0$ for $n=\nt$, and analyse the behaviour of OPE coefficients in the limit $\nt \to n$.

In particular, the following dimensions of vertex operators coincide with the first row of the Kac table in this limit:
\begin{equation}
  h_{0,e_0+k} \to h^{(K)}_{1,k+1} \,,
  \qquad
  h_{1,k+1} \to h^{(K)}_{1,k+1} \,.
\end{equation}

\subsubsection{OPE coefficients}

Consider the OPE between two primary fields $\Phi_1$ and $\Phi_2$, with dimensions $(h_1,\hb_1)$ and $(h_2,\hb_2)$:
\begin{equation} \label{eq:OPE}
  \Phi_1(z,\zb) \Phi_2(0) = \sum_p C(\Phi_1, \Phi_2, \Phi_p)\ z^{-h_1-h_2+h_p} \zb^{-\hb_1-\hb_2+\hb_p} \left[\Phi_p(0) + \dots \right] \,,
\end{equation}
where the sum is over all the primary fields $\Phi_p$ appearing in the fusion  $\Phi_1 \times \Phi_2$, and the $\dots$ denote the contributions from the descendants of $\Phi_p$. In the following, we shall need the precise expression of the OPE coefficients of some of these descendants. By the standard methods of CFT, we obtain
\begin{equation}
  C(\Phi_1, \Phi_2, L_{-k_1} \dots L_{-k_m} \Phi_p)
  = C(\Phi_1, \Phi_2, \Phi_p) \times \beta_{12}^{p(k_1,\dots k_m)} \,,
\end{equation}
where the $\beta_{12}^{p(k_1,\dots k_m)}$ satisfy a set of linear recursion identities. At level one we have
\begin{equation} \label{eq:beta1}
  2h_p \beta_{12}^{p(1)} = h_1-h_2+h_p \,.
\end{equation}
At level two, the coefficients of $L_{-1}^2\Phi_p$ and $L_{-2}\Phi_p$ are determined by the system
\begin{equation}
  \begin{aligned}
    &2(2h_p+1) \beta_{12}^{p(1,1)} + 3 \beta_{12}^{p(2)} = (h_1-h_2+h_p+1) \beta_{12}^{p(1)} \,, \\
    &6h_p \beta_{12}^{p(1,1)} + (4h_p+c/2) \beta_{12}^{p(2)} = (2h_1-h_2+h_p) \,,
  \end{aligned}
\end{equation}
whose solution may be written as
\begin{equation}
  \begin{aligned}
    &\beta_{12}^{p(2)} = \frac{P(h_1,h_2,h_p)}{16h_p^2+2(c-5)h_p+c} \,, \\
    &\beta_{12}^{p(1,1)} = \frac{2(2h_1-h_2+h_p)-(8h_p+c) \beta_{12}^{p(2)}}{12h_p} \,,
  \end{aligned}
\end{equation}
where
\begin{equation}
  P(h_1,h_2,h_p) = -3(h_1-h_2+h_p)(h_1-h_2+h_p+1) + 2(2h_p+1)(2h_1-h_2+h_p) \,.
\end{equation}

\subsubsection{Null states in the OPE}

As we have seen in previous sections, at $n=\nt$, the null descendant $\chi_{1k}$ at level $k$ of the primary operator $\V_{1k}$ may survive in the theory, forming a Jordan block in $L_0+\bar L_0$ with its logarithmic partner $\W_{1,-k}$. Let us analyse in detail the case of $\V_{12}$, which is the energy operator in the TL loop model.

Consider two primary fields $\Phi_1$ and $\Phi_2$, such that their fusion contains the terms
\begin{equation}
  \Phi_1 \times \Phi_2 \to \V_{0,e_0+1} + \W_{1,-2} + \dots
\end{equation}
where the $\dots$ denote primary fields with distinct dimensions. In the limit $\nt \to n$, we have
\begin{equation}
  h = h_{0,e_0+1} \to h^{(K)}_{12} \,,
  \qquad
  h' = h_{12} \to h^{(K)}_{12} \,.
\end{equation}
The contributions from level-two descendants of $\V_h=\V_{0,e_0+1}$ in the OPE~\eqref{eq:OPE} can be organised as
\begin{equation} \label{eq:OPE-term1}
  \frac{P(h_1,h_2,h)}{16h^2+2(c-5)h+c}
  \left( L_{-2} - \frac{8h+c}{12h} L_{-1}^2 \right) \V_h
  + \frac{2h_1-h_2+h}{6h} L_{-1}^2 \V_h \,.
\end{equation}
The numerator $P(h_1,h_2,h)$ of the first term is a polynomial which vanishes if $h=h^{(K)}_{12}$ and $(h_1,h_2)$ obey the standard fusion rule with $\Phi_{12}$ (expressed in vertex charges $h_j=\alpha_j^2 - 2\alpha_0\alpha_j$) $\alpha_2 = \alpha_1 \pm \alpha_+/2$, whereas the denominator has roots $h_{12}^{(K)}$ and $h_{21}^{(K)}$.

If the dimensions $h_1$ and $h_2$ obey the standard fusion rule with $\Phi_{12}$ in the limit $\nt \to n$, then all the coefficients in~\eqref{eq:OPE-term1} remain finite. As $\nt \to n$, the first term converges to the null state
\begin{equation}
  \left(L_{-2} - g L_{-1}^2 \right) \V_{12} \equiv \chi_{12} \,.
\end{equation}

On the other hand, more interestingly, we shall now show that the presence of the operator $\W_{1,-2}$ allows the fusion $\Phi_1 \times \Phi_2 \to \V_{12}$, even if $\alpha_2 \neq \alpha_1 \pm \alpha_+/2$. In this case, we define the finite quantity
\begin{equation}
  Q_{12} = \frac{P(h_1,h_2,h)}{16(h - h^{(K)}_{21})} \,,
\end{equation}
and~\eqref{eq:OPE-term1} takes the form
\begin{equation} \label{eq:OPE-term1-lim}
  \frac{Q_{12}}{h - h^{(K)}_{12}}\
  (L_{-2} - g L_{-1}^2) \V_h + \#\ L_{-1}^2 \V_h \,,
\end{equation}
where $\#$ denotes a finite coefficient. Hence, in the limit $\nt\to n$, the null state $\chi_{12}$ appears in the OPE of $\Phi_1 \times \Phi_2$ with a diverging coefficient. This diverging term in the OPE~\eqref{eq:OPE} should be cancelled by a term coming from $\W_{1,-2}$, which tells us that the latter operator should decompose as
\begin{equation} \label{eq:partner}
  \W_{1,-2} = \frac{C(\Phi_1,\Phi_2,\V_h)}{C(\Phi_1,\Phi_2,\W_{1,-2})}
  \times \frac{Q_{12}}{h - h^{(K)}_{12}}\
  \left[ -(L_{-2} - g L_{-1}^2) \V_h + \kappa \ w_{1,-2} \right] \,,
\end{equation}
where $w_{1,-2}$ has a finite limit as $\nt\to n$, and $\kappa$ is a vanishing coefficient. Note that $w_{1,-2}$ is not defined uniquely, since it we can add a term $\propto (L_{-2} - g L_{-1}^2) \V_h$ to it.

We expand the power factors which multiply the term~\eqref{eq:OPE-term1-lim} in the OPE:
\begin{align}
  &z^{-h_1-h_2+h+2} \zb^{-h_1-h_2+h} = \nn \\
  &\qquad z^{-h_1-h_2+h'+2} \zb^{-h_1-h_2+h'} \times \left[
  1 + (h - h') \log(z\zb) + \mathrm{O}(\nt-n) 
  \right] \,.
\end{align}
Then, if we set the normalising factor in~\eqref{eq:partner} to $\kappa=h-h'$, the OPE~\eqref{eq:OPE} takes the well-defined form
\begin{equation}
\boxmath{\begin{aligned}
  \Phi_1(z,\zb) \Phi_2(0) =& z^{-h_1-h_2+h_{12}+2} \zb^{-h_1-h_2+h_{12}}
  C(\Phi_1, \Phi_2, w_{1,-2}) \\
  & \times \left\{ w_{1,-2}(0) + [(L_{-2} - g L_{-1}^2)\V_{0,e_0+1}](0) \log(z\zb)
  \right\} +\dots
\end{aligned}}
\end{equation}
where the $\dots$ denote contributions from other vertex operators and other descendants, and the OPE coefficient is given by
\begin{equation}
  C(\Phi_1, \Phi_2, w_{1,-2}) = C(\Phi_1, \Phi_2, \V_h) \times Q_{12}
  \times \left(\frac{h-h'}{h-h^{(K)}_{12}}\right) \,.
\end{equation}

Finally, combining the definition~\eqref{eq:partner} of $w_{1,-2}$ and the well-known two-point functions for $\nt \neq n$:
\begin{equation}
  \begin{aligned}
    &\aver{\V_h(z,\zb) \V_h(0)} = |z|^{-4h} \,,
    \qquad
    \aver{\W_{1,-2}(z,\zb) \W_{1,-2}(0)} = |z|^{-4h'} \,, \\
    &\aver{\V_h(z,\zb) \W_{1,-2}(0)} = 0 \,,
  \end{aligned}
\end{equation}
we obtain the two-point functions in the limit $\nt \to n$:
\begin{equation}
  \boxmath{\begin{aligned}
    &\aver{\chi_{12}(z,\zb) \chi_{12}(0)} = 0 \,,
    \qquad
    \aver{w_{1,-2}(z,\zb) w_{1,-2}(0)} =
    \frac{\theta - 2\beta \log(z\zb)}{z^{2h_{1,-2}} \ \zb^{2h_{12}}} \,, \\
    &\aver{\chi_{12}(z,\zb) w_{1,-2}(0)} = \frac{\beta}{z^{2h_{1,-2}} \ \zb^{2h_{12}}} \,,
  \end{aligned}}
\end{equation}
where $\theta$ is arbitrary, and the indecomposability parameter $\beta$ is given by
\begin{equation}
  \beta = \lim_{e_0 \to 1-g} \frac{\aver{\V_h|(L_{2}-gL_{1}^2)(L_{-2}-gL_{-1}^2)|\V_h}}{h-h'} \,,
\end{equation}
where $h=h_{0,e_0+1}$ and $h'=h_{12}$. A little algebra gives the final result
\begin{equation}
  \beta = 4(1-g) \,.
\end{equation}

\section{Conclusion}
\label{sec:concl}

In this work, we have shown how to adapt the conformal bootstrap approach to the case of operators with conformal spin, and applied it to compute some OPE coefficients involving some mixed electric-magnetic operators in the {\On} loop model. It turns out that, in all the cases where this approach applies, the result is particularly simple: the OPE coefficient is the geometric mean of the timelike-Liouville DOZZ formula~\cite{DO92,ZZ96} on the holomorphic and anti-holomorphic sectors. Our results are supported by economical, yet precise numerical calculations based on the transfer matrix of the loop model. In doing so, we used a numerical method which can be exploited more extensively to test many predictions on OPE coefficients.

Many OPE coefficients remain out of reach for this conformal bootstrap approach, including some very natural observables from the geometric point of view, e.g. the probability that three points lie on the same loop. We have traced these difficulties to the non-diagonalisability of the dilatation operator $(L_0+\bar L_0)$, and proposed a simplistic analysis of the representation theory of the periodic Temperley-Lieb algebra, along the lines of \cite{Dubail10,Vasseur12,Azat12,Azat13,Azat14}, to account for this effect at the lattice level. Similarly to the representation theory of quantum groups, this problem is much simpler for generic values of the loop fugacity $n=q+q^{-1}$ when $q$ is not a root of unity, which is the case under study, but many questions still remain unsolved, the most important one being: do the null vectors descending from electric operators $\ket{\V_{1k}}$ become decoupled or not in the scaling limit? Answering this type of questions would require a substantially deeper analysis of the loop model for generic loop fugacity, which we leave for future work.

\section*{Acknowledgements} The authors wish to thank Gesualdo Delfino, Jacopo Viti, Raoul Santachiara, Marco Picco, Jesper Jacobsen, Hubert Saleur and Vladimir Dotsenko for fruitful discussions.

\section*{Appendix}
\appendix

\section{Details of the Coulomb-gas approach}
\label{sec:app-boot}

The normalisation factor appearing in~\eqref{eq:Fk-lim} is:
\begin{align}
  N_k(\alpha_1, \alpha_2, \alpha_3, \alpha_4) =& \prod_{j=0}^{k-2} \frac{\Gamma[(j+1)\rho]}{\Gamma(\rho)}
  \ \frac{\Gamma(1+a+j\rho)\Gamma(1+c+j\rho)}{\Gamma[2+a+c+(k-2+j)\rho]} \nn \\
  & \times \prod_{j=0}^{p-k-1} \frac{\Gamma[(j+1)\rho]}{\Gamma(\rho)}
  \ \frac{\Gamma(1+b+j\rho)\Gamma(1+d+j\rho)}{\Gamma[2+b+d+(p-k-1+j)\rho]} \,.
  \label{eq:Nk}
\end{align}
The matrix elements on the last row and last column of $A$ [see~\eqref{eq:change}] are:
\begin{align}
  A_{pk} =& \prod_{j=0}^{k-2} \frac{s(1+b+j\rho)}{s[2+b+c+(k-2+j)\rho]}
  \prod_{j=0}^{p-k-1} \frac{s(1+d+j\rho)}{s[2+a+d+(p-k-1+j)\rho]} \,, 
  \label{eq:Apk} \\
  A_{kp} =& \prod_{j=0}^{p-2} s[(j+1)\rho]
  \prod_{j=0}^{k-2} \frac{s[1+b+(p-k+j)\rho]}{s[(j+1)\rho]\ s[b+c+(p-2+j)\rho]} \nn \\
  &\times \prod_{j=0}^{p-k-1} \frac{s[1+c+(k-1+j)\rho]}
  {s[(j+1)\rho]\ s[b+c+(p-3+k+j)\rho]} \,,
  \label{eq:Akp}
\end{align}
where $s(x)=\sin(\pi x)$.
The general matrix elements can be found in~\cite{FK89}. It is also useful to notice that
\begin{equation}\label{eq:Akp2}
  A_{kp} \propto \left\{
    \prod_{j=0}^{k-2} s[(j+1)\rho] s(1+c+j\rho)
    \prod_{j=0}^{p-k-1} s[(j+1)\rho] s(1+b+j\rho)
  \right\}^{-1} \,.
\end{equation}

We shall give here the explicit calculation leading from the expansion coefficients~\eqref{eq:Sk2} to the expression~\eqref{eq:C1} of the OPE constants. First, let us notice that the case $k=1$ corresponds to $C^2(\V_\alpha,\V_\alpha,\V_{11})=1$, since $\V_{11}=\id$. If we set $\alpha=\alpha_{1p}$, and denote $a_p=2\alpha_+\alpha_{1p}$ and $a'_p=2\alpha_+(2\alpha_0-\alpha_{1p})$, we get
\begin{align}
  C^2(\V_{1,2k-1},\V_{1p},\V_{1p}) &= \frac{S_k^{(p)}(a_p,a'_p)}{S_1^{(p)}(a_p,a'_p)} \nn \\
  &= \prod_{j=0}^{p-2} \frac{\gamma[(j+1)\rho]}{\gamma[-1+(j+2)\rho]}
  \times \prod_{j=0}^{k-2} \frac{\gamma[(j+1)\rho]\ \gamma[-1+(p+1+j)\rho]}{\gamma[(j+k)\rho]\ \gamma[(p-1-j)\rho]} \nn \\
  & \times\prod_{j=0}^{p-k-1} \frac{\gamma[(j+1)\rho]\ \gamma[-1+(p+k-1-j)\rho]}{\gamma^2[(p-1-j)\rho]} \,.
  \label{eq:Cppk}
\end{align}
Going back to general $\alpha$, we have
\begin{equation}
  C(\V_\alpha, \V_\alpha, \V_{1,2k-1}) C(\V_{1,2k-1},\V_{1p},\V_{1p}) = \frac{S_k^{(p)}(a,a')}{S_1^{(p)}(a,a')} \,,
\end{equation}
which gives, combined with~\eqref{eq:Cppk}, the result~\eqref{eq:C1}.

\begin{figure}
  \begin{center}
    \begin{tabular}{ccc}
      \scalebox{1}{\input{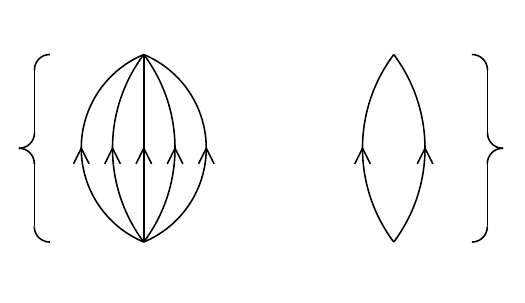_t}}
      &&
      \scalebox{1}{\input{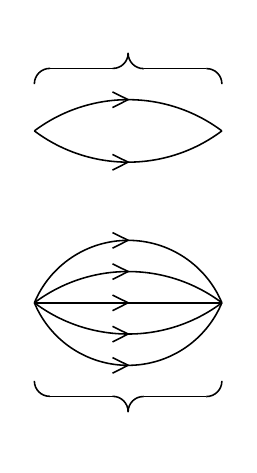_t}} \\
      \\
      (a) &\qquad\qquad\qquad\qquad& (b)
    \end{tabular}
  \end{center}
  \caption{Integration contour defining: (a) the conformal block $\F_k(z)$; (b) the conformal block $\Ft_\ell(z)$.}
  \label{fig:Ck}
\end{figure}

\section{Hypergeometric conformal blocks}
\label{sec:app-hyper}

In this section, we recall the different bases of solutions of the PDE 
\begin{align}
  \left[ g\partial_z^2 + \left( \frac{1}{z} + \frac{1}{z-1}\right)\partial_z + \left(\frac{h_1}{z} - \frac{h_3}{z-1} + h_2- h_4 \right) \frac{1}{z(z-1)} \right] f(z) = 0 \,. \label{holomorphic PDE}
\end{align}
Like in~\secref{funct}, we use a Coulomb gas parametrisation of the conformal dimensions $h_i = \alpha_i (\alpha_i - 2\alpha_0)$, and we introduce
\begin{equation}
  \begin{aligned}
    a = 2\alpha_+ \alpha_1  \,, \qquad  & b = 2\alpha_+ \alpha_3 \,,  \qquad d = 2\alpha_+ \alpha_4 \,, \\
    a'= 2\alpha_+ \alpha'_1 \,, \qquad  & b'= 2\alpha_+ \alpha'_3 \,, \qquad d'= 2\alpha_+ \alpha'_4  \,.
  \end{aligned}
\end{equation}

\subsection{Generic case :  $a-a'$ and $b-b' \notin 2\Zb$}

Several bases of conformal blocks can be considered. In~\secref{funct}, we refer to two such bases. The first one $\{I_1(z), I_2(z)\}$ has simple monodromy as the variable $z$ loops around $0$ and corresponds to the expansion around the singularity $z=0$ : 
\begin{align}
  I_1(z) & =  z^{-\frac{a}{2}}(1-z)^{-\frac{b}{2}}\ {}_2F_{1}\left(
    {\textstyle\frac{d-a-b+\rho}{2}, \frac{d'-a-b+\rho}{2}; 1 + \frac{a'-a}{2}; z}
  \right) \,, \label{I} \\ 
  I_2(z) & = z^{-\frac{a'}{2}}(1-z)^{-\frac{b'}{2}}\ {}_2F_{1}\left(
    {\textstyle\frac{d'-a'-b'+\rho}{2}, \frac{d-a'-b'+\rho}{2}; 1 + \frac{a - a'}{2}; z}
  \right) \,, \label{I'}
\end{align}
while the second basis $\{J_1(z), J_2(z)\}$ has simple monodromy around $z=1$:
\begin{align}
  J_1(z) &=  z^{-\frac{a}{2}}(1-z)^{-\frac{b}{2}}\ {}_2F_{1}\left(
    {\textstyle\frac{d-a-b+\rho}{2}, \frac{d'-a-b+\rho}{2}; 1 + \frac{b'-b}{2}; 1-z}
  \right) \,,  \label{J} \\ 
  J_2(z)  &= z^{-\frac{a'}{2}}(1-z)^{-\frac{b'}{2}}\ {}_2F_{1}\left(
    {\textstyle\frac{d'-a'-b'+\rho}{2}, \frac{d-a'-b'+\rho}{2}; 1 + \frac{b-b'}{2}; 1-z}
  \right) \,.  \label{J'} 
\end{align}
It is useful to know how these conformal blocks transform under the various re\-pa\-ra\-me\-teri\-sations $\alpha_i \leftrightarrow \alpha_i'$:
\begin{itemize}
\item Under the exchange $a \leftrightarrow a'$, we have $I_1(z) \leftrightarrow I_2(z)$, while $J_1(z)$ and $J_2(z)$ are left invariant\footnotemark\footnotetext{To see this, the relation ${}_2F_1 (a,b;c;z) =  (1-z)^{c-a-b}{}_2F_1 (c-a, c-b;c ; z)$ comes in handy.}.  
\item Likewise, under $b \leftrightarrow b'$, we have $J_1(z) \leftrightarrow J_2(z)$, $I_1$ and $I_2(z)$ being left invariant.  
\item Finally, under $d \leftrightarrow d'$, the conformal blocks $\{I_1(z), I_2(z) \}$ and $\{J_1(z), J_2(z)\}$ are invariant. 
\end{itemize}
The change of basis is given by the matrix $M(a,b,d)$: 
\begin{equation}
  I_k(z) = \sum_{\ell=1}^2 M_{k\ell}(a,b,d) \ J_\ell(z) \,,
\end{equation}
where
\begin{align}
  M(a,b,d) = \left( \begin{array}{cc}
      \frac{\Gamma\left(1 + \frac{a' - a}{2}\right) \Gamma\left(\frac{b - b'}{2}\right)}{\Gamma\left( \frac{d-a-b'+\rho}{2} \right) \Gamma\left( \frac{d'-a-b'+\rho}{2} \right)} & 
      \frac{\Gamma\left(1 + \frac{a' - a}{2}\right) \Gamma\left(\frac{b' - b}{2}\right)}{\Gamma\left( \frac{d-a-b+\rho}{2} \right) \Gamma\left( \frac{d'-a-b+\rho}{2} \right)}     \\  
      \frac{\Gamma\left(1 + \frac{a - a'}{2}\right) \Gamma\left(\frac{b - b'}{2}\right)}{\Gamma\left( \frac{d-a'-b'+\rho}{2} \right) \Gamma\left( \frac{d'-a'-b'+\rho}{2} \right)}   &
      \frac{\Gamma\left(1 + \frac{a - a'}{2}\right) \Gamma\left(\frac{b' - b}{2}\right)}{\Gamma\left( \frac{d-a'-b+\rho}{2} \right) \Gamma\left( \frac{d'-a'-b+\rho}{2} \right)} \end{array} \right) \,. \label{change of basis M}
\end{align}
As long as all four conformal blocks are well defined (i.e. $a'-a \notin 2\Zb$ and $b'-b \notin 2\Zb$), this matrix is invertible, with inverse $M^{-1}(a,b,d) = M(b,a,d)$. 

\subsection{Logarithmic case}

Suppose now the indicial equation around the singularity $z=0$ is degenerate, namely $a-a' \in 2\Zb$. Without any loss of generality we can assume $a'-a =2 m $ where $m$ is a positive integer.  Since we have $a'+ a = 2\rho -2$, this means
\begin{align}
  a = \rho -1 - m, \qquad a' = \rho -1 + m \,.
\end{align}
In that case, the conformal block given by \eqref{I} is still a solution,
\begin{align}
  I_1(z) = z^{\frac{1+m-\rho}{2}}(1-z)^{-\frac{b}{2}}{}_2F_{1}\left(
    {\textstyle \frac{d-b+m+1}{2}, \frac{d'-b+m+1}{2}; 1 + m; z}
  \right) \,.
\end{align}
On the other hand, $\eqref{I'}$ is no longer well-defined, since its third argument is a non-positive integer. We can approach the degenerate case as
\begin{align}
  a = \rho -1 - m - \epsilon, \qquad a' = \rho -1 + m + \epsilon \,,
  \label{regularized a a'}
\end{align} 
with small $\epsilon$. Let us denote $\{ \Ie(z) , \Ipe(z) \}$ the corresponding solutions. The first conformal block encounters no issues as $\epsilon \to 0$ : 
\begin{align}
  \lim_{\epsilon \to 0} \Ie(z)& = I_1(z) \,.
\end{align}
On the other hand, the second conformal block, as given by $\eqref{I'}$, has a first order pole 
\begin{align}
  \Ipe(z)  & \sim \frac{\Gamma(1-m-\epsilon)}{m!}\left(\frac{d-b-m+1}{2}\right)_m\left(\frac{d'-b-m+1}{2}\right)_m \, I_1(z) \,,
\end{align}
where $(x)_n = x(x+1) \dots (x+n-1)$ is the Pochhammer symbol.
For instance the second conformal block can be chosen as the limit 
\begin{align}
  \wt I_2(z) =  \frac{(-1)^{m} m!}{2} \lim_{\epsilon \to 0} \Bigg[
  & \frac{\Gamma \left(\frac{d-a'-b+\rho}{2}\right)\Gamma \left(\frac{d'-a'-b+\rho}{2}\right)}{ \Gamma \left( 1 + \frac{b'-b}{2}\right) } \Ie(z) \nn \\
  & + \frac{\Gamma \left(\frac{d-a'-b'+\rho}{2}\right)\Gamma \left(\frac{d'-a'-b'+\rho}{2}\right)}{ \Gamma \left( 1 + \frac{b-b'}{2}\right) } \Ipe(z)
  \Bigg] \,,
\end{align}
with $a$ and $a'$ given by \eqref{regularized a a'}. Note that  this expression still makes sense  for $b'-b \in 2\Zb$, since the r.h.s. has a well-defined limit as $b'-b  \to 2n$ , $n \in \Zb$. Moreover we have chosen the second conformal block to be invariant under $b \leftrightarrow b'$. This gives a solution of the form
\begin{align}
  \wt I_2(z) =  I_1(z)\,\log z + H(z)\,z^{(1-m-\rho)/2}  \,,
\end{align}
where $H(z)$ is a regular function around $z=0$, whose explicit expression we will not need. The monodromy around $z=0$ is no longer diagonal:
\begin{equation}
  I_1(z) \to e^{-i\pi a}\ I_1(z) \,,
  \qquad \wt I_2(z) \to e^{-i\pi a} \left[ 2i\pi I_1(z) + \wt I_2(z) \right] \,.
\end{equation}
If $b'-b \notin 2\Zb$, the basis $\{J_1(z), J_2(z) \}$ is the same as for the generic case \eqref{J}-\eqref{J'}, and the change of basis becomes
\begin{align*}
  \wt{M}(a,b,d) =  m ! \left( \begin{array}{cc}
      \frac{ \Gamma\left(\frac{b - b'}{2}\right)}{\Gamma\left( \frac{d-b'+1+m}{2} \right) \Gamma\left( \frac{d'-b'+1+m}{2} \right)} & 
      \frac{ \Gamma\left(\frac{b' - b}{2}\right)}{\Gamma\left( \frac{d-b+1+m}{2} \right) \Gamma\left( \frac{d'-b+1+m}{2} \right)}     \\  
      \frac{(-1)^{m} \Gamma\left( \frac{d-b+1-m}{2} \right) \Gamma\left( \frac{d'-b+1-m}{2} \right)}{ 2 \Gamma\left(1+ \frac{b' - b}{2}\right)}   &
      \frac{(-1)^{m}  \Gamma\left( \frac{d-b'+1-m}{2} \right) \Gamma\left( \frac{d'-b'+1-m}{2} \right)}{ 2 \Gamma\left(1+ \frac{b - b'}{2}\right)}
    \end{array} \right) \,,
\end{align*}
\begin{align*}
\wt{M}^{-1}(a,b,d) = \frac{1}{m!}  \left( \begin{array}{cc}
 \frac{\Gamma\left( \frac{d-b'+1+m}{2} \right) \Gamma\left( \frac{d'-b'+1+m}{2} \right)}{2 \Gamma\left(\frac{b - b'}{2}\right)}  & 
(-1)^{m} \frac{  \Gamma\left(1+ \frac{b' - b}{2}\right)} {\Gamma\left( \frac{d-b+1-m}{2} \right) \Gamma\left( \frac{d'-b+1-m}{2} \right)}    \\  
  \frac{\Gamma\left( \frac{d-b+1+m}{2} \right) \Gamma\left( \frac{d'-b+1+m}{2} \right)}{2 \Gamma\left(\frac{b' - b}{2}\right)}      &
(-1)^{m}  \frac{  \Gamma\left(1+ \frac{b - b'}{2}\right)} {\Gamma\left( \frac{d-b'+1-m}{2} \right) \Gamma\left( \frac{d'-b'+1-m}{2} \right)}  \end{array} \right) \,.
\end{align*}

\section{Constraints on the spectrum from locality}
\label{sec:app-locality}

In this appendix we explain how locality imposes the constraint :
\begin{align}
  a-\bar{a} \in \Zb \qquad \text{or} \qquad a'-\bar{a} \in \Zb \,.
\end{align}
We first prove this in the generic case ($a'-a \notin 2\Zb$) before doing the $\log$ case  ($a'-a \in 2\Zb$).

\subsection{Generic case}

Let us consider the four-point function
\begin{align}
  G(z,\bar{z}) = \langle \Phi_{h_1,\hb_1}(0)   \Phi_{1,2}(z) \Phi_{1,2}(1)\Phi_{h_1,\hb_1}(\infty) \rangle \,.
\end{align}
Note that this four-point function cannot vanish due to the trivial fusion channel to the vacuum. To describe this four-point function we can choose $b = \bb = 3\rho-2$ and $d = a$. If there are two fusion channels, then we must have $a - \bar{a} \in \Zb$ or $a' - \bar{a} \in \Zb$ as we explained in the main text. Suppose now there is a single fusion channel around $z=0$. Using the freedom to change $a \to a'$, we can assume 
\begin{equation}
  G(z,\bar{z}) =  X_1 I_1(z) \bar I_1(\zb) \,.
\end{equation}
Since there is a single fusion channel as $z\to0$, there is also a single fusion channel as $z\to 1$\footnotemark \footnotetext{From the matrix $X$ we get the expression of the matrix $Y = M^t X \bar{M}$. These two matrices share the same rank, and the number of fusion channels around  $z=0$ and $z=1$ has to be the same.}. There we know that this fusion has to be $ \Phi_{1,2} \times \Phi_{1,2} \to 1$, which corresponds to 
\begin{equation}
  G(z,\bar{z}) =  J_1(z) \bar J_1(\zb) \,.
\end{equation}
Consistency of these two relations enforces $M_{12} = \bar{M}_{12} =0$, where $M$ is given by \eqref{change of basis M} with $b = 3\rho-2$ and $d = a$. This requires respectively $a \in \Zb$ and $\ab \in \Zb$, and in particular $a - \ab \in \Zb$ is obeyed. 

\subsection{Logarithmic case}

Let us now consider the logarithmic case, namely $a'-a = 2m$ with  $m \in \Nb$. This means $a = \rho -1 - m$ and $a' = \rho -1 + m$. If we also have $\ab'-\ab = 2\bar{m}$ with  $\bar{m} \in \Nb$, then $a-\bar{a} = \bar{m}-m \in \Zb$. Out of interest we mention the form of the four-point function compatible with locality:
\begin{align}
G(z,\bar{z}) = X_1 \, I_1(z) \bar I_1(\zb) + X_2 \, \left[I_1(z) \overline{\wt I_2}(\zb)+ \wt I_2(z) \bar I_1(\zb)\right] \,,
\end{align}
with monodromy factor $e^{i\pi(\bar{a} -a)}$ as in the generic case. 

This only leaves the hybrid case to consider : a logarithmic holomorphic side, say $a'-a = 2m$ with  $m \in \Nb$,  coupled to a generic antiholomorphic behavior, $\ab'-\ab \notin 2\Zb$. In this case locality demands a single fusion channel, which we can choose to be 
\begin{align}
G(z,\bar{z}) = X_1 \,  I_1(z) \bar I_1 (\zb) \,.
\end{align}
The same arguments as in the generic case apply, and we find that $a$ and $\ab$ have to be integers. This is impossible since $a = \rho -1 - m$. Therefore the hybrid case is ruled out by locality. 

\section{Numerical determination of OPE coefficients from the transfer matrix}
\label{sec:app-num}

In this section, we describe a simple algorithm to compute OPE coefficients numerically using the eigenstates of the transfer matrix or Hamiltonian. Only the principle of the method is given here, and we will expose all the implementation details for its application to loop models in a future publication.

Let $(\phi_1, \phi_2, \phi_3)$ be three quasi-primary operators of a CFT, and suppose we want to determine numerically the OPE coefficient $C(\phi_1, \phi_2, \phi_3)$. This coefficient can be expressed as the scalar product
\begin{equation}
  C(\phi_1, \phi_2, \phi_3) = \aver{\phi_1|\phi_2|\phi_3} \,.
\end{equation}
Let us consider the transfer matrix $t_L$ for a periodic system of circumference $L$, denote by $\ket{\phi_j}_L$ the eigenstate of $t_L$ corresponding to the state $\ket{\phi_j}$ in the CFT, and suppose that $\ket{\phi_1}_L$ and $\ket{\phi_3}_L$ are obtained numerically by a partial diagonalisation of $t_L$. Moreover, suppose that one is able to construct a lattice operator $\phi_2^{(L)}(\vec r)$ which behaves as $\phi_2^{(L)} \sim N_2 L^{-x_2} \ \phi_2$ in the scaling limit, where $N_2$ is some non-universal normalisation constant. More precisely, there will exist a series of operators $\phi_2, \phi'_2 \dots$ of increasing scaling dimensions $x_2<x'_2<\dots$ and some normalisation factors $N_2, N'_2, \dots$ so that 
\begin{equation} \label{eq:phi2}
  \phi_2^{(L)} = N_2 L^{-x_2} \ \phi_2 + N'_2 L^{-x_2'} \ \phi'_2 + \dots
\end{equation}
in the scaling limit. Clearly, with the standard normalisation of CFT states $\aver{\phi_j|\phi_j}=1$, and denoting $\ket 0$ the ground state, one has $\phi_2^{(L)}\ket{0}_L \sim N_2 L^{-x_2} \ket{\phi_2}$. Then one obtains:
\begin{equation} \label{eq:C-num}
  \frac{_L\aver{\phi_1| \phi_2^{(L)}| \phi_3}_L}{_L\aver{\phi_2| \phi_2^{(L)}| 0}_L} = C(\phi_1, \phi_2, \phi_3)
  + O(1/L^{x''_2-x_2}) \,,
\end{equation}
where $x''_2$ is the scaling dimension of the most relevant subleading operator in the expansion~\eqref{eq:phi2} contributing to $_L\aver{\phi_1| \phi_2^{(L)}| \phi_3}_L$ or $_L\aver{\phi_2| \phi_2^{(L)}| 0}_L$.

\bibliographystyle{unsrt}
\bibliography{biblio}

\end{document}